\def\phi{\varphi}
\newcommand{\hm}[1]{{\color{magenta}#1}}
\newcommand{\hy}[1]{{\color{yellow}#1}}
\newcommand{\hr}[1]{{\color{red}#1}}
\newcommand{\hb}[1]{{\color{blue}#1}}
\newcommand{\hc}[1]{{\color{cyan}#1}}
\newcommand{\hg}[1]{{\color{green}#1}}
\newcommand{\ta}{\tilde a}
\newcommand{\tb}{\tilde b}
\newcommand{\tc}{\tilde c}
\newcommand{\td}{\tilde d}
\newcommand{\dek}[1]{{\color{violet} #1}}
\newcommand{\ver}[1]{{\color{blue} #1}}
\renewcommand{\dek}[1]{}
\renewcommand{\ver}[1]{}
\newtheorem{lemma}{Lemma}
\newtheorem{theorem}{Theorem}
\theoremstyle{remark}
\newtheorem{remark}{Remark}
\newtheorem{definition}{Definition}
\newcommand{\EQ}{\mathrm{EQ}}
\newcommand{\DISJ}{\mathrm{DISJ}}
\newcommand{\IP}{\mathrm{IP}}
\renewcommand{\le}{\leqslant}\renewcommand{\ge}{\geqslant}
\newcommand{\rr}{\mathtt{r}}
\title{Half-duplex communication complexity with adversary
  can be less than the classical communication complexity}
\author{
Mikhail Dektiarev\footnote{Moscow State University}
\and
Nikolay Vereshchagin\footnotemark[1] \footnote{National Research University Higher School of Economics, Moscow, Russian Federation}
\footnote{This paper was prepared within the framework of the HSE University Basic Research Program. 
%The results presented in Sections 4 and 5 have been supported
% by the Interdisciplinary Scientific and Educational School of Moscow University ``Brain, Cognitive Systems, Artificial Intelligence''.
}
}
\date{}
\begin{document}
\maketitle
\begin{abstract}
Half-duplex communication complexity with adversary 
was defined in [Hoover, K., Impagliazzo, R., Mihajlin, I., Smal, A. V. Half-Duplex
Communication Complexity, ISAAC 2018.]
Half-duplex communication protocols generalize classical protocols
defined by Andrew Yao in [Yao, A. C.-C. Some Complexity Questions Related to Distributive Computing
  (Preliminary Report),  STOC 1979]. It has been unknown so far
whether the communication complexities defined by these models
are different or not. In the present paper we answer
this question: we exhibit a function  whose  
half-duplex communication complexity with adversary
is strictly less than its classical communication complexity.
\end{abstract}

\section{Introduction}
In the classical model of communication complexity introduced
by Andrew Yao in~\cite{yao}, we consider a cooperative game between
two players, Alice and Bob, who want to compute
$f(x,y)$ for a given function $f$.
Alice knows only $x$ and  Bob knows only $y$.
To this end, Alice
and Bob can communicate sending to each other
messages, one bit per round.
An important property of this
communication model is the following:
in each round one player sends a bit message while the other player
receives it. Algorithms for Alice and Bob computing $f$
are called \emph{communication protocols}. The \emph{depth} of a protocol is defined as 
the number of bits it communicates in the worst case.
The minimum possible depth of a protocol to compute $f$
is called the \emph{communication complexity} of $f$.

This model was generalized in~\cite{hims} to a model describing
communication over the so called half-duplex channel.
A well-known example of half-duplex
communication is talking via walkie-talkie:
one has to hold a ``push-to-talk'' button to
speak to another person, and one has to release it to listen.
We consider a communication model where
players are allowed to speak simultaneously.
If two persons try to speak
simultaneously then they do not hear each other  and 
both messages are lost.

The communication protocol over a half-duplex channel is also
divided into rounds. To make such a division, we assume that
the players have synchronized clocks.
Every round each player chooses one of three
actions: send 0, send 1, or receive.
Thus we distinguish three types of rounds.
\begin{itemize}
\item If one players sends a bit
  and the other one receives,
  then we call this round \emph{normal}
  or \emph{classical}, in such a round the receiving player
  receives the sent bit.
\item If both players send bits, then the round is called \emph{spent}.
  Both bits are lost (such a situation occurs
  if both parties push ``push-to-talk'' buttons
  simultaneously).
\item If both players receive, then the round is called \emph{silent}.
  In such rounds players receive arbitrary
  bits that may be different. (We can think that those bits are chosen by a malicious adversary.)
\end{itemize}	
More specifically, the described model is called 
the communication model with  \emph{adversary}.
In~\cite{hims} there were considered also two other models: the half-duplex model
\emph{with silence}
(in a silent round both players receive a special symbol \texttt{silence}
and hence know that a silent round occurred) and  the half-duplex model
\emph{with zero} (in a silent round both players receive 0).
%In the present paper these two models are not considered.
%Thus in the sequel the half-duplex complexity with adversary
%s just called the \emph{half-duplex complexity}.

More specifically, the \emph{half-duplex complexity with adversary} of a function is defined as minimum
number of rounds needed to compute the function on its domain, assuming that the  adversary can choose
any bits in silent rounds. On can show that the half-duplex
communication complexity with adversary is sandwiched between the classical complexity and a half of it~\cite{hims}.

The original motivation to study these kinds of communication models arose from the
question of the complexity of Karchmer-Wigderson games~\cite{kw}
for multiplexors.
A detailed exposition of this motivation and recent results in this direction can be
found in~\cite{meir2020,meir2023,ignat,wu}. Here we will just
present a toy example in which
half-duplex complexity  arises quite naturally.
Assume that
for each parameter $a$ (ranging over a finite set)
a function $f_a:X\times Y\to \{0,1\}$ is given. 
Our goal is to prove that for some $a$
the classical communication complexity of $f_a$ 
is larger that a certain number $d$.
Assume also that we can prove the same lower bound
$d$ for the classical communication complexity of the \emph{multiplexor game}
defined as follows: Alice gets a pair $(x,a)$, Bob gets a pair
$(y,b)$ and they want to compute 
$f_a(x,y)$, if $a=b$. Otherwise, if $a\ne b$,
then their protocol may return any result.
It may seem that from this we can deduce
the sought lower bound for $f_a$ for some $a$. Indeed, by way of contradiction,
assume that for every $a$
there is a classical depth-$d$ communication protocol $\Pi_a$ that computes $f_a$.
Then consider the following depth-$d$ communication protocol for the multiplexor game: 
if  Alice gets a pair $(x,a)$ and Bob a pair $(y,b)$, then
Alice finds the lex first depth-$d$ protocol  for $f_a$
and  Bob the  lex first depth-$d$ protocol for
$f_b$. Then they run the found protocols.
If $a=b$, then they run the same protocol,
which outputs $f_a(x,y)$. Otherwise, if $a\ne b$, their protocols
can output different results. This problem is easy to overcome:
Alice sends her output to Bob, which costs only one extra bit of
communication. However, there is more complicated problem:  protocols  $\Pi_a$ and  $\Pi_b$ may simultaneously receive or send bits.
Thus actually the constructed protocol is a half-duplex and not classical communication protocol for multiplexor game.
Thus we need a  lower bound $d$ for half-duplex communication complexity of
the multiplexor game!

There are several examples of functions whose  
half-duplex complexity with silence and with zero is less than the
classical communication complexity. This happens for 
the following three functions, which are widely studied in Communication complexity:
Equality function $\EQ_n$, Disjointness function $\DISJ_n$ and Inner product $\IP_n$. These functions 
are defined on bit strings of length $n$ as follows:
$$
\EQ_n(x,y)=\begin{cases}
1& \text{if }x=y,\\
0& \text{otherwise},
\end{cases}\quad
\DISJ_n(x,y)=\bigwedge_{i=1}^n \lnot(x_i\land y_i),
\quad
\IP_n(x,y)=\sum_{i=1}^n( x_i\land y_i)\pmod2.
$$
The classical communication complexity of all these functions is $n+1$. 
The best known upper bounds for their   half-duplex complexity are shown in the following table:
\begin{center}
\begin{tabular}{|l|c|c|c|}
\hline
& $\EQ_n$&  $\IP_n$&  $\DISJ_n$\\ \hline
Half-duplex complexity with silence & $n\log_52+o(n)$& $n/2+2$ \cite{glad}& $n/2+2$\\ \hline
Half-duplex complexity with 0 &$n\log_32+o(n)$& $7n/8+O(1)$ \cite{kogan}& $3n/4+o(n)$ \cite{dem2021}\\
\hline
Half-duplex complexity with adversary &$n+1$& $n+1$& $n+1$\\
\hline
\end{tabular}
\end{center}
The best known lower bounds are the following:
\begin{center}
\begin{tabular}{|l|c|c|c|}
\hline
& $\EQ_n$&  $\IP_n$&  $\DISJ_n$\\ \hline
Half-duplex complexity with silence & $n\log_52$& $n/2$& $n\log_52$ \cite{dem2021}\\\hline
Half-duplex complexity with 0 &$n\log_32$& $n\log_{\frac2{3-\sqrt5}}2$& $n\log_32$ \cite{dem2021}\\
\hline
Half-duplex complexity with adversary &$n\log_{2.5}2$& $n\log_{7/3}2$& $n\log_{2.5}2$ \cite{dem2021}\\
\hline
\end{tabular}
\end{center}
Some cells of these tables contain references showing who established the corresponding bounds.  The bounds with no reference are due to~\cite{hims}.

However no examples of functions for which 
half-duplex complexity with adversary is less than the classical complexity
were known so far. Moreover, in~\cite[page 67]{smal} it was conjectured that
these two complexities coincide.
In this paper we exhibit a function $f$ with a constant gap between
these complexities (the half-duplex complexity with adversary is 5, and the classical complexity is 6). Also
we exhibit a family of partial function $g_n$ with a logarithmic gap between these
complexities. More specifically,
the function $g_n$ is defined on
length-$n$ strings over a fixed alphabet, its half-duplex complexity with adversary
is at most $n$ while the classical complexity is at least $n+\log_2n$. 

To prove our lower bounds,  that is, the bound 6 for the classical complexity of $f$
and the bound $n+\log_2n$ for the  
classical complexity of $g_n$,
we use some novel techniques.
The first bound cannot be proven via the common techniques, which
is based on partitions of the matrix of $f$ into monochromatic rectangles.
A matrix of a function $f:X\times Y\to Z$ is a matrix 
with $|X|$ rows numbered by elements of $X$ and $|Y|$ columns numbered by elements of $Y$.
Its element in $x$th row and $y$th column  is equal to $f(x,y)$. A  monochromatic rectangle in that matrix
is any  subset of $X\times Y$ that has the form $A\times B$ and on which $f$ is constant.
One can show that any communication protocol of depth $d$ computing $f$ provides
a partition of that matrix into at most $2^d$ monochromatic rectangles.
Hence the classical complexity of $f$ can be lower bounded by 
the binary logarithm of minimal possible number of rectangles in a partition of
matrix  of $f$ into monochromatic rectangles.
Usually, the lower bounds for the size of such partitions
% into  monochromatic rectangles 
are obtained by ``fooling sets''.
A set $F\subset X\times Y$ is called \emph{fooling} for $f$ if every monochromatic rectangle can
cover at most  one pair in $F$. Obviously, in this case the matrix of $f$
cannot be partitioned in less than $|F|$ monochromatic rectangles. So to prove that
classical complexity of $f$ is larger than $d$, it is enough to find a fooling set for $f$ of size larger than $2^d$.

However, the matrix of our function $f$, whose classical complexity is at least 6, 
\emph{can} be partitioned into $2^5$ monochromatic rectangles, thus this method 
fails to show that the classical complexity of $f$ is at least 6. To prove the desired lower bound,
we will use the following feature of partitions that are derived from communication protocols:
every such partition can by obtained by a sequence of horizontal and vertical divisions. More precisely,
we start with trivial partition $P=\{X\times Y\}$ and on each step we partition any rectangle from $P$  
either horizontally (a rectangle $A\times B$ is replaced by two rectangles $A_0\times B$
and $A_1\times B$), or vertically (a rectangle $A\times B$ is partitioned into $A\times B_0$
and $A\times B_1$).  If initial protocol has depth $d$ then
any rectangle in the resulting partition is obtained from the initial rectangle by at most $d$ such divisions.

We show that for every horizontal partition
of the matrix of $f$ into two sub-matrices 
$V,W$ 
%(some rows belong to $V$ and the remaining rows belong to $W$)
either $U$ or $V$ has a fooling set of size $17>2^4=16$. 
% monochromatic rectangles (to lower bound sizes of partitions of $V,W$,  we do not need 
% the above feature).  
%one of $V,W$ has a fooling set of size strictly larger than $2^4$.
This is done using the following method:
in the matrix of $f$ we distinguish the so called ``fooling rectangles''.
They are pairwise disjoint and have the following feature:
if we pick from every fooling rectangle any cell, then
the resulting set of cells is a fooling set in the matrix.
The number of fooling rectangles is $25$,
thus this implies only that the matrix has a fooling set of size  $25\le 2^5$.
We show however that  for every horizontal partition
of the matrix into two sub-matrices 
$V,W$ one of $V,W$ intersects at least  $17$ fooling rectangles.
To show that, we consider the graph
whose vertices are fooling rectangles
and edges connect those rectangles $R_1,R_2$ that share a row.
Then we prove that that graph has certain  expanding property.
More specifically, every subset of  $25-16=9$ vertices in the graph has at least $17$ neighbors. This implies the desired property. Indeed,
let $\mathcal R$ stand for the family of all fooling rectangles that
share a row with 
$V$. If $|\mathcal R|\le 16$, then its complement $\overline{\mathcal R}$ has at least $9$ rectangles
and hence at least $17$ neighbors and all them share a row with $W$! 

Then we prove similar statement for vertical partitions, using other fooling rectangles.

To prove the lower bound  $n+\log_2n$
for 
%local 
communication complexity of $g_n$,
we generalize the notion of a monochromatic rectangle
to partial functions. This time we call
a rectangle $A\times B$  monochromatic, if
$f$ is constant in each row and in each column of
  $A\times B$ (in the points where it is defined).
Note that  if $f$ is not total then it may still be non-constant in the entire
rectangle  $A\times B$. However, for total functions this definition coincides with
the classical definition of monochromatic rectangles.
The matrix of our function $g_n$ can be \emph{covered} by $2^n$ monochromatic rectangles.
However, we show that every its \emph{partition} into monochromatic rectangles
has at least $n2^n$ rectangles.

The next section contains main definitions. In Section~\ref{s3} 
we study partial functions, and in Sections~\ref{s4} and~\ref{s5} we study total functions. 
Section~\ref{s4} is a ``warming up'' section, we exhibit there an example
of a total function with a gap between  classical complexity and half-duplex complexity with the so called
``honest'' adversary (an honest adversary sends the same bits to both players in every silent round, those bits may
depend on the round). In Section~\ref{s5} we present our function $f$ with a gap
between classical and  half-duplex complexities (6 vs.  5).

\section{Preliminaries}\label{s2}
\begin{definition}[\cite{yao}]
A (classical) communication protocol
to compute a partial function $f:X\times Y\to Z$
is a finite rooted tree $T$ each of whose 
internal nodes (including the root) has two children.
Its leaves are labeled by elements of $Z$,
and each internal node $v$ (including the root) is labeled either by letter \texttt{A} and 
by some function from $X$ to $\{0,1\}$,
or by letter \texttt{B} 
and by some function from $Y$ to $\{0,1\}$.
Besides, one outgoing edge from $v$ is labeled by 0 and the other one is labeled by 1.
\end{definition}
A protocol of depth 3 to compute a function $f:\{0,1\}\times \{0,1\}\to\{1,2,3,4\}$
is shown on Fig.~\ref{pi-4}.
\begin{definition}
The computation of a protocol for the input pair $(x,y)\in X\times Y$ 
runs as follows. At the start, the players, Alice who has $x$ and Bob who has $y$, place their tokens on the root
of the tree. Then they move them along edges in the following way.
If the tokens are in an internal vertex 
 $u$ labeled by $\mathtt{A}$ and a function $h:X\to\{0,1\}$, then they move the tokens
along the edge labeled by $h(x)$. To this end Alice sends $h(x)$ to Bob letting him know where to move his token.
If the tokens are in an internal vertex 
 $u$ labeled by $\mathtt{B}$ and a function $h:Y\to\{0,1\}$, then the tokens move
along the edge labeled by $h(y)$ (Bob sends $h(y)$ to Alice).
Finally, if the tokens come to a leaf, then the label of that leaf is
the result of the computation.  The rules guarantee that their tokens are always in the same node,
thus Alice and Bob output the same results.
\end{definition}

\begin{definition}
The sequence of the sent bits (= the leaf to which the token comes)
is called the \emph{transcript} of the communication.
A communication protocol computes a partial function
$f:X\times Y\to Z$, if for all input pairs $(x,y)$ from the domain of $f$
the result of the computation is equal to $f(x,y)$.
The minimal depth of a communication protocol to compute  $f$
is called \emph{the communication complexity} of $f$.
\end{definition}

\begin{figure}[ht]
\begin{center}
\includegraphics[scale=1]{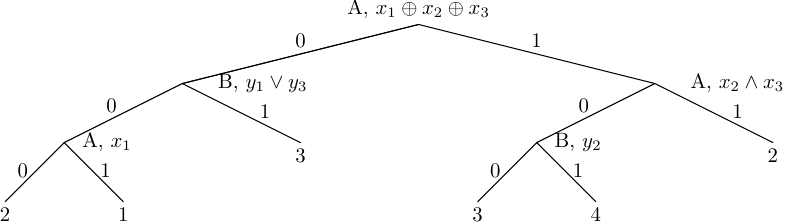}
\end{center}
\caption{A protocol of depth 3 to compute a function $f:\{0,1\}^3\times \{0,1\}^3\to\{1,2,3,4\}$.
Each internal node is labeled by a letter indicating the turn to move and by a function
computing the bit to send. For instance,
if Alice has the string $x=010$ and Bob the string $y=110$, then
in the first round Alice sends 1, in the second round Alice sends 0 and in the third round Bob sends 1.
Then both parties output 4.
}\label{pi-4}
\end{figure}

In the sequel we use the following well known facts  
(see for example~\cite{kn}) about communication
protocols to compute total functions:
\begin{itemize}
\item Let  $\Pi$ be a communication protocol. 
  For every node  $u$ of $\Pi$ the set of the input pairs $(x,y)$ such that the token comes through $u$
  is a (combinatorial) rectangle, that is, it has the form $A_u\times B_u$ for some $A_u\subset X$ and  $B_u\subset Y$.
  If $u_0$ and $u_1$ are children of $u$ and $u$ is labeled by A then 
  $A_u=A_{u_0}\cup A_{u_1}$ and   
  $B_u=B_{u_0}=B_{u_1}$. In this case we say that rectangles
  $A_{u_0}\times B_{u_0}$ and $A_{u_1}\times B_{u_1}$
  are obtained from $A_u\times B_u$ by a horizontal division.
  If $u$ is labeled by B then 
  $B_u=B_{u_0}\cup B_{u_1}$ and   
  $A_u=A_{u_0}=A_{u_1}$. In this case we say that $A_{u_0}\times B_{u_0}$ and $A_{u_1}\times B_{u_1}$
  are obtained from $A_u\times B_u$ by a vertical division.

\item If $f$ is constant in a rectangle $R$, then $R$ is called a 
\emph{monochromatic rectangle for $f$}.
\item
If a protocol $\Pi$ computes a total function $f$,
then the rectangle corresponding to any leaf of the  protocol,
is monochromatic for $f$. Hence
to each protocol computing 
$f$ we can assign a partition of $X\times Y$ into monochromatic rectangles.
That partition has at most  $2^{\text{depth of the protocol}}$ rectangles.
Thus  
communication  complexity is at least the binary logarithm
of the minimal size of the partition of $X\times Y$ into monochromatic rectangles.
\item 
A set $F\subset X\times Y$ is called a 
\emph{fooling set} for  function $f:X\times Y\to Z$,
if for all different pairs $(x,y),(u,v)\in F$ 
not all values $f(x,y),f(u,v),f(x,v),f(u,y)$ 
are equal. In this case every partition of  $X\times Y$
into monochromatic rectangles has at least $|F|$ rectangles.
\end{itemize}

The definition of a half-duplex communication protocol is more complicated.

\begin{definition}[\cite{hims}]\label{def-hd}
A half-duplex communication protocol 
to compute a partial  function $f:X\times Y\to Z$
is a pair of rooted trees $T_A,T_B$. Each internal node of
both trees has 4 children.
Each internal vertex $v$ of  $T_A$ is labeled by a function from 
$X$ into the 3-element set consisting of \emph{actions}, 
$\text{``send 0''}, \text{``send 1''}, \text{``receive''}$.
Edges outgoing from $v$ are labeled by  
\emph{events}  
\text{``sent 0''}, \text{``sent 1''}, \text{``received 0''}, \text{``received 1''}.
Similarly, each internal vertex of  $T_B$ is labeled by a   
function from  $Y$ into the set of actions, 
and outgoing edges are labeled by events. The leaves of both trees are labeled by elements of $Z$.

The computation for an input pair $(x,y)\in X\times Y$ 
runs as follows.
First we calculate for each internal vertex of both trees
the action of the player applying the corresponding function to her/his input
($x$ or $y$). Second, 
each player puts a token on the root of her/his tree and
moves it according to the following rules. If the tokens
are in the vertices $u,v$, and the actions of the players in those nodes
are $a,b$, respectively, then the tokens are moved as follows:

\medskip
\begin{tabular}{|c|c|c|c|}
\hline
& & Alice's token moves&  Bob's token moves\\
$a$ & $b$& along the edge labeled by &  along the edge labeled by\\
 &  & the event& the event\\
\hline
send $i$ & receive & ``sent $i$'' &``received $i$'' \\
\hline
receive& send $j$ & ``received  $j$'' &``sent    $j$'' \\
\hline
send    $i$& send $j$ & ``sent  $i$'' &``sent  $j$'' \\
\hline
receive   & receive &``received $j$'' &``received $i$'' \\
\hline
\end{tabular}

\medskip
If both players chose to receive (the last row of the table) then the round is called
``silent''.
In this case the received values  $i,j$ are chosen by an adversary who decides
where the tokens are moved to.
 
The computation stops when one of the token reaches a leaf. 
The result of the computation is defined as follows. If the other token is
not in a leaf then the result is undefined.
If both tokens reach leaves but their labels are different,
then the result is undefined as well.
Finally, if those labels coincide, then that label is the result of
the computation.
\end{definition}

An example of a 1-round half-duplex protocol is shown on Fig.~\ref{pi-5}.
\begin{figure}[ht]
\begin{center}
\includegraphics[scale=1]{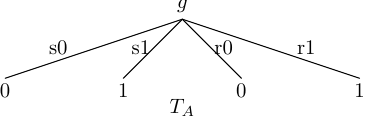}
\hskip 1cm
\includegraphics[scale=1]{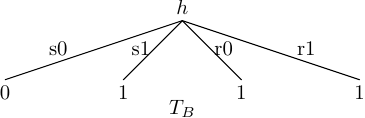}
\end{center}
\caption{A 1-round half-duplex protocol.
Here we assume that $X=Y=\{0,1,2\}$,  $Z=\{0,1\}$.
The event ``sent $i$'' is abbreviated as s$i$
and similarly  the event ``receive $i$'' is abbreviated as r$i$.
The letter $g$ denotes the mapping $0\mapsto \text{send 0}, 1\mapsto \text{send 1}, 2\mapsto \text{receive}$,
and  $h$ denotes the same mapping.
This protocol computes the function  $f:\{0,1,2\}\times \{0,1,2\}\to\{0,1\}$
which is defined only on the pairs $(0,0), (1,1), (1,2)$.
On the first and the second pair both players send a bit which is lost, however they output the same result.
On the third pair Alice sends 1, which is received by Bob, who actually does not care,
since he outputs 1 any way.}\label{pi-5}
\end{figure}

\begin{definition}
A half-duplex protocol computes a function $f$,
if for all pairs $(x,y)$ in its domain and for all choices of 
the received bits in silent rounds (in different silent rounds different pairs $(i,j)$ can be chosen)
the computation ends with the result $f(x,y)$.  If $f(x,y)$ is undefined, then
the computation can end with any result or without any result.
\end{definition}
\begin{definition}
  There is a natural variation of computation via a  half-duplex protocol, in which the adversary
  sends the same bits to Bob and Alice in each silent round (that bit may depend on the round).
  Such an adversary is called \emph{honest}.
  To distinguish honest adversaries from general type adversaries we will call the latter \emph{malicious}.
\end{definition}

To distinguish normal communication protocols and complexity
from half-duplex ones, we will call the former \emph{classical}. %protocols and complexity.

\begin{remark}
Classical protocols can be viewed as a particular case of half-duplex protocols. To convert a classical protocol
into a half-duplex one, we first transform its tree $T$ as follows.
For each internal $v$ vertex of $T$ we make two copies of the tree with the root in the left child of $v$ and
two copies of the tree with the root in the right child of $v$.
The edge going to the first copy of the left child is labeled by the event
``sent  0'', and the edge going to the second copy of the left child is labeled by the event
``received  0'', and similarly for the right child.
This transformation should by applied to all internal nodes in any order. 
The trees $T_A,T_B$ are equal to the resulting trees.
If an internal node $v$ is labeled by $\mathtt{A},h$ in $T$,
then in the Alice's tree  $T_A$ all copies of $v$ are marked
by the function that maps  $x$ to the action ``send $h(x)$'', and in Bob's tree
 $T_B$ --- by the constant function
``receive''. Similarly,  if $v$ is labeled by $\mathtt{B},h$ in $T$,
then in  Bob's tree all copies of  $v$ are labeled by the function mapping
$y$ to ``send $h(y)$'', and in  Alice's tree --- by the constant function 
``receive''. By construction, in the computation of the resulting 
protocol, there are no silent and spent rounds.
\end{remark}

Obviously, half-duplex complexity with honest adversary is less than or equal to the
half-duplex complexity with malicious adversary, and the latter is less than or equal to the
classical communication  complexity. One can also show that half-duplex complexity
with both adversaries is larger than or equal to  the half of the classical complexity~\cite{hims}\footnote{Actually,
in~\cite{hims}   this was shown for half-duplex complexity with zero in place of the half-duplex complexity with
  honest adversary. Obviously, the former one is at most the latter one.}.

\section{A separation of half-duplex and classical communication complexities for partial functions
}\label{s3}

\subsection{An example}\label{s31}
\begin{lemma}
There is a partial function $g$ whose half-duplex complexity is $1$ and classical  complexity is at least $2$.
\end{lemma}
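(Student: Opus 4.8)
The plan is to exhibit a concrete partial function on a small domain, give an explicit one-round half-duplex protocol for it, and then argue the classical lower bound by a counting/combinatorial-rectangle argument. Since the half-duplex complexity must be $1$, the function should be defined on at most three input pairs so that a single round (with its three actions: send $0$, send $1$, receive) suffices; the one-round protocol in Figure~\ref{pi-5} already computes a partial function on the three pairs $(0,0),(1,1),(1,2)$, so the natural candidate is $g:\{0,1,2\}\times\{0,1,2\}\to\{0,1\}$ with $g(0,0)=0$, $g(1,1)=1$, $g(1,2)=1$ (or whatever values make it non-trivial). First I would fix such a $g$ and verify directly from Definition~\ref{def-hd} that the displayed $1$-round protocol computes it: on $(0,0)$ and $(1,1)$ both players send and the bits are lost, but both output the same constant label at their leaf; on $(1,2)$ Alice sends and Bob receives (a normal round), and again both reach leaves with matching labels. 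One must check all adversary choices in silent rounds, but for the chosen inputs no silent round occurs, so this is immediate.

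Next I would prove that the classical complexity is at least $2$, equivalently that $g$ cannot be computed by a depth-$1$ (single-bit) classical protocol. A depth-$1$ classical protocol partitions $X\times Y$ (restricted to the domain of $g$) into at most $2$ monochromatic rectangles, and moreover this partition is obtained by a single horizontal or vertical division of the full rectangle. So it suffices to show that the three domain points cannot be split by one horizontal line or one vertical line into two sets each lying in a monochromatic rectangle of $g$. With the choice above: a vertical division separates points by their $Y$-coordinate, but $(0,0)$ has value $0$ and $(1,1)$ has value $1$ while $(1,2)$ has value $1$; no single split of the columns $\{0,1,2\}$ puts the two distinct-value points into monochromatic blocks because $(1,1)$ and $(1,2)$ share a row yet any column split that isolates column $0$ leaves columns $1,2$ together (fine, value $1$) — so I need the values arranged so this fails too. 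The cleaner route is to pick $g$ so that it has a fooling set of size $3$: by the last bullet of Section~\ref{s2}, if every monochromatic rectangle covers at most one of the three domain pairs, then any monochromatic partition needs $\ge 3$ rectangles, hence depth $\ge 2$. So I would instead take, e.g., $g(0,0)=g(1,1)=0$ and $g(0,1)=1$ — no wait, that changes the domain. The right move is to keep the domain $\{(0,0),(1,1),(1,2)\}$ but observe directly that $\{(0,0),(1,1)\}$ is already a fooling pair obstructing one division while $\{(1,1),(1,2)\}$ together with a value choice obstructs the other; I would simply enumerate the (finitely many) one-division partitions and check none is monochromatic.

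The main obstacle — really the only delicate point — is making the classical lower bound airtight despite the domain being tiny: with only three defined cells, a fooling set of size $3$ would force classical complexity $\ge 2$ cleanly, so I would choose the three domain points and the values of $g$ precisely so that all three form a fooling set (for every two of them, the four corner values are not all equal), while still admitting the one-round half-duplex protocol. Concretely I expect the domain $\{(0,0),(1,1),(2,2)\}$ with $g(0,0)=0,\ g(1,1)=1,\ g(2,2)=0$ to work as a fooling set of size $3$ (any two of the three, with their two "off-diagonal" corners, witness non-constancy), and the one-round protocol where each player's function sends $0$ on input $0$, sends $1$ on input $1$, and receives on input $2$ will handle $(0,0)$ and $(1,1)$ via spent rounds and $(2,2)$ via a silent round — but a silent round lets the adversary move the tokens anywhere, so I would have to route both tokens' four outgoing edges to leaves all labeled $0$, which is possible since $g(2,2)=0$. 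Verifying that this is robust against the adversary on the silent round is the one computation I would actually carry out in full; everything else is immediate from the definitions and the fooling-set bound already stated in Section~\ref{s2}.
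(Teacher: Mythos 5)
There is a genuine gap, and it is fatal to both of your candidate functions. Take your final choice: domain $\{(0,0),(1,1),(2,2)\}$ with $g(0,0)=0$, $g(1,1)=1$, $g(2,2)=0$. This function has classical complexity $1$: let Alice send the bit $h(x)$ with $h(1)=1$ and $h(0)=h(2)=0$, and label the two leaves $1$ and $0$ respectively. This outputs the correct value on all three domain points. Your fooling-set claim breaks down precisely at the pair $(0,0),(2,2)$: these two points carry the \emph{same} value, and the off-diagonal corners $(0,2)$ and $(2,0)$ are \emph{undefined}, so the fooling-set condition from Section~\ref{s2} (which is stated for total functions and needs all four corner values) gives no obstruction -- a single rectangle of a depth-$1$ protocol may legitimately contain both points. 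The same objection kills your first candidate, the function of Fig.~\ref{pi-5} on $\{(0,0),(1,1),(1,2)\}$: the protocol "Alice sends $[x=1]$" computes it in one round. In general, any partial function whose value is determined by one player's input alone (or by a one-bit function of it) cannot separate the models.

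The idea you are missing is that the function must force information to flow in \emph{both} directions, depending on the input: the paper takes $X=Y=\{0,1,\rr\}$, a \emph{four}-element range, and the domain $\{(0,\rr),(1,\rr),(\rr,0),(\rr,1)\}$ with four pairwise distinct values. Then in any depth-$1$ classical protocol one fixed player speaks; if it is Alice, she reaches the same leaf on $(\rr,0)$ and $(\rr,1)$, which have different values, and symmetrically for Bob -- so classical complexity is at least $2$. Meanwhile the one-round half-duplex protocol has each player send their input if it is $0$ or $1$ and receive if it is $\rr$; on every domain point exactly one player sends and the other receives, so there are no silent rounds and no adversary to worry about, and the receiving player's leaf (reached via "received $z$") is labeled accordingly. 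Your instinct to "handle $(2,2)$ via a silent round" is exactly backwards: the clean construction arranges the domain so that silent rounds never occur on defined inputs, and gets its classical lower bound not from fooling sets but from the fact that a one-round classical protocol transmits information in only one direction while the function demands both.
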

\begin{proof}
Let $X = Y = \{0, 1, \rr\}$, $Z =\{0\rr,1\rr,\rr0,\rr1\}$ and the 
partial function $g: X\times Y \to Z$ is defined by the following table:
$$
\begin{array}{c|c|c|c|}
    & 0 & 1 & \rr \\ \hline
  0 &   &  & 0\rr \\ \hline
  1 &   &  & 1\rr \\ \hline
  \rr & \rr0 & \rr1 &  \\ \hline
\end{array}
$$
Here rows are labeled be Alice's inputs $x$ and columns by Bob's inputs $y$. The 
entry of the table in row $x$ and column $y$ is blank, if 
$g(x,y)$ is undefined, and is $g(x,y)$ otherwise. 

The classical complexity of $g$ is at least $2$: in any protocol of depth $1$ either
Alice sends a bit (and hence Alice outputs the same result for input pairs $(\rr,0)$ and $(\rr,1)$), or Bob
sends a bit, and hence Bob outputs the same results for the input pairs $(0, \rr)$ and $(1,\rr)$.

The half-duplex complexity of $g$ is at most $1$. The protocol is the following: 
Alice on the input $0$ sends $0$ and outputs $0\rr$, and on the input $1$, she sends $1$
and outputs $1\rr$. Finally, on the input $\rr$ she receives and after receiving $z$ she outputs $\rr z$.
Bob acts in a similar way: on the input $0$ he sends $0$ and outputs $\rr0$,
on the input $1$ he sends  $1$ and outputs $\rr1$, and on the input $\rr$ he receives and after receiving  $z$ he outputs $z \rr$.

Let us show that this protocol indeed computes $g$. Assume that Alice has $0$ and Bob has $\rr$.
Then Alice sends 0 to Bob, Bob receives 0 and both output $0\rr$.  
A similar analysis works for three other input pairs where the function is defined, $(1,\rr)$,  $(\rr,0)$, and $(\rr,1)$.

For the pairs where $g$ is undefined (e.g. (0, 1)) Alice and Bob may output different results, but we do not care.  
\end{proof}

Consider the following generalization of this  function, which provides a linear  gap
between half-duplex and classical complexities.
Let $X = Y = \{0, 1, \rr\}^n$, $Z =\{0\rr,1\rr,\rr0,\rr1\}^n$ and define $g$ as follows:
$g(x,y)$ is defined if $x_i = \rr \wedge y_i \neq \rr$ or $x_i \neq \rr \wedge y_i = \rr$ for all $i$. In this case the 
$i$-th symbol of $g(x,y)$ equals $\rr y_i$ in the first case and equals
$x_i\rr $ in the second case.
That is, if  Alice's and 
Bobs inputs are ``consistent''
(in each position one party has  $\rr $ and the other has 0 or 1), then the
function is defined and otherwise it is not.

\begin{theorem}
  (a) The 
  half-duplex complexity of 
  $g$ is at most  $n$.
  (b) The 
  classical complexity of 
   $g$ is at least $2n$
\end{theorem}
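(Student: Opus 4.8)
The plan is to prove (a) by exhibiting an explicit $n$-round half-duplex protocol, and (b) by the rectangle/counting method, exploiting that $g$ is injective on the set of ``consistent'' input pairs.

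For (a) I would use exactly one round per coordinate, for a total of $n$ rounds. In round $i$ Alice's action is: ``send $x_i$'' if $x_i\in\{0,1\}$, and ``receive'' if $x_i=\rr$; symmetrically Bob sends $y_i$ if $y_i\in\{0,1\}$ and receives if $y_i=\rr$. Formally this is a pair of trees $T_A,T_B$ of depth $n$ whose node at level $i$ (along any branch) is labeled by the function $x\mapsto$ (``send $x_i$'' or ``receive'') as above, with the four outgoing edges leading to level-$(i+1)$ nodes that record the event seen. The point is that on a pair $(x,y)$ in the domain of $g$ exactly one of $x_i,y_i$ equals $\rr$ for every $i$, so round $i$ is always a \emph{normal} round: the player holding a bit sends it and the player holding $\rr$ receives it. Hence after round $i$ Alice knows $x_i$ and, when $x_i=\rr$, also the received bit (which equals $y_i$ on the domain), so she can compute the $i$-th symbol of $g(x,y)$ (it is $x_i\rr$ if $x_i\in\{0,1\}$ and $\rr y_i$ if $x_i=\rr$); symmetrically for Bob. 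At the leaf reached after $n$ rounds each player outputs the string of these symbols, which on the domain equals $g(x,y)$ for both players. Since no silent round ever occurs on the domain, the adversary is never invoked, so correctness holds even against a malicious adversary, and the half-duplex complexity of $g$ is at most $n$.

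For (b), call $(x,y)\in X\times Y$ \emph{consistent} if for every $i$ exactly one of $x_i,y_i$ equals $\rr$; these are precisely the pairs in the domain of $g$, and there are $4^n$ of them (per coordinate the four possibilities are $(\rr,0),(\rr,1),(0,\rr),(1,\rr)$). The key observation is that $g$ is injective on consistent pairs: from the $i$-th symbol of $g(x,y)$ one recovers $(x_i,y_i)$ ($\rr0\mapsto(\rr,0)$, $\rr1\mapsto(\rr,1)$, $0\rr\mapsto(0,\rr)$, $1\rr\mapsto(1,\rr)$), so $g(x,y)$ determines $(x,y)$. Now let $\Pi$ be any classical protocol of depth $d$ computing $g$. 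Its leaves partition $X\times Y$ into at most $2^d$ rectangles, and on each leaf-rectangle $R$ the output is a single fixed value $z_R$; hence every consistent pair in $R$ has $g$-value $z_R$, and by injectivity $R$ contains at most one consistent pair. Therefore $2^d\ge 4^n$, i.e. $d\ge 2n$. (Equivalently, the $4^n$ consistent pairs form a fooling set for $g$, since any two of them already have distinct $g$-values.)

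Both parts are essentially routine once the right viewpoint is fixed; the only things that need care are (i) verifying that the protocol of part (a) never enters a silent round on the domain of $g$, so that correctness is automatic for any adversary, and (ii) noting that the rectangle argument of part (b) is being applied to a \emph{partial} function: a leaf-rectangle need not be monochromatic in the usual total-function sense, but all domain pairs inside it must still carry that leaf's single output label, which is exactly what the counting uses. If anything, the mild subtlety is phrasing (ii) cleanly; I do not expect a genuine obstacle.
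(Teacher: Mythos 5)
Your proposal is correct and follows essentially the same route as the paper: the protocol in (a) is the paper's protocol verbatim, and in (b) the paper simply notes that $g$ takes $4^n$ distinct values so any protocol needs $\ge 4^n$ leaves, which is the same counting you phrase via injectivity on consistent pairs and per-leaf rectangles. Your extra care about partiality and the absence of silent rounds on the domain is sound, but does not change the argument.
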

\begin{proof}
  (a) In $i$th round, independently on what has happened in previous rounds,  
  each party receives if her/his $i$th symbol is $\rr $, otherwise she/he sends
   $i$th symbol of the input.

  As $i$th symbol of the output 
 Alice takes  $x_i\rr $, if in $i$th round she sent a bit  (equal to  $x_i$), 
 and
 $\rr y_i$, if she received  $y_i$.
Bob acts in a similar way.

Let us show that this protocol indeed computes $g$. Assume that Alice gets $x$,
Bob gets $y$ where $x$ and $y$ are consistent. 
Then in $i$th round  either Alice sends $x_i$ and Bob receives $x_i$ and then both 
take $x_i\rr $ as $i$th symbol of the output, or 
Bob sends $y_i$ and Alice receives $y_i$ and then both 
take $\rr y_i$ as $i$th symbol of the output. 
Thus  the $i$th symbol of the output of both parties
is equal to  the $i$th symbol of $g(x,y)$ for all $i$.

(b) Note that the 
function $g$ has $4^n$ possible values hence
any classical  communication
protocol computing $g$ must have at least $4^n$ leaves. Thus its 
classical  communication complexity is at least
$\log_2 4^n=2n$.
\end{proof}

\subsection{Local communication complexity}

There is a way to compute the function $g$ from the previous section 
by communicating at most   $n + \lceil \log_2(n + 1)\rceil$ bits (where $n$ is the length $n$ of the inputs):
\begin{enumerate}
\item  Alice computes the number $k$ of symbols $\rr$ in her input.
\item She then sends $k$ to Bob in    $\lceil\log_2 (n + 1)\rceil$ bits.
Note that $k\in\{0,1,\dots,n\}$, thus $k$   indeed can be encoded
in this many bits.
\item Bob, after receiving the number $k$, halts with any result if the number of letters  $\rr $ in his input is different 
from  $n-k$. 
\item Then Alice sends $(n - k)$ bits equal to symbols $0$ and $1$ in her input 
keeping their order.
\item  Unless he has halted, Bob receives  $n - k$ bits sent by Alice and then sends  $k$ bits equal to 
symbols $0$ and $1$ in his input, keeping their order.
\item Alice, after receiving a string  $z$, outputs the string $a$ of length $n$ defined as follows. If $x_i\ne \rr $,
then $a_i=x_i\rr $. The remaining symbols of $a$ have the form
$\rr z_j$ where the bits of $z$ are inserted into $a$ in the same order in which they are arranged 
in  $z$.
\item Bob computes his output string in a similar way.
\end{enumerate}

Why this protocol does not contradict to the lower bound $2n$ 
for the communication complexity of the
function $g$? This is because this protocol is not a communication protocol according 
to Definition~\ref{def-hd}. Indeed, Alice and Bob form their outputs based not only on the transcript
of the computation but also on their inputs. As we will see later, for total functions,
this possibility does not yield anything new. However, for partial 
functions this possibility can decrease communication
complexity, as our example shows. A simpler example of this phenomenon
will be given later. By this reason for partial functions a more adequate communication model should allow
players to use their inputs when computing the output.

As far as we know, such a communication model appeared quite recently in~\cite[page 58]{nolin}.
Let us define it more formally.

\begin{definition}[\cite{nolin}]
A local communication protocol for a  function $f:X \times Y \to Z$ is a  
communication protocol in which each leaf is labeled by a pair of functions $X \to Z$ and  $Y \to Z$.
If a computation on an input pair $(x, y)$ has ended in a leaf labeled by a pair of functions $(p,q)$, 
then the result of computation is defined as $p(x)$ provided $p(x) = q(y)$, and is undefined otherwise.
A local communication protocol computes a partial function $f:X \times Y \to Z$ if for all input pairs  $(x,y)$ 
from its domain the  protocol outputs 
$f(x,y)$ (the result can be both defined and undefined
for the pairs outside the domain of the function).
Local 
communication complexity of a
function is the minimal depth of a local communication protocol computing
the function. 
\end{definition}

Local 
half-duplex communication protocols 
%and local 
%half-duplex communication complexity of
%for partial functions 
are defined in a similar way. When we want to distinguish local 
protocols 
%and complexity
from normal ones, we call the latter ones
\emph{global}.

\subsection{An example}
Consider the 
partial function $f: \{0, 1\}^n \times \{0, 1\}^n \to \{0, 1\}^n$
defined as follows:    
$$
f(x,y)=\begin{cases}
x, &\text{if }x=y\\
\text{undefined}& \text{if }x\ne y
\end{cases}
$$
Since 
$f$ takes $2^n$ different values, any global (classical) protocol
for it has at least $2^n$ leaves and hence its depth is at least $n$.
Thus the global (classical) complexity of 
$f$ is at least 
$n$. On the other hand the local
(classical) complexity of 
$f$ is  $0$: label the root of the tree
with pair $(h, h)$, where $h$ is the identity 
function $h(x) = x$ (each player just outputs her/his input).

It is important in
this example that the
function is not total.
For total  functions there is no such example
both in  
classical and half-duplex models.

\subsection{Local and global complexity of total functions}

\begin{lemma} For classical
  communication    complexity:
  the local complexity of any total 
  function is equal to its global
  complexity.
\end{lemma}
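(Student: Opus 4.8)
\emph{Plan.} The plan is to prove the two inequalities separately. The direction ``local complexity $\le$ global complexity'' is immediate: any global protocol of depth $d$ becomes a local protocol of the same depth by replacing each leaf label $z\in Z$ with the pair of constant functions $(x\mapsto z,\ y\mapsto z)$, and this local protocol computes the same function. So the whole content lies in the reverse inequality, ``global complexity $\le$ local complexity''.

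For the reverse direction I would start from an arbitrary local (classical) protocol $\Pi$ of depth $d$ computing a total function $f:X\times Y\to Z$ and turn it into a global protocol with the same underlying tree, hence the same depth. First recall the structural fact already recorded in Section~\ref{s2}: for every node $u$ of $\Pi$ the set $R_u$ of input pairs whose computation passes through $u$ is a combinatorial rectangle $A_u\times B_u$. Now fix a leaf $\ell$ labeled by a pair of functions $(p_\ell,q_\ell)$ and look at its rectangle $A_\ell\times B_\ell$. If it is empty, relabel $\ell$ with an arbitrary element of $Z$. Otherwise, since $f$ is total, every pair $(x,y)\in A_\ell\times B_\ell$ lies in the domain of $f$, so by the definition of ``computes'' we must have $p_\ell(x)=q_\ell(y)=f(x,y)$ for all such pairs. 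The key observation is then that this forces $A_\ell\times B_\ell$ to be monochromatic for $f$: the value $f(x,y)=p_\ell(x)$ does not depend on $y\in B_\ell$, while the value $f(x,y)=q_\ell(y)$ does not depend on $x\in A_\ell$, so $f$ is constant on the whole rectangle; call this constant $c_\ell$, and relabel $\ell$ by $c_\ell$.

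Carrying out this relabelling at every leaf produces a global protocol $\Pi'$ with the same tree, hence of depth $d$. It then remains only to check that $\Pi'$ computes $f$: for any $(x,y)\in X\times Y$ the computation of $\Pi'$ follows exactly the same edges as that of $\Pi$ (only the leaf labels changed), so it ends in the leaf $\ell$ with $(x,y)\in A_\ell\times B_\ell$, and $\Pi'$ outputs $c_\ell=f(x,y)$. Applying this to a local protocol $\Pi$ of minimal depth gives global complexity $\le$ local complexity, which together with the trivial inequality finishes the proof.

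The only step that requires care — and the step where totality is genuinely used — is the claim that each leaf rectangle is monochromatic. The argument for it is short, but it is precisely what fails for partial functions (as the examples in the preceding subsections show, where a single leaf handles an entire non-monochromatic equality pattern), so I would spell it out explicitly rather than treat it as obvious.
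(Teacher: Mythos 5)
Your proof is correct and takes essentially the same approach as the paper's: show that each leaf rectangle of a local protocol is monochromatic (using totality to force $p_\ell(x)=q_\ell(y)=f(x,y)$ throughout the rectangle, which is exactly the paper's chain $f(x,y)=q_\ell(y)=p_\ell(x')=f(x',y')$), then relabel each leaf by the constant value. You additionally spell out the trivial direction and the empty-rectangle case, which the paper leaves implicit, but the substance is the same.
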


\begin{proof}
Assume that a local  (classical)
communication protocol for $f$ is given and let
$l$ be any its leaf. Let $R_l=A_l\times B_l$
denote the rectangle consisting of all
input pairs $(x,y)$ for which the computation
reaches  $l$. 
We claim that $f$ is constant on $R_l$.

Indeed, let $p_l,q_l$ denote the 
functions which label the leaf $l$.
Then for all  $(x,y)\in R_l$ and $(x',y')\in R_l$ we have  
$$
f(x,y)=q_l(y)=p_l(x')=f(x',y').
$$
Here the first equality holds, since Bob outputs the correct result on the pair $x,y$,
the second equality holds, since Alice and Bob output same results on the pair $x',y$,
which also belongs to $R_l$, and 
the third equality holds, since Alice  outputs the correct result on the pair $x',y'$,
which also belongs to $R_l$.
%Similarly, for any $y',y''\in B_l$ we have 
%$q_l(y')=p_l(x_0)=q_l(y'')$.

Hence any local   protocol
to compute any total 
function can be converted into a
classical protocol of the same depth
just by changing the label 
$(p_l,q_l)$ to the constant value of $f$ on $R_l$.
\end{proof}

A similar fact holds for
half-duplex complexity.

\begin{lemma}
  For half-duplex communication    complexity: the local complexity of any  
total function is equal to its global complexity.
\end{lemma}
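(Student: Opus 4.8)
The plan is to mimic the classical argument in the previous lemma, but now carried out along a half-duplex computation rather than along a branch of a single tree. The essential feature to exploit is the same: in a local half-duplex protocol, the value output by Alice depends only on her input $x$ together with the pair of leaves $(l_A,l_B)$ reached in $T_A$ and $T_B$, and symmetrically for Bob. So I would fix a pair of leaves $(l_A,l_B)$ that is ``reachable'' — meaning there is some input pair $(x,y)$ in the domain of $f$ and some behaviour of the adversary that drives Alice's token to $l_A$ and Bob's token to $l_B$ simultaneously (or, more carefully, drives the computation to halt with Alice at $l_A$, Bob at $l_B$). Let $p,q$ be the functions labelling $l_A$ (as a function of $x$) and $l_B$ (as a function of $y$). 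I want to show $f$ is constant on the set $R_{l_A,l_B}$ of all input pairs whose computation can reach $(l_A,l_B)$.

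The key structural fact I would establish first is that $R_{l_A,l_B}$ is a combinatorial rectangle, i.e. $R_{l_A,l_B}=A\times B$ where $A$ is the set of $x$'s whose token (for some adversary and some compatible $y$) can reach $l_A$ and $B$ is defined symmetrically. This is the half-duplex analogue of the fact that protocol-leaves give rectangles, but it needs an extra idea: in silent rounds the adversary may feed Alice and Bob \emph{different} bits, so from ``$x$ can reach $l_A$ paired with $y$'' and ``$x'$ can reach $l_A$ paired with $y'$'' I need to conclude ``$x$ can reach $l_A$ paired with $y'$.'' The argument is that the sequence of actions Alice takes on input $x$ along the path to $l_A$ depends only on $x$ and on the sequence of events Alice sees; and the events Alice sees can be \emph{forced by the adversary} in every round that is silent \emph{from Alice's local point of view}, while in rounds where Alice genuinely receives a bit from Bob, Bob must actually be sending — but whether Bob sends a $0$ or $1$ is itself something Bob does based only on $y$ and Bob's event-history. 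Here one has to be slightly careful that the round-types (who sends, who receives, who is silent) are consistent between the two fixed paths; I expect this to be arrangeable precisely because a malicious adversary can, round by round, either relay the true bit (when one party genuinely sends) or inject whatever is needed (when both receive), so one can ``glue'' Alice's path-to-$l_A$-with-$y$ and Bob's path-to-$l_B$-with-$y'$ into a single legal computation on $(x,y')$ reaching $(l_A,l_B)$. I would phrase this as: the computation tree, once we fix the adversary's answers in silent rounds appropriately, is again a product of an Alice-side branching on $x$ and a Bob-side branching on $y$.

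Granting that $R_{l_A,l_B}=A\times B$, the rest is exactly the computation from the previous lemma: pick $(x,y),(x',y')\in A\times B$; then $(x',y)$ and $(x,y')$ are also in $A\times B$, the protocol on $(x',y)$ is legal and (if $f$ is defined there — but we only need it for pairs in $\mathrm{dom}(f)$; for pairs outside $\mathrm{dom}(f)$ the output need not agree, so a small additional point is needed) has Alice output $p(x')$ and Bob output $q(y)$. To make the chain $f(x,y)=q(y)=p(x')=f(x',y')$ go through I would first reduce to the case where every pair in $A\times B$ is in $\mathrm{dom}(f)$: indeed if $(x,y),(x',y')\in\mathrm{dom}(f)$ both reach $(l_A,l_B)$, correctness of the protocol on these two pairs gives $p(x)=q(y)=f(x,y)$ and $p(x')=q(y')=f(x',y')$; and running the protocol on the mixed pair $(x',y)$ — which is a legal computation reaching $(l_A,l_B)$ by the rectangle property — Alice outputs $p(x')$, Bob outputs $q(y)$, so $p(x')=q(y)$ \emph{provided the protocol's output on $(x',y)$ is defined}, which holds exactly when $p(x')=q(y)$; so the honest way to say it is: either $p(x')=q(y)$, giving $f(x,y)=q(y)=p(x')=f(x',y')$ directly, or $p(x')\ne q(y)$, and then I claim the protocol fails — but on which input? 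This is the delicate point and the main obstacle: I need that $p(x')\ne q(y)$ forces a genuine error on some domain pair, which is \emph{not} immediate for partial $f$ unless $(x',y)\in\mathrm{dom}(f)$. The clean fix is to restrict attention to the rectangle structure more tightly: redefine $A$ to be the $x$'s for which \emph{there exists} $y$ with $(x,y)\in\mathrm{dom}(f)$ reaching $l_A$, and similarly $B$; then for $x'\in A$ there is $y''$ with $(x',y'')\in\mathrm{dom}(f)$ reaching $(l_A,l_B)$, so $p(x')=f(x',y'')$ is forced, and likewise $q(y)=f(x''',y)$ for suitable $x'''$, and now comparing the three domain points $(x',y''),(x''',y),(x,y)$ all inside the rectangle and all in $\mathrm{dom}(f)$ closes the loop — this is precisely the three-step argument of the preceding lemma, applied inside the half-duplex rectangle. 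Once $f$ is seen to be constant on each such leaf-pair rectangle, I convert the local half-duplex protocol to a global one of the same depth by relabelling each leaf of $T_A$ (resp. $T_B$) with the constant value $f$ takes on the corresponding rectangle — noting that this constant is well-defined because it does not depend on which partner leaf $l_B$ we paired with — and the reverse inequality (global $\le$ local) is trivial. The hardest step is the rectangle/gluing lemma for the malicious-adversary model; everything else is bookkeeping transplanted from the classical lemma above.
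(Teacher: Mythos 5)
Your overall architecture --- show that the set of input pairs whose computation can end in a fixed leaf pair $(l_A,l_B)$ is closed under mixing, then run the three-point chain $p(x)=f(x,y)=q(y)=f(x',y)=p(x')$, and finally relabel each leaf by the resulting constant --- is the same as the paper's. But the one step you yourself single out as the hardest, the gluing lemma, is where the argument fails as written. You propose to splice Alice's path to $l_A$ from the run on $(x,y)$ together with Bob's path to $l_B$ from the run on $(x',y')$, relying on the adversary to ``relay the true bit (when one party genuinely sends) or inject whatever is needed (when both receive).'' The adversary has no such power: it only acts in silent rounds, and in a round where one party sends and the other receives, the receiver gets exactly the sent bit. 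So take a round $i$ in which Alice's edge on her path is labelled ``received $j$'' (in the $(x,y)$-run she got $j$, possibly from the adversary because that round was silent there), while Bob's edge on his path is labelled ``sent $v$'' with $v\ne j$. In the spliced run on $(x,y')$ this round is a normal round, Alice receives $v\ne j$, and her token leaves the intended path; nothing the adversary can do prevents this. Hence the two-sided gluing of transcripts from two different runs is not a valid proof of the rectangle property.

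The repair is the paper's one-sided swap: keep the \emph{entire} computation on $(x,y)$ --- Bob's input $y$, his whole trajectory, and the adversary's choices in the silent rounds --- and replace only Alice's input by an $x'$ whose computation (with some other partner) can also reach $l_A$. Along the fixed path to $l_A$ the edge labels force Alice's action at each vertex (an edge ``sent $i$'' forces the action ``send $i$,'' an edge ``received $j$'' forces ``receive''), so $x$ and $x'$ induce identical actions at every vertex of that path; therefore the round types, all events, and both tokens' trajectories are literally unchanged, and $(x',y)$ reaches $(l_A,l_B)$ under the same adversary. Applying this swap on each side separately yields the closure under mixing you wanted. Two smaller points: your detour about pairs outside $\mathrm{dom}(f)$ is vacuous here, since the lemma concerns total functions only, so every mixed pair is in the domain and the chain of equalities closes immediately; and the full rectangle $R_{l_A,l_B}$ is not actually needed --- the one-sided swap plus totality already shows that $p$ is constant on the set of all $x$ whose computation can end in $l_A$, which is exactly what is required to replace the leaf labels by constants.
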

\begin{proof}
The proof is similar to the proof of the previous lemma. However
for half-duplex protocols the connection between rectangles and leaves
in the protocol trees is more complicated. Therefore  we will not establish this connection
explicitly. Instead we will use a direct argument.

Again we show that any local protocol 
to compute a total function is essentially global.
More precisely, all computations ending in a leaf output the same result. 
Thus we can replace the 
function labeling each leaf in $T_A$ by the constant it outputs
and then we can make the same thing for $T_B$. In this way we transform
a local protocol to the global one.

Consider a local communication protocol $(T_A, T_B)$ to compute a total function $f$.
Assume that for an input pair $(x, y)$ the computation can end in the 
leaves $a, b$ (we use the word ``can'', since now the leaves depend on the choices of
the adversary). Assume also that for another input pair
$(x', y')$ the computation of 
Alice can end in the same leaf $a$.
Let us show that for the input pair
$(x', y)$ the computation can also end in leaves
$(a, b)$.

Let $a_i$ and
$b_i$ denote $i$th vertices on the paths from the roots to  $a$ and $b$, respectively.
Since Alice's computation on both inputs
$(x, y)$ and $(x', y')$ can end in the leaf $a$, for all $i$  Alice's action in the vertex 
$a_i$ for inputs 
$x$ and 
$x'$ is the same.
Thus, if the adversary's action
in the computation on the input pair
$(x', y)$ is identical to that on the input $(x, y)$, when the computation ends in $(a, b)$,
then both players in each round perform the same actions on inputs $(x, y)$ and
$(x', y)$ and all events are identical as well. Hence the computation on the input
$(x', y)$ can also end in the same leaves $a, b$.

Let the leaf
$a$ is labeled by the 
function $p$ and the leaf
$b$ by the function $q$.
Then  $p(x) = f(x, y) = q(y) = f(x', y) = p(x')$. (It is important here that $f$ is a total 
function, as otherwise 
$ f(x', y)$ can be undefined and hence
$q(y)$ may differ from $p(x')$.)

Thus each leaf of  $T_A$ is labeled by a function that is constant on all $x$'s for which the computation can end in that leaf for some $y$.
\end{proof}

\subsection{Local classical complexity of the 
function  $g$}

For partial functions, local protocols seem to be a more natural communication model than the global ones. 
In the Introduction, before we have given formal definition, we talked about 
the lower bound $n+\log_2n$ for classical communication complexity of the function $g_n$.
We meant there the local complexity. 
%Recall that its half-duplex complexity is at most $n$.
Now we will prove this statement.

\begin{theorem}
(a) The local classical complexity of $g$ is at most $n + \lceil \log_2(n + 1)\rceil$.

(b) The local classical complexity of $g$ is at least $n + \lceil\log_2 n\rceil - \log_2 3 + o(1)$.
\end{theorem}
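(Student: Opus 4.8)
The plan is to establish the two bounds separately, with part (a) being essentially the protocol already described in the text (Subsection on ``Local communication complexity''), and part (b) being the genuine content.

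\paragraph{Part (a).}
First I would simply observe that the seven-step protocol given above is a \emph{local} classical protocol: Alice sends $k$ (the count of $\rr$'s in $x$) in $\lceil\log_2(n+1)\rceil$ rounds, then the non-$\rr$ bits of $x$ in $n-k$ rounds, then Bob sends the non-$\rr$ bits of $y$ in $k$ rounds, and finally each party computes its output as a function of its own input together with the received bits. The total number of rounds is $\lceil\log_2(n+1)\rceil + (n-k) + k = n + \lceil\log_2(n+1)\rceil$, which is independent of the input. One has to check that this is legitimately a local protocol in the sense of the definition: the leaf reached is determined by the transcript, and at that leaf Alice's output function depends only on $x$ and Bob's only on $y$ (the received bits being recoverable from the transcript). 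Correctness on consistent pairs was already argued in the text; on inconsistent pairs we don't care. I would present this compactly rather than re-deriving it.

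\paragraph{Part (b), the real work.}
The strategy is a counting/rectangle argument, but using the \emph{generalized} notion of monochromatic rectangle for partial functions introduced in the Introduction: $A\times B$ is monochromatic for $g$ if $g$ is constant along each row and along each column of $A\times B$ (where defined). By the same reasoning as in the total case, a local classical protocol of depth $d$ for $g$ partitions $X\times Y$ into at most $2^d$ such generalized-monochromatic rectangles — here I would need a small lemma: the rectangle $R_\ell$ reaching a leaf $\ell$ labeled $(p_\ell,q_\ell)$ satisfies, for $(x,y),(x,y')\in R_\ell$ with both in $\mathrm{dom}\,g$, that $g(x,y)=p_\ell(x)=g(x,y')$, so $g$ is row-constant on $R_\ell$, and symmetrically column-constant. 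Hence it suffices to show every partition of $X\times Y$ into generalized-monochromatic rectangles has at least $n\cdot 2^n / 3$ rectangles, up to the stated $o(1)$; taking $\log_2$ gives $n + \log_2 n - \log_2 3 + o(1)$.

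To lower-bound the partition size, I would identify a large set of cells that any single generalized-monochromatic rectangle can only meet in a controlled way — a ``fooling-type'' bound adapted to the new monochromaticity notion. The natural candidate set is the domain of $g$ (the consistent pairs), which has size $2^n\binom{}{}$... more precisely $\sum_k \binom{n}{k}2^{n-k}2^{k}=\sum_k\binom nk 2^n = 2^n\cdot 2^n$? — I'd need to recount: for each choice of which coordinates are $\rr$ on Alice's side (the rest being $\rr$ on Bob's side) there are $2^n$ ways to fill the non-$\rr$ bits, so $|\mathrm{dom}\,g| = \sum_{k=0}^n \binom nk 2^n$ is wrong too; the correct count is $2^n\cdot 2^n$ divided appropriately — this is exactly the kind of routine calculation I would pin down carefully in the write-up. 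Then the key structural claim is: a generalized-monochromatic rectangle $A\times B$ contains at most roughly $3\cdot(\text{its ``row-count''})$ or a similar bound, forcing many rectangles. I expect the clean form to be that each such rectangle covers at most $3$ domain-cells per... and that summing over all rectangles against the size of $\mathrm{dom}\,g$ yields the factor $n/3$ after absorbing lower-order terms into $o(1)$.

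\paragraph{Main obstacle.}
The hard part will be the combinatorial heart of (b): pinning down precisely how many domain cells of $g$ a single \emph{generalized}-monochromatic rectangle can contain, and squeezing out the exact constant $\log_2 3$ (rather than some weaker constant). Because $g$ need not be constant on the whole rectangle, the usual two-cell fooling argument does not apply directly; I anticipate needing to analyze, for a fixed rectangle $A\times B$, the structure of rows and columns on which $g$ is defined, show that row-constancy plus column-constancy forces the defined part of $A\times B$ to be small (or to have small ``profile''), and then optimize a sum of binomial-type terms — which is where the $3$ and the $o(1)$ slack come from. Everything upstream (the protocol, the partition lemma, taking logarithms) is routine by analogy with the total-function lemmas already proved.
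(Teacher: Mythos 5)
Part (a) of your proposal is fine and matches the paper, and your setup for part (b) --- the lemma that a depth-$c$ local protocol partitions $X\times Y$ into at most $2^c$ rectangles that are row-constant and column-constant on the domain of $g$, and the reduction to showing every such partition has at least $\frac{n2^n}{3}(1+o(1))$ parts --- is exactly the paper's. The gap is in the combinatorial heart, and the route you sketch cannot close it. You propose to cap the number of domain cells a single generalized-monochromatic rectangle can contain. But the only cap available is weak: since two defined cells in the same row (or column) of the domain of $g$ automatically have different values, a monochromatic rectangle with $r$ rows and $c$ columns contains at most $\min(r,c)\le\sqrt{rc}$ domain cells --- and rectangles achieving $\min(r,c)$ exist (take $A=\{0^k\rr^{n-k}\}_k$, $B=\{\rr^m 0^{n-m}\}_m$; each row and column meets the domain in exactly one cell). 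Summing $\sqrt{r_ic_i}$ over a partition of the $3^n\times 3^n$ matrix and using Cauchy--Schwarz gives only $L\ge (16/9)^n$, i.e.\ a lower bound of about $0.83n$ on the depth, far below $n$, let alone $n+\log_2 n$. No per-rectangle cap on domain cells can yield the $n2^n$ count, because the domain has only $4^n$ cells in total.

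The paper's argument uses a different mechanism: it restricts attention to the \emph{simple} inputs $(\{0,1\}^k\rr^{n-k},\rr^m\{0,1\}^{n-m})$, splitting them into green cells ($[x]+[y]=n$, where $g$ is defined; there are $(n+1)2^n$ of them) and blue cells ($[x]+[y]<n$, where $g$ is undefined; there are $(n-1)2^n+1$). Two distinct green cells with the same value of $[x]$ cannot lie in one monochromatic rectangle (a genuine two-cell fooling argument works there, since $g$ is defined on all four corners), so the green cells inside one rectangle have pairwise distinct levels $[x_1]<\dots<[x_k]$, and then the $\binom{k}{2}$ off-diagonal corners $(x_i,y_j)$, $i<j$, are blue cells lying in the \emph{same} rectangle. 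Since blue cells are scarce, $\sum_i\binom{k_i}{2}\le(n-1)2^n+1$ while $\sum_i k_i=(n+1)2^n$, and Cauchy--Schwarz gives $L\ge \frac{(n+1)^2 2^{2n}}{(3n-1)2^n+2}=\frac{n2^n}{3}(1+o(1))$. This interplay between defined and \emph{undefined} cells forced into the same rectangle is the idea missing from your plan; without it the constant in front of $2^n$ cannot be made to grow with $n$.
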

\begin{proof}
(a) This statement was proved above.
(b) This statement is more difficult to prove.
\begin{lemma}
If the local classical complexity of a partial function $g$ is at most $c$, than its matrix can be partitioned into at most $2^c$ rectangles with the following property: if inputs $(x, y)$ and $(x, y')$ are in the same rectangle, and $g$ is defined on both of them, then $g(x, y) = g(x, y')$, and similarly, if inputs $(x, y)$ and $(x', y)$ are in the same rectangle, and $g$ is defined on both of them, then $g(x, y) = g(x', y)$.
(Rectangles with this property will be called \emph{monochromatic}.)
\end{lemma}
\begin{proof}
Leaves of classical communication protocol define matrix partition into rectangles.
Each of these rectangles has the mentioned property: if a function $p$ is written in Alice's leaf, and $g$ is defined on $(x, y)$ and $(x, y')$, then we necessarily have $g(x, y) = p(x) = g(x, y')$. Similarly for Bob.

If the depth of the protocol is $c$, then the number of leaves is at most $2^c$, and its rectangles form the required partition.
\end{proof}

We will call any input of the form $(\{0, 1\}^k \rr ^{n - k}, \rr ^m \{0, 1\}^{n - m})$ \emph{simple}.
We denote by $[x]$ the total number of symbols $0$ and $1$ in $x$.
We color in green all simple inputs $(x, y)$ with $[x] + [y] = n$. The function $g$ is defined on such inputs.
We color in blue simple inputs $(x, y)$ such that $[x] + [y] < n$. The function $g$ is not defined on them (see Fig.~\ref{pic1}). 
The number of green inputs is  $(n + 1) \cdot 2^n$.
The number of blue inputs is $$\sum\limits_{i = 0}^{n - 1} \sum\limits_{j = 0}^{n - i - 1} 2^{i + j} = \sum\limits_{i = 0}^{n - 1} 2^i \cdot (2^{n - i} - 1) = n \cdot 2^n - (2^n - 1) = (n - 1) \cdot 2^n + 1$$

\newcommand{\good}{\cellcolor{green}}
\newcommand{\bad}{\cellcolor{blue}}
\begin{figure}
$$\begin{array}{c|c|c|c|c|c|c|c|c|c|c|c}
               & 0\ldots 0 & \ldots    & 1\ldots 1 & \rr 0\ldots 0 & \ldots & \rr 1\ldots 1 & \rr \rr 0\ldots 0 & \ldots & \rr \rr 1\ldots 1 & \ldots & \rr \ldots \rr  \\ \hline
     \rr \ldots \rr  & \good     & \good     & \good     & \bad       & \bad   & \bad       & \bad        & \bad   & \bad        & \bad   & \bad      \\ \hline
    0\rr \ldots \rr  &           &           &           & \good      & \good  & \good      & \bad        & \bad   & \bad        & \bad   & \bad      \\ \hline
    1\rr \ldots \rr  &           &           &           & \good      & \good  & \good      & \bad        & \bad   & \bad        & \bad   & \bad      \\ \hline
   00\rr \ldots \rr  &           &           &           &            &        &            & \good       & \good  & \good       & \bad   & \bad      \\ \hline
   01\rr \ldots \rr  &           &           &           &            &        &            & \good       & \good  & \good       & \bad   & \bad      \\ \hline
   10\rr \ldots \rr  &           &           &           &            &        &            & \good       & \good  & \good       & \bad   & \bad      \\ \hline
   11\rr \ldots \rr  &           &           &           &            &        &            & \good       & \good  & \good       & \bad   & \bad      \\ \hline
    \multicolumn{12}{c}{\ldots} \\ \hline
     0\ldots 0 &           &           &           &            &        &            &             &        &             &        & \good     \\ \hline
     \ldots    &           &           &           &            &        &            &             &        &             &        & \good     \\ \hline
     1\ldots 1 &           &           &           &            &        &            &             &        &             &        & \good     \\ \hline
\end{array}
$$
\caption{The part of matrix of function $g$ consisting of simple inputs.
}\label{pic1}
\end{figure}

Assume that classical complexity of $g$ is at most $c$.
Apply the above lemma to a half-duplex protocol of depth $c$ to compute $g$ and consider 
the corresponding partition into at most $L=2^c$ monochromatic rectangles.
%Then it's possible to cover all green inputs with $2^c$ pairwise disjoint rectangles. 

Let $(x, y)$ and $(u, v)$ be different green inputs with $[x] = [u]$.
Then $g$ is defined on both $(x, v)$ and $(u, y)$ and
%, if additionally $(x, y) \neq (u, v)$ then 
not all values $g(x, y)$, $g(x, v)$, $g(u, y)$, $g(u, v)$ are the same.
Thus, $(x, y)$ and $(u, v)$ are in different rectangles of the partition.

Note that, if a rectangle contains $k$ green inputs $(x_i, y_i)$, $[x_1] < [x_2] < \ldots < [x_k]$, then it also contains $\frac{k \cdot (k - 1)}{2}$ blue inputs $(x_i, y_j)$, $i < j$.
Let $k_i$ be the number of green inputs in the $i$th rectangle, then $i$th rectangle also contains at least $\frac{k_i \cdot (k_i - 1)}{2}$ blue inputs. Then $\sum\limits_{i = 1}^L k_i = (n + 1) \cdot 2^n$.
On the other hand, $\sum\limits_{i = 1}^L \frac{k_i (k_i - 1)}{2} \leq (n - 1) \cdot 2^n + 1$
as rectangles are disjoint. Re-writing the second inequality, we get
\begin{align*}
\sum\limits_{i = 1}^L( k_i^2 - k_i) &\leq (2n - 2)\cdot 2^n + 2\\
\sum\limits_{i = 1}^L k_i^2 &\leq (2n - 2)\cdot 2^n + 2 + (n + 1) \cdot 2^n\\
\sum\limits_{i = 1}^L k_i^2 &\leq (3n - 1)\cdot 2^n + 2.
\end{align*}
By Cauchy–Schwarz inequality, $\left(\sum\limits_{i=1}^L k_i\right)^2 \leq L \cdot \sum\limits_{i=1}^L k_i^2$. 
Therefore,
\begin{align*}
(n + 1)^2 \cdot 2^{2n} &\leq L \cdot ((3n - 1) \cdot 2^n + 2)\Rightarrow\\
L &\geq \frac{(n + 1)^2 \cdot 2^{2n}}{(3n - 1) \cdot 2^n + 2}\Rightarrow\\
L &\geq \frac{n \cdot 2^n}{3}\cdot\left(1 + o(1)\right)\Rightarrow\\
c=\log_2 L &\geq n + \log_2 n - \log_2 3 + o(1)\qed
\end{align*}
\renewcommand{\qed}{}
\end{proof}

So, the function $g$ is an example of a partial function with 
%a linear gap between global classical and half-duplex complexities, and with 
logarithmic gap between local classical complexity and half-duplex complexity with adversary.
We would also like to find an example of a total function with the same gap. 
%(As we saw, for total functions local and global complexities coincide, thus we can omit ``local'' here.)
Unfortunately, we could not find any such example.
In the rest of the paper we present 
an example of a total function with a constant gap --- our function has half-duplex complexity 5 and classical complexity 6.
We will start with a simpler example of a constant gap between classical complexity and half-duplex complexity with honest adversary.

\section{A separation of half-duplex complexity
  with honest adversary from classical complexity
for total functions}\label{s4}

\begin{theorem}
  There is a total function with
  classical   complexity   4
and half-duplex complexity with honest adversary at most  3.
\end{theorem}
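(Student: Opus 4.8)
The plan is to design a small total function $h:X\times Y\to Z$ together with a $3$-round half-duplex protocol using an honest adversary, and then to show by a fooling-set argument that no classical protocol of depth $3$ can compute $h$. The key idea behind the construction is exactly the ``multiplexor'' phenomenon sketched in the Introduction: each player will hold a pair consisting of a ``control'' part and a ``data'' part, and on some inputs both players will want to talk simultaneously. Under an honest adversary a silent round is harmless provided both players interpret the single common bit the adversary picks in a consistent way; this is what lets us collapse a round compared with a classical protocol, where the two tokens must be driven by genuine messages. I would keep $X$ and $Y$ as small as possible, probably $X=Y$ of size around $8$ (say, three relevant ``modes''), so that the matrix can be written down explicitly and the fooling set exhibited by hand.

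First I would fix the protocol. In round $1$, depending on their control parts, each player either speaks a bit (the first bit of their data) or listens; the protocol should be arranged so that on the diagonal inputs we actually care about, exactly one of the two behaviours (both-speak, or one-speaks-one-listens) occurs, and in the both-speak case the round is \emph{silent} rather than spent --- so that an honest adversary supplies one shared bit, which both players are instructed to treat as ``the answer is determined, output $z_0$''. Rounds $2$ and $3$ then route the remaining information in the genuinely classical sub-cases. Once the protocol is written, verifying correctness is a finite case check over the (few) input pairs in the domain and, for the silent round, over the two possible adversary bits; since the adversary is honest there is only one bit to branch on, not two, which is precisely why $3$ rounds suffice. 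I would present this verification as a short table, in the style of Figure~\ref{pi-5}.

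For the lower bound, I would first compute $Z$ to contain at least $9$ distinct values forced on the diagonal (so the trivial leaf-counting bound already gives classical complexity $\ge \lceil\log_2 9\rceil = 4$) --- or, if the value set is smaller, I would instead exhibit a fooling set of size $\ge 9>2^3$. The fooling set $F$ would be a set of diagonal cells $(x,x)$ chosen so that for any two of them $(x,x),(u,u)$ the off-diagonal values $h(x,u)$ or $h(u,x)$ disagree with $h(x,x)$; concretely I would pick the data parts to be all distinct and the control parts to distribute the pairs among the ``both speak'', ``Alice speaks'', ``Bob speaks'' patterns so that any monochromatic rectangle covering two of them leads to a contradiction with totality. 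By the fooling-set fact quoted in Section~\ref{s2}, the matrix then needs $\ge 9$ monochromatic rectangles, hence classical complexity $\ge 4$.

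The main obstacle I anticipate is the simultaneous balancing act in the protocol design: I need the both-speak rounds to land in \emph{silent} configurations (both receiving) rather than \emph{spent} ones for the adversary trick to apply --- recall silence means both players chose ``receive'', which is the opposite of both speaking --- so the construction must be set up so that the players who ``want to talk'' are actually instructed to \emph{receive} and then reconstruct the missing bit from their own input, exploiting that on the relevant inputs each already knows what the other would have said. Getting this self-consistency to hold on the whole domain while still blocking every depth-$3$ classical protocol, all within a matrix small enough to display, is the delicate part; everything else is a bounded case analysis. I would iterate on the sizes of the control set and data set until both the $3$-round honest protocol and the size-$9$ fooling set coexist.
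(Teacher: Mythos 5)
Your high-level strategy is the same as the paper's: one non-classical first round in which a ``control'' symbol tells each player to speak or to listen, followed by classical rounds; the honest adversary guarantees a single shared bit in the silent case; and the lower bound comes from a fooling set of size $>2^3$. However, as written the proposal has a genuine gap: the theorem is an existence statement, and you never produce the function, the protocol, or the fooling set --- you explicitly defer the ``delicate part'' of making a $3$-round honest-adversary protocol and a size-$9$ fooling set coexist, and that balancing act is essentially the entire content of the proof. The paper resolves it with $X=Y=\{\rr,00,01,10,11\}$, $|Z|=3$, and a fooling set of size $10$, and none of this is routine to reconstruct from the sketch.

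Two specific points where your plan would run into trouble. First, you have the spent/silent dichotomy backwards in the protocol design (a round is silent when both players \emph{receive}, and spent when both send), and although you notice this at the end, your fix --- have the players who ``want to talk'' receive instead and ``reconstruct the missing bit from their own input, exploiting that each already knows what the other would have said'' --- imports intuition from the partial-function multiplexor setting; for a \emph{total} function the protocol must produce matching outputs on \emph{every} input pair, including ones where neither player can predict the other's data. Second, you do not pin down the mechanism that makes the outputs consistent under all adversary choices. The paper's device is to make the output a function only of the transcript of the classical rounds $2$--$3$ (so both players automatically agree), and then to observe that an honest adversary can only force the transcripts $00$ or $11$ in the silent case, so only those two need to be identified in the output function --- whereas a malicious adversary would force identifying all four and trivialize the function. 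Your proposal's ``output $z_0$ when the round is silent'' does not explain how a player distinguishes a silent round from a normal one (she cannot: she just receives a bit), nor why the resulting identifications leave the function hard enough to still require $4$ classical rounds. Without these pieces the argument does not close.
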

\begin{proof}
  Let $X=Y=\{\rr,00,01,10,11\}$
and $Z=\{0,1,2\}$.
First we design a 3 round  half-duplex protocol,
computing a total function  $f:X\times Y\to Z$ (against honest adversary) and then we prove that its 
classical complexity
is at least 4.

In that protocol only the first round is not 
classical. In that round both players depending on their inputs
choose one of the three actions.
That action is determined by the first symbol of the input:
$\rr$ means ``receive'', and  0,1 mean  ``send 0,1'', respectively.
In the second round Bob sends a bit and 
Alice receives, and in the third round vice versa.
In those two rounds the players send the bit that was received in the first
round provided they chose to receive in the first round.
Otherwise players send the second bits
of their inputs.

The result of the
protocol is a function of bits sent (and received) in the second and third rounds
(we will call it \emph{the 2--3-transcript} in the sequel). Since those rounds are classical, that
guarantees that  Alice and Bob output the same result,
whatever happens in the first round.
Let us see what is the 2--3-transcript equal to.
If Bob's input is
$\rr$ and  Alice's input is $0,1$,
then 2--3-transcript coincides with Alice's input.
Similarly, if Bob's input is 0,1
and Alice's input is
$\rr$, then  2--3-transcript coincides with 
Bob's input reversed.
If both  Alice's and
Bob's inputs are
$0,1$, then  2--3-transcript is formed from 
the second bits of  Bob and
Alice. 
Finally, if  both  Alice's and
Bob's inputs are equal to  $\rr$, then  2--3-transcript consists of bits chosen
by the adversary. Since we assume honest adversary, it is equal either to 00, or to 11.
The  2--3-transcript is shown in the following table where  Alice's input indicates the row and 
 Bob's input the column:
$$
%\left|
\begin{array}{|c|c|cc|cc|}\hline
&\rr&00 &01 &10 &11\\
\hline
\rr&00\text{ or } 11&00&10&01&11\\
%0&0&2&0&0\\
\hline
00&00&00&10&00&10\\
01 &01&01&11&01&11\\
\hline
10&10&00&10&00&10\\
11&11&01&11&01&11\\
\hline
\end{array}
%\right|
$$
We can see that if the output function
takes the same values on the arguments 00 and 11, then the
protocol
computes a total function.
We will define the output function as the  2--3-transcript 
with identified 00 and 11. Let us represent 2--3-transcripts
by natural numbers. In this way we get the following
function: 
$$
\begin{array}{c|ccccc|}
%\hline
&\rr&00 &01 &10 &11\\
\hline
\rr&0&0&2&1&0\\
%0&0&2&0&0\\
%\hline
00&0&0&2&0&2\\
01 &1&1&0&1&0\\
%\hline
10&2&0&2&0&2\\
11&0&1&0&1&0\\
\hline
\end{array}
%\right)
$$
The key point is the following: since the adversary is honest, we need to identify
only 00 and 11. For a malicious adversary, we would need to identify all four 2--3-transcripts
and the function would become constant.

To show that the  classical   complexity of the
defined function is at least
4, we will find a fooling set of size $10>2^3$. 
The members of that set are colored magenta in 
the matrix:
$$
\left(
\begin{array}{ccccc}
0&0&\hm2&\hm1&\hm0\\
\hm0&0&2&0&\hm2\\
\hm1&1&0&1&0\\
\hm2&\hm0&2&0&2\\
0&\hm1&\hm0&1&0\\

%0&0&\hm2&0&0
\end{array}
\right)\qed 
$$
\renewcommand{\qed}{}
\end{proof}

Recently T.~Kuptsov~\cite{kuptsov} has shown that the half-duplex complexity with 
malicious adversary of this function is the same as its classical complexity, that is, is equal to 4.
Thus this function separates  half-duplex complexity with adversary from half-duplex complexity with honest adversary.
This also means that, to separate half-duplex complexity
with malicious adversary from classical complexity,
we have to look for another example. 

Since fooling sets have the direct product property, this example provides 
a sequence of functions with a linear gap between half-duplex complexity with honest adversary
and classical complexity. More specifically, let the function $f^n: X^n\times Y^n\to Z^n$ be the direct product 
of $n$ copies of $f$, that is, 
$f^n(x_1x_2\dots x_n,y_1y_2\dots y_n)=f(x_1,y_1)f(x_2,y_2)\dots  f(x_n,y_n)$.
Here $X,Y,Z,f$ are sets and the function from the proof of the previous theorem.
The arguments of  $f$ are length-$n$ strings over the alphabets $X,Y$ and
its value is a length-$n$ string over $Z$.

\begin{theorem}
 The classical   complexity   of $f^n$ is at least $\log_2(10)%\cdot 
 n$ while 
its  half-duplex complexity with honest adversary is at most  $3n$.
\end{theorem}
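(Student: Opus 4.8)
The plan is to derive both bounds from the corresponding facts about the single-copy function $f$ established in the previous theorem, using the tensor (direct product) behavior of half-duplex protocols and of fooling sets.

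First I would handle the upper bound for half-duplex complexity with honest adversary. The idea is simply to run the 3-round half-duplex protocol for $f$ coordinate by coordinate: in rounds $3i-2,3i-1,3i$ the players execute the protocol for $f$ on the pair $(x_i,y_i)$, and at the end they concatenate the $n$ single-coordinate outputs. Since the honest adversary supplies the same bit to both players in each silent round, the analysis from the proof of the previous theorem applies in each block of three rounds independently: in block $i$ both players compute the same value $f(x_i,y_i)$ from the $2$--$3$-transcript of that block (with $00$ and $11$ identified). Hence both parties output $f(x_1,y_1)\dots f(x_n,y_n)=f^n(x,y)$, and the protocol uses exactly $3n$ rounds. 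A small point to note is that this composed protocol is still a legitimate half-duplex protocol in the sense of Definition~\ref{def-hd} (the action in each round depends only on the input and the transcript so far), and correctness against an honest adversary composes because the blocks are disjoint in time.

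Next I would prove the lower bound $\log_2(10)\,n$ on the classical complexity of $f^n$. For this I invoke the standard fact that fooling sets have the direct product property: if $F$ is a fooling set for $f$ of size $|F|$, then $F^n=\{((x_1,\dots,x_n),(y_1,\dots,y_n)) : (x_i,y_i)\in F \text{ for all } i\}$ is a fooling set for $f^n$ of size $|F|^n$. Indeed, given two distinct tuples in $F^n$, they differ in some coordinate $i$, and the fooling property of $F$ in that coordinate—together with the fact that $f^n$ is evaluated coordinatewise—guarantees that not all four corner values of $f^n$ on the corresponding $2\times 2$ subconfiguration agree (if they agreed in every coordinate they would in particular agree in coordinate $i$, contradicting that $F$ is fooling for $f$). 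Applying this with the size-$10$ magenta fooling set $F$ exhibited in the proof of the previous theorem yields a fooling set for $f^n$ of size $10^n$. By the fooling-set lower bound recalled in Section~\ref{s2}, every partition of the matrix of $f^n$ into monochromatic rectangles has at least $10^n$ rectangles, hence the classical communication complexity of $f^n$ is at least $\log_2(10^n)=\log_2(10)\,n$.

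I do not anticipate a genuine obstacle here; the result is essentially a packaging of two composition principles. The one place that deserves a careful sentence rather than a wave of the hand is the direct product property of fooling sets for non-Boolean, coordinatewise-defined functions: one must check that the corner condition "not all of $f^n(x,y),f^n(x,v),f^n(u,y),f^n(u,v)$ equal" really does follow from the single-coordinate condition, and this is exactly where the coordinatewise structure of $f^n$ (as opposed to some other combination of the $f(x_i,y_i)$) is used. Everything else—the block structure of the half-duplex protocol, the honesty of the adversary composing over disjoint time windows, and the conversion from fooling-set size to a lower bound on complexity—is routine.
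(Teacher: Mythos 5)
Your proposal is correct and follows essentially the same route as the paper: the upper bound by running the 3-round protocol block by block, and the lower bound by showing that the tensor power $F^n$ of the size-$10$ fooling set is a fooling set for $f^n$ (your contrapositive "if all four corners of $f^n$ agree then they agree in coordinate $i$" is just a cleaner packaging of the paper's six-case analysis). No gaps.
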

\begin{proof}
The second statement is obvious, since we can just 
apply $n$ times the protocol from the previous theorem to compute $n$ copies of $f$.

The lower bound for classical   complexity is proven as follows. Recall that the function $f$ has a fooling set $F$ of size 10.
We claim that its Cartesian power $F^n$ is a fooling set for $f^n$.

For the sake of contradiction assume that
that different pairs $(x,y)$ and $(a,b)$ are in $F^n$ and are in a monochromatic rectangle $R$.
Let $x=x_1\dots x_n$, $y=y_1\dots y_n$, $a=a_1\dots a_n$, $b=b_1\dots b_n$.
Then for some $i$ the pairs $(x_i,y_i)$ and  $(a_i,b_i)$ differ.
W.l.o.g. assume that $i=1$. Since $F$ is a fooling set for $f$,
the rectangle  $\{x_1,a_1\}\times\{y_1,b_1)$ is not monochromatic.
That is, this rectangle has two  pairs on which $f$ has different values. 

There are six possible cases.
Case 1: Assume first that $f(x_1,y_1)\ne f(a_1,b_1)$. Then $f(x,y)\ne f(a,b)$, which is a contradiction, as both pairs $(x,y), (a,b)$ are in $R$.
Case 2: Now assume that $f(x_1,y_1)\ne f(x_1,b_1)$.  Then $f(x,y)\ne f(x,b)$, which is again a contradiction, as both pairs $(x,y), (x,b)$ are in $R$.
The remaining cases are similar to these two cases.

Since $|F^n|=10^n$, the classical complexity of $f^n$ is at least $\log_2|F^n|=n\log_2(10)$.
\end{proof}

\section{A separation of half-duplex complexity
  with malicious adversary from classical complexity
for total functions}\label{s5}

\begin{theorem}\label{th-main}
  There is a total function
  with classical complexity 6
and half-duplex complexity (with malicious adversary) at most 5.
\end{theorem}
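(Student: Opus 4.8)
\emph{Proof plan.} The plan is to establish the two bounds separately, following the outline announced in the introduction. For the upper bound I would exhibit a concrete total function $f\colon X\times Y\to Z$ on a small domain, together with an explicit $5$-round half-duplex protocol computing it against a malicious adversary. As in the honest-adversary construction of Section~\ref{s4}, I would arrange that the last rounds are classical and that the output is a function of the transcript of those classical rounds alone; this makes the protocol automatically immune to whatever bits the adversary injects in silent rounds, provided no player ever re-forwards such a bit in a later round. The place where the honest construction breaks down for a malicious adversary is that a malicious adversary may hand the two players any of the four pairs $00,01,10,11$ in a silent round rather than only $00$ or $11$, so identifying all transcripts that could be confused forces the function to become constant; the new function therefore needs one extra round of slack, with enough redundancy that the ``selection'' information carried by the non-classical round(s) is confirmed by a subsequent classical round before anyone commits to an output. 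Correctness is then a finite case analysis over the action patterns (who sends, who receives, who is silent) in each round.

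For the lower bound, the ordinary fooling-set method cannot reach $6$, since the function is chosen so that its matrix \emph{does} admit a partition into $2^5=32$ monochromatic rectangles. Instead I would use the fact recalled in Section~\ref{s2} that the partition associated with a depth-$d$ classical protocol is built from the trivial partition $\{X\times Y\}$ by a sequence of horizontal and vertical divisions, so in particular the root of the protocol tree performs one such division of $X\times Y$ into $V\sqcup W$. I would single out a family of $25$ pairwise disjoint ``horizontal fooling rectangles'' $R_1,\dots,R_{25}$ in the matrix of $f$ with the property that picking one cell from each $R_i$ always yields a fooling set for $f$ (hence any sub-transversal yields a fooling set as well, since a subset of a fooling set is fooling). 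If the root division is horizontal, then for each side the chosen cells inside that side form a fooling set of size equal to the number of $R_i$ that share a row with that side, so it suffices to show one side meets at least $17$ of the $R_i$ in a row. For this I would consider the graph $G$ on $\{R_1,\dots,R_{25}\}$ with an edge between two rectangles iff they share a row, and prove the expansion property that every set of $25-16=9$ vertices has at least $17$ neighbours. Writing $\mathcal R$ for the set of $R_i$ sharing a row with $V$: if $|\mathcal R|\le 16$ then $\overline{\mathcal R}$ has at least $9$ elements, all of whose rows lie on the $W$-side, so (by monotonicity of the neighbourhood) at least $17$ rectangles share a row with $W$, giving $W$ a fooling set of size $17>2^4$; hence the sub-protocol on $W$ has depth at least $\lceil\log_2 17\rceil=5$, so the whole protocol has depth at least $6$. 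For a vertical root division I would run the symmetric argument with a second family of $25$ ``vertical fooling rectangles'' and the corresponding column-sharing graph. Since every protocol root is one of these two cases, this yields classical complexity at least $6$.

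The main obstacle is the simultaneous design problem: the function must be tame enough to admit a $5$-round malicious-adversary protocol and a partition into $32$ monochromatic rectangles, yet rich enough that \emph{both} the horizontal and the vertical families of $25$ disjoint fooling rectangles exist with their sharing graphs satisfying the $9$-vertices-to-$17$-neighbours expansion bound. The constants leave no slack ($16=2^4$, $25-16=9$, $17>16$), so verifying the two expansion properties is a delicate if finite combinatorial check on two $25$-vertex graphs, and it is this interplay between the protocol, the rectangle partition, and the graph expansion that I expect to be the crux of the argument.
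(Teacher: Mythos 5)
Your plan coincides with the paper's proof in both halves: the protocol has a single non-classical first round followed by four classical rounds, the output depends only on the transcript of rounds 2--5 with the four adversary-reachable transcripts $0000,0101,1010,1111$ identified to a single value, and the lower bound is obtained exactly as you describe, by exhibiting 25 pairwise disjoint fooling rectangles whose row-sharing graph has the 9-vertices-to-17-neighbours expansion property (Lemma~\ref{l2}) and applying this to the root division of any depth-6 protocol. The one place your plan would not go through as literally written is the vertical case: for the paper's function the column-sharing graph of those same 25 rectangles is too disconnected to have the required expansion (its components have sizes 4, 11 and 10, so the columns can be split so that the two sides meet only 15 and 10 rectangles), and the paper repairs this by enlarging the family to 29 fooling rectangles --- shrinking $R_0$ and adding $R_5,R_{10},R_{15},S_0$ carved out of the zero entries --- and proving instead that every 13 vertices have at least 17 neighbours (Lemma~\ref{l5}); your framework absorbs this with the obvious change of constants (every set of $N-16$ of the $N$ rectangles must have at least 17 neighbours), but it is worth noting that the horizontal and vertical families are genuinely different and that the "symmetric" count of 25 does not suffice for columns.
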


The proof of this theorem is similar to that of the previous one but is much more complicated.
The proof is divided into three sections.
In the first section we define a 5 round
half-duplex protocol $\Pi$ to compute 
a total function. 
In the second and third sections we prove
that the
classical complexity of the computed function is at least 6.

\subsection{The  half-duplex protocol}

\subsubsection{A general protocol}\label{ss-general}
%As earlier, we first define a 5 round 
%half-duplex protocol $\Pi$ that computes a total function.
Our general plan is the following.
The first round of the
protocol $\Pi$ is not
classical and all the remaining 4 rounds are  classical, in those rounds the players
recieve bits in the following sequence: Alice,
Bob, Alice, Bob.
The result of the 
protocol is a function of the sequence of bits sent in rounds 2--5
(that sequence will called the \emph{2--5-transcript}).
The inputs of the players consist of one or two
symbols.
The first symbol
of player's input is $\rr,0$ or 
$1$ and 
has the same meaning as before:
it indicates what to do in the first round.
The remaining 
symbol (if any)
indicates how to act in the
remaining 4 rounds. 

 More specifically,
 Bob's input is either $\rr$, or has the form $0\phi$ or  $1\phi$
 where the range of $\phi$ will be defined later.
 Alice's input is either $\rr\eta$, or has the form $0\psi$ or  $1\psi$
 where the ranges of $\eta,\psi$ will be defined later.
 The protocol is the following:
\begin{enumerate}
\item  If  Bob's input is 
  $\rr$, then in the first round 
  Bob receives (let $m$ denote the received bit), then he sends
  $m$ (the  bit he
  just received), then receives again, 
  then again he sends $m$ and then again receives.

\item If  Alice's input
  is $\rr\eta$, then she receives in the first round (let $m$ denote the received bit),
then she receives again (denote the received bit  by $i$),
then sends $m$ (the bit from the first round),
then again receives (let it be $k$), then again sends $m$ if $i=k$ and otherwise (if $i\ne k$)
in the last round  Alice sends a bit that depends on $\eta,i,k$ (how 
that bits depends on them, will be defined later).
\item If  Bob's input is $i\phi$, $i=0,1$,
then Bob sends $i$ in the first round, then he sends a bit that depends on $\phi$,
then receives  (let it be $j$), then sends a bit that depends on 
$j,\phi$, then again receives a bit.
\item  If Alice's input is 
  $j\psi$, $j=0,1$, then 
  Alice sends $j$ in the first round, then she receives
 (let it be $i$), then sends a bit that depends on
 $i,\psi$, then receives  (let it be  $k$) and finally sends a bit that depends on 
 $i,k,\psi$.
\end{enumerate}

According to this 
protocol, on the input pair 
$(\rr\eta, \rr)$ the events of the players are
the following
(assuming that the adversary sends $j$ to Alice and $i$ to Bob):
$$
\begin{array}{|c|c|c|c|c|c|}
\hline
	\text{Round} & 1 & 2 & 3 & 4 & 5 \\ \hline
	\text{Bob's event} & \text{``received $i$''} & \text{``sent $i$''} & \text{``received $j$''} & \text{``sent $i$''} & \text{``received $j$''} \\ \hline 
	\text{Alice's event} & \text{``received $j$''} & \text{``received $i$''} & \text{``sent $j$''} & \text{``received $i$''} & \text{``sent $j$''} \\ \hline 
\end{array}
$$
So far, this is only a general plan of protocol's construction. We have yet to 
define the ranges of $\eta,\phi,\psi$ and to define the  dependencies mentioned above.
However we can already see that, if the first round happens to be silent, then the
2--5-transcript of the protocol is one of these four sequences:
0000, 0101, 1010 and 1111. We will define Alice's and Bob's output
be 0, if the  2--5-transcript of the protocol is one of these four sequences,
and to the 2--5-transcript otherwise.
Such a definition guarantees that even for malicious adversary the computed function is total. 
We have identified only a quarter of 2--5-transcripts, which leaves an opportunity for the computed function be non-trivial.

\subsubsection{The first realization of the plan}
The simplest realization of this plan is the following: let $\eta\in\{0,1\}$ and 
$\phi,\psi\in\{0,1\}^2$ and let the protocol run as follows:
\begin{enumerate}
\item  If  Bob's input is 
  $\rr$, then in the first round 
  Bob receives (let $m$ denote the received bit), then he sends
  $m$, then receives again, 
  then again he sends $m$ and then again receives.

\item If  Alice's input
  is $\rr\eta$, then she receives in the first round (let $m$ denote the received bit),
then she receives again (let it be $i$),
then sends $m$,
then again receives (let it be $k$), then again sends $m$, if $i=k$, and $\eta$ otherwise.
\item If  Bob's input is $i\phi$, $i=0,1$, $\phi=\phi_1\phi_2$,
then Bob sends $i$ in the first round, then he sends $\phi_1$,
then receives, then sends $\phi_2$, then again receives.
\item  If Alice's input is 
  $j\psi$, $j=0,1$,  $\psi=\psi_1\psi_2$, then 
  Alice sends $j$ in the first round, then she receives, then sends  $\psi_1$, 
  then receives  and finally sends 
 $\psi_2$.
\end{enumerate}
As a result, Alice and  Bob produce the following 2--5 transcript
(each transcript is represented by a natural number using its binary expansion):
$$
%\left(
\begin{array}{|c|c|cccc|cccc|}
\hline
\text{Bob's input} \to
&\rr &000&001&010&011 &100&101&110&111\\
\hline
\text{Alice's input}\downarrow&&&&&&&&&\\
\hline
\rr0 &\hr0,\hc{5},\hg{10},\hb{15} &\hr0&2&8&\hg{10} &\hc{5}&6&12&\hb{15}\\
\rr1 &\hr0,\hc{5},\hg{10},\hb{15} &\hr0&3&9&\hg{10} &\hc{5}&7&13&\hb{15}\\
\hline
000& \hr{0} &\hr{0}&2&8&\hg{10}    &\hr{0}&2&8&\hg{10}  \\ 
001&1   &1&3&9&11   &1&3&9&11    \\
010&4 &4&6&12&14 &4&6&12&14\\
011&\hc{5}   &\hc{5}&7&13 &\hb{15}    &\hc{5}&7&13 &\hb{15}    \\
\hline
100&\hg{10} &\hr{0}&2&8&\hg{10}    &\hr{0}&2&8&\hg{10}  \\ 
101&11   &1&3&9&11   &1&3&9&11    \\
110& 14 &4&6&12&14 &4&6&12&14\\
111&\hb{15}   &\hc{5}&7&13 &\hb{15}    &\hc{5}&7&13 &\hb{15}    \\
\hline
\end{array}
%\right)
$$
Thus our protocol computes the function with  
the following matrix 
(zeros have colors indicating the respective 2--5 transcripts):
$$
%\left(
\begin{array}{|c|cccc|cccc|}
\hline
0 &\hr0&2&8&\hg0 &\hc0&6&12&\hb0\\
0 &\hr0&3&9&\hg0 &\hc0&7&13&\hb0\\

\hline\hr0 &0&2&8&\hg0    &\hr0&2&8&\hg0  \\ 1   &1&3&9&11   &1&3&9&11    \\4 &4&6&12&14 &4&6&12&14\\
\hc0   &\hc0&7&13 &\hb0    &\hc0&7&13 &\hb0    \\

\hline\hg0 &\hr0&2&8&\hg0    &\hr0&2&8&\hg0  \\ 11   &1&3&9&11   &1&3&9&11    \\
14 &4&6&12&14 &4&6&12&14\\
\hb0   &\hc0&7&13 &\hb0    &\hc0&7&13 &\hb0    \\

\hline

\end{array}
%\right)
$$
\renewcommand{\hr}[1]{{\color{red}#1}}
\renewcommand{\hb}[1]{{\color{blue}#1}}
\renewcommand{\hc}[1]{{\color{cyan}#1}}
\renewcommand{\hg}[1]{{\color{green}#1}}
By construction the half-duplex complexity of this function
is at most  5. Unfortunately, its classical  complexity is also at most 5. 
To show this, let us note that this matrix has 7 different columns (the 2nd column coincides with the 6th one and the 5th column with the 9th one). 
%We get the following matrix:
%$$
%%\left(
%\begin{array}{|ccccccc|}
%\hline0 &2&8&0 &0&6&12\\0 &3&9&0 &0&7&13\\1 &3&9&11   &1&3&9    \\4 &6&12&14 &4&6&12\\0 &2&8&0    &0&2&8  \\ 
%11&3&9&11   &1&3&9    \\
%14&6&12&14 &4&6&12\\
%50  &7&13 &0    &0&7&13     \\
%\hline
%\end{array}
%\right)
%$$
Each column of this matrix, except the first one, 
has at most  4 different numbers.
Therefore we can compute this 
function in 5 rounds using the following protocol.
First Bob lets Alice know the entire column, corresponding to his input.
If his input is not the first column,
then he sends 3 bits, and otherwise he sends only 2 bits.
In the first case 
Alice in two rounds sends Bob the value of the function. 
In the second case she does that in 3 rounds.

\subsubsection{The second (and final) realization of the plan}
To increase the classical complexity of this 
function, let us use most general dependencies in our %general 
protocol.
Namely, let $\phi$ range over all mappings that  
determine which bits sends 
Bob in the second and fourth rounds in all possible situations.
In other words, $\phi$ is a function with the domain 
$\{\Lambda,0,1\}$ and values   0,1 where  $\phi(\Lambda)$ is the bit sent in the second round and 
$\phi(j)$ is the bit sent in the fourth round after receiving  $j$ in the third round.
Thus Bob has
$1+2\times 8=17$ different inputs.

Similarly, 
$\eta$ ranges over all mappings that determine which bits
sends Alice in all possible situations (assuming that she receives in the first round).
The domain of $\eta$ is the set 
$\{001,101,010,110\}$ where $\eta(mik)$ is the bit which Alice sends
in the fifth round provided she receives  
$m,i,k$ in the first, second and fourth rounds, respectively.

The domain of $\psi$ is $\{0,1, 00, 01,10,11\}$ where
$\psi(i)$ is the bit Alice sends in the third round
provided she received $i$ in the second round
and
$\psi(ik)$ is the bit  Alice sends in the fifth round provided she received
$i,k$ in the second and fourth rounds, respectively.
Thus Alice has 
$2\times 2^6+2^4=144$  different inputs.

Now our half-duplex protocol is well defined.
By construction it computes a total function and we denote that function
by $U$. 
We will prove that 
the  classical communication 
complexity of $U$ is 6. 

The matrix of $U$
is very large, its size is $144\times17$, and therefore that proof cannot be visualized. 
Because of that, we found  two moderately small sub-functions of $U$
with the same classical complexity. By a sub-function
here we mean a restriction of $U$ on a set of the form
$X'\times Y'$ where $X'\subset X$, $Y'\subset Y$.
In terms of the matrix of the function,
this means deleting some rows and columns form the matrix. Note that
restricting the function cannot increase its 
half-duplex complexity, thus the  
half-duplex complexity of any sub-function of
$U$ is at most 5.

\subsubsection{Sub-functions of $U$ with communication complexity 6: the function $S$}
\newcommand{\five}{0}
\newcommand{\ten}{0}
\newcommand{\fifteenb}{0}
\newcommand{\tenb}{0}
\newcommand{\fifteen}{0}
\newcommand{\fivey}{0}
\newcommand{\fiveb}{0}
\newcommand{\teny}{0}
\newcommand{\fifteeny}{0}
\newcommand{\zeroy}{ 0}
\newcommand{\zerob}{0}
\newcommand{\zeror}{0}
\renewcommand{\hr}[1]{{#1}}
\renewcommand{\hg}[1]{{#1}}
\renewcommand{\hc}[1]{{#1}}
\renewcommand{\hb}[1]{{#1}}
\renewcommand{\hm}[1]{{#1}}

The smallest sub-function of $U$ with classical complexity 6 we managed to find
 has the matrix shown on Fig.~\ref{pic111}.
We denote this sub-function by $S$.
\begin{figure}
$$
\begin{array}{|c|cccccc|cccccc|}
\hline
0&0&0&2&9&0&0  &0 &7&7&12&12&0   \\
0&0&0&3&8&0&0 &0&7&7&13&13&0\\
0&0&0&2&8&0&0 &0&6&6&13&13&0\\\hline
0&0&7&7&13&13&0 &0&7&7&13&13&0\\
0&0 &6&6&12&12&14  &0 &6&6&12&12&14   \\
4&4 &6&6&12&12&14 &4 &6&6&12&12&14 \\
4&4 &6&6&9 &11&11 &4 &6&6&9 &11&11 \\
1&1&1&3&9&0&0 & 1&1&3&9&0&0 \\ 
0&0&0&2&8&0&0 & 0&0&2&8&0&0 \\ 
\hline
0&0&7&7&13&13&0 &0&7&7&13&13&0\\
14&0 &6&6&12&12&14  &0 &6&6&12&12&14   \\
14&4 &6&6&12&12&14 &4 &6&6&12&12&14 \\
11&4 &6&6&9 &11&11 &4 &6&6&9 &11&11 \\
0&1&1&3&9&0&0 & 1&1&3&9&0&0 \\ 
0&0&0&2&8&0&0 & 0&0&2&8&0&0 \\ 
\hline
\end{array}
%\right)
$$
\caption{The matrix of the function $S$.
  Its classical  complexity is 6 and its half-duplex complexity is at most 5.}\label{pic111}
\end{figure}
Communication complexity of $S$ was found on a computer in 4 minutes using the dynamic programming. 
Since this computation cannot be performed by hand, we continued to look for a sub-function
of $U$ with a moderately small domain for which we can prove by hand the lower bound 6 for the
classical complexity. 

\subsubsection{Sub-functions of $U$ with communication complexity 6: the function $M$}
The matrix of the  function $M$ we found is shown on Fig.~\ref{pic2}.
\begin{figure}
$$
\begin{array}{|c|ccccccc|ccccccc|}
\hline0&\zeroy &\hg{2}&\zeroy&\hg{2}&\hg{8}&\hg{8}&\teny  &\fivey &\hg{6 }&\hg{6 }&\hg{12 }&\fifteeny &\hg{12}&\fifteeny\\ 

0&\zeroy &\hg{2}&\zeroy&\hg{2}&\hg{8}&\hg{8}&\teny  &\fivey &\hg{6 }&\hg{6 }&\hg{13 }&\fifteeny &\hg{13}&\fifteeny\\ 

0&\zeroy &\hg{2}&\zeroy&\hg{2}&\hg{9}&\hg{9}&\teny  &\fivey &\hg{7 }&\hg{7 }&\hg{12 }&\fifteeny &\hg{12}&\fifteeny\\

0&\zeroy &\hg{3}&\zeroy&\hg{3}&\hg{8}&\hg{8}&\teny  &\fivey &\hg{7 }&\hg{7 }&\hg{13 }&\fifteeny &\hg{13}&\fifteeny\\  

0&\zeroy &\hg{3}&\zeroy&\hg{3}&\hg{9}&\hg{9}&\teny  &\fivey &\hg{6 }&\hg{6 }&\hg{12 }&\fifteeny &\hg{12}&\fifteeny\\ 

0 &\zeroy&\hg{3}&\zeroy&\hg{3}&\hg{9}&\hg{9}&\teny  &\fivey &\hg{7 }&\hg{7 }&\hg{13 }&\fifteeny &\hg{13}&\fifteeny\\ 
\hline
0&0&2&0&2&8&8&\hr{11}&\hr{2}&\zeror&\hr{2}&\hr{8}&\hr{8}&\hr{11}&\hr{11} \\

0&0&3&0&3&\hr{12}&\hr{14}&\hr{12}&\hr{3}&\zeror&\hr{3}&12&\hr{14}&12&\hr{14}\\

\hg1&1&2&1&2&9&9&\hr{11}&\hr{2}&1&\hr{2}&9&9&\hr{11}&\hr{11}\\

\hg1&1&2&1&2&\hr{13}&\hr{14}&\hr{13} &\hr{2}&1&\hr{2}&13&\hr{14}&13&\hr{14}\\

\hg1&1&3&1&3&\hr{12}&\hr{14}&\hr{12}&\hr{3}&1&\hr{3}&12&\hr{14}&12&\hr{14}\\

\hg1&1&3&1&3&8&8&\hr{11} &\hr{3}&1&\hr{3}&\hr{8}&\hr{8}&\hr{11}&\hr{11}\\

\hg4  &4&4&\hr{6}&\hr{6}&9&9&\hr{11}   &4&6&6&9&9&\hr{11}&\hr{11}\\

\hg4  &4&4&\hr{7}&\hr{7}&8&8&\hr{11}   &4&7&7&\hr{8}&\hr{8}&\hr{11}&\hr{11}\\

\hg4  &4&4&\hr{6}&\hr{6}&\hr{12}&\hr{14}&\hr{12}   &4&6&6&12&\hr{14}&12&\hr{14}   \\

\hg4  &4&4&\hr{7}&\hr{7}&\hr{13}&\hr{14}&\hr{13}    &4&7&7&13&\hr{14}&13 &\hr{14}  \\

\fiveb  &\fiveb&\five&\hr{7}&\hr{7}&8&8&\hr{11}   &\fiveb&7&7&\hr{8}&\hr{8}&\hr{11}&11  \\
\hline

\hg{11} &0&2&0&2&9&9&11&\hr{2}&\zeror&\hr{2}&\hr{9}&\hr{9}&11&11\\

\hg{14}&0&3&0&3&\hr{13}&14&\hr{13}&\hr{3}&\zeror&\hr{3}&13&14&13&14\\

\hg{11}&\hr1&2&1&2&8&8&11&\hr{2}&\hr{1}&\hr{2}&\hr{8}&\hr{8}&11&\hr{11}\\

\hg{14}&\hr1&2&\hr{1}&2&\hr{12}&14&\hr{12}&\hr{2}&\hr{1}&\hr{2}&12&14&12&\hr{14}\\

\hg{11}&\hr1&3&\hr{1}&3&9&9&11 &\hr{3}&\hr{1}&\hr{3}&\hr{9}&\hr{9}&11&\hr{11}\\

\hg{14}&\hr1&3&\hr{1}&3&\hr{13}&14&\hr{13}&\hr{3}&\hr{1}&\hr{3}&13&14&13&\hr{14}\\

\hg{11} &\hr{4} &\hr{4}&\hr{6}&\hr{6}&8&8&11   &\hr{4}&6&6&\hr{8}&\hr{8}&11&\hr{11}\\

\hg{14} &\hr{4} &\hr{4}&\hr{6}&\hr{6}&\hr{13}&14&\hr{13}   &\hr{4}&6&6&13&14&13  &\hr{14}  \\

\hg{11}&\hr{4}  &\hr{4}&\hr{7}&\hr{7}&9&9&11   &\hr{4}&7&7&\hr{9}&\hr{9}&11&\hr{11}\\

\tenb &\hr{4} &\hr{4}&\hr{7}&\hr{7}&9&9&\tenb   &\hr{4}&7&7&\hr{9}&\hr{9}&\ten&\tenb\\

\fifteenb &\hr{4} &\hr{4}&\hr{6}&\hr{6}&\hr{12}&\fifteen&\hr{12}   &\hr{4}&6&6&12&\fifteenb&12 &\fifteenb  \\

\hg{14}&\hr{4}  &\hr{4}&\hr{7}&\hr{7}&\hr{12}&14&\hr{12}    &\hr{4}&7&7&12&14&12 &\hr{14} \\
\hline
\end{array}
%\right)
$$
\caption{The matrix of the function $M$. Its classical
  communication complexity is  6, while its half-duplex complexity is at most 5.}\label{pic2}
\end{figure}
On Fig.~\ref{pic3} we indicated, on the left and on the top,
Alice's and Bob's inputs. Contemplating this matrix, we can imagine how the matrix of $U$ looks like. 
The domain of $M$ is the smallest sub-domain of  $U$
for which our method to prove the lower bound 6 works. Later we will explain this in more detail. 

Looking at Fig.~\ref{pic3} one can verify that each entry in the matrix is indeed the result
of the above described half-duplex protocol $\Pi$ on the corresponding input pair. Also one can verify that  matrices on Fig.~\ref{pic2}
and Fig.~\ref{pic3} are identical. This implies that 
the function with the matrix shown on Fig.~\ref{pic3} 
is a sub-function of $U$ and hence its half-duplex complexity is at most 5.
However both verifications are quite time consuming and therefore we will explain in the next section
how to verify faster that its half-duplex complexity is at most 5. 
\begin{figure}
$$
\begin{array}{|c|c|c|c|c|c|c|c|c|ccccccc|ccccccc|}
\hline
\multicolumn{6}{|c|}{}&\multicolumn{3}{|c|}{\text{Arg. of func. }\phi\downarrow} &\multicolumn{14}{|c|}{\text{The value of function }\phi\downarrow}\\
\hline
\multicolumn{6}{|c|}{}&\multicolumn{3}{|c|}{\Lambda}&0 &0&0&0&1&1&1&0&0&0&1&1&1&1\\
\multicolumn{6}{|c|}{}&\multicolumn{3}{|c|}{0 }              &0 &1&0&1&0&0&1&1&0&1&0&0&1&1\\
\multicolumn{6}{|c|}{}&\multicolumn{3}{|c|}{1 }              &0 &0&1&1&0&1&0&0&1&1&0&1&0&1\\
\hline
&&&&&&&\text{Inp.  B} \to&\rr
&\multicolumn{7}{|c|}{0\phi}&\multicolumn{7}{|c|}{1\phi}\\
\hline\text{A.}\eta&&&001&010&101&110& \text{Inp. A}\downarrow& &&&&&&&&&&&&&&\\  
\hline
&&&0&0&0&0&  & 0&\zeroy &\hg{2}&\zeroy&\hg{2}&\hg{8}&\hg{8}&\teny  &\fivey &\hg{6 }&\hg{6 }&\hg{12 }&\fifteeny &\hg{12}&\fifteeny\\ 
&&&0&0&0&1&  & 0&\zeroy &\hg{2}&\zeroy&\hg{2}&\hg{8}&\hg{8}&\teny  &\fivey &\hg{6 }&\hg{6 }&\hg{13 }&\fifteeny &\hg{13}&\fifteeny\\ 
&&&0&1&1&0&  & 0&\zeroy &\hg{2}&\zeroy&\hg{2}&\hg{9}&\hg{9}&\teny  &\fivey &\hg{7 }&\hg{7 }&\hg{12 }&\fifteeny &\hg{12}&\fifteeny\\  
 \text{V.}\eta
&&&1&0&1&1&\rr\eta & 0&\zeroy &\hg{3}&\zeroy&\hg{3}&\hg{8}&\hg{8}&\teny  &\fivey &\hg{7 }&\hg{7 }&\hg{13 }&\fifteeny &\hg{13}&\fifteeny\\  
&&&1&1&0&0& & 0&\zeroy &\hg{3}&\zeroy&\hg{3}&\hg{9}&\hg{9}&\teny  &\fivey &\hg{6 }&\hg{6 }&\hg{12 }&\fifteeny &\hg{12}&\fifteeny\\ 
&&&1&1&1&1& & 0 &\zeroy&\hg{3}&\zeroy&\hg{3}&\hg{9}&\hg{9}&\teny  &\fivey &\hg{7 }&\hg{7 }&\hg{13 }&\fifteeny &\hg{13}&\fifteeny\\ 
\hline
\text{A.}\psi &0&1&00&01&10&11&&
&&&&&&&&&&&&&&\\
\hline
&0&0&0&0&0&1&
&0&\hr0&2&0&2&8&8&\hr{11}&\hr{2}&\zeror&\hr{2}&\hr{8}&\hr{8}&\hr{11}&\hr{11} \\
&0&1&0&1&0&0&
&0&\hr0&3&0&3&\hr{12}&\hr{14}&\hr{12}&\hr{3}&\zeror&\hr{3}&12&\hr{14}&12&\hr{14}\\
&0&0&1&0&1&1&
&\hg1&1&2&1&2&9&9&\hr{11}&\hr{2}&1&\hr{2}&9&9&\hr{11}&\hr{11}\\
&0&1&1&0&1&0&
&\hg1&1&2&1&2&\hr{13}&\hr{14}&\hr{13} &\hr{2}&1&\hr{2}&13&\hr{14}&13&14\\
&0&1&\hr1&1&0&0&
&\hg1&1&3&1&3&\hr{12}&\hr{14}&\hr{12}&\hr{3}&1&\hr{3}&12&\hr{14}&12&\hr{14}\\
\text{V.} \psi
&0&0&1&1&0&1&0\psi
&\hg1&1&3&1&3&8&8&\hr{11} &\hr{3}&1&\hr{3}&\hr{8}&\hr{8}&\hr{11}&\hr{11}\\
&1&0&0&0&1&1&  
&\hg4  &4&4&\hr{6}&\hr{6}&9&9&\hr{11}   &4&6&6&9&9&\hr{11}&\hr{11}\\
&1&0&0&1&0&1&  
&\hg4  &4&4&\hr{7}&\hr{7}&8&8&\hr{11}   &4&7&7&\hr{8}&\hr{8}&\hr{11}&\hr{11}\\
&1&1&0&0&0&0&  
&\hg4  &4&4&\hr{6}&\hr{6}&\hr{12}&\hr{14}&\hr{12}   &4&6&6&12&\hr{14}&12&\hr{14}   \\
&1&1&0&1&1&0&  
&\hg4  &4&4&\hr{7}&\hr{7}&\hr{13}&\hr{14}&\hr{13}    &4&7&7&13&\hr{14}&13 &\hr{14}  \\
&1&0&1&1&0&1&  
&\fiveb  &\five&\five&\hr{7}&\hr{7}&8&8&\hr{11}   &\fiveb&7&7&\hr{8}&\hr{8}&\hr{11}&11   \\
\hline
&0&0&0&0&1&1&
&\hg{11} &0&2&0&2&9&9&11&\hr{2}&\zeror&\hr{2}&\hr{9}&\hr{9}&11&11\\
&0&1&0&1&1&0&
&\hg{14}&0&3&0&3&\hr{13}&14&\hr{13}&\hr{3}&\zeror&\hr{3}&13&14&13&14\\
&0&0&1&0&0&1&
&\hg{11}&\hr1&2&1&2&8&8&11&\hr{2}&\hr{1}&\hr{2}&\hr{8}&\hr{8}&11&\hr{11}\\
&0&1&1&0&0&0&
&\hg{14}&\hr1&2&\hr{1}&2&\hr{12}&14&\hr{12}&\hr{2}&\hr{1}&\hr{2}&12&14&12&\hr{14}\\
&0&0&1&1&1&1&
&\hg{11}&\hr1&3&\hr{1}&3&9&9&11 &\hr{3}&\hr{1}&\hr{3}&\hr{9}&\hr{9}&11&\hr{11}\\
\text{V.} \psi
&0&1&1&1&1&0&1\psi
&\hg{14}&\hr1&3&\hr{1}&3&\hr{13}&14&\hr{13}&\hr{3}&\hr{1}&\hr{3}&13&14&13&\hr{14}\\
&1&0&0&0&0&1&  
&\hg{11} &\hr{4} &\hr{4}&\hr{6}&\hr{6}&8&8&11   &\hr{4}&6&6&\hr{8}&\hr{8}&11&\hr{11}\\
&1&1&0&0&1&0&  &\hg{14} &\hr{4} &\hr{4}&\hr{6}&\hr{6}&\hr{13}&14&\hr{13}   &\hr{4}&6&6&13&14&13  &\hr{14}  \\
&1&0&0&1&1&1&  &\hg{11}&\hr{4}  &\hr{4}&\hr{7}&\hr{7}&9&9&11   &\hr{4}&7&7&\hr{9}&\hr{9}&11&\hr{11}\\
&1&0&0&1&1&0&  &\tenb &\hr{4} &\hr{4}&\hr{7}&\hr{7}&9&9&\tenb   &\hr{4}&7&7&\hr{9}&\hr{9}&\ten&\ten\\
&1&1&0&0&0&1& &\fifteenb &\hr{4} &\hr{4}&\hr{6}&\hr{6}&\hr{12}&\fifteen&\hr{12}   &\hr{4}&6&6&12&\fifteenb&12 &\fifteenb  \\
&1&1&0&1&0&0&  &\hg{14}&\hr{4}  &\hr{4}&\hr{7}&\hr{7}&\hr{12}&14&\hr{12}    &\hr{4}&7&7&12&14&12 &\hr{14} \\
\hline
\end{array}
%\right)
$$
\caption{The matrix of function $M$ with an indication of Alice's and Bob's inputs.
``A.''  means ``Argument of'', ``V.''  means ``Value of'',
``Inp. A''  means ``Input of Alice'' and ``Inp. B''  means ``Input of Bob''. 
}\label{pic3}
\end{figure}

\subsubsection{A fast verification that the half-duplex complexity of  
the function with matrix on Fig.~\ref{pic2} is at most 5}

We exhibit the  partition the matrix from Fig.~\ref{pic2} into monochromatic rectangles  
corresponding to leaves of Alice's and Bob's trees (in the protocol $\Pi$).
Such a partition is shown on Fig.~\ref{pic2a}. The double lines partition the matrix
according to events happening in the first round.
For example, the bottom right rectangle consists of all input pairs for which both Alice and
 Bob send 1, the upper right rectangle consists of all input pairs for which Alice
 receives 1 from Bob  and the upper left rectangle of all input pairs for which both parties receive in the first
 round. These rectangles will be denoted by  $P_{\rr\rr}, P_{\rr0}, P_{\rr1},P_{0\rr}, P_{1\rr},P_{00}, P_{01}, P_{10}, P_{11}$.

In Fig.~\ref{pic2a}, we have restored the 2--5-transcripts, that is, 0, 5, 10 and 15 are not yet identified.
The only exception is the rectangle  $P_{\rr\rr}$, in which all  2--5-transcripts  0, 5, 10 and 15 can appear.
The number in each cell indicates in which part  
(0-part or 1-part) the respective cell falls in each round. Namely, its leading bit is equal to the bit sent
by Bob in the second round, the second bit   is equal to the bit sent
by Alice in the third  round, and so on. In the second round
all numbers are partitioned into small ones
(less than 8) and large ones (larger than or equal to 8).
This  partition is indicated by vertical lines. In the third round the numbers are partitioned
according to the second bit: red numbers have 0 and blue ones have 1.
In the fourth round the numbers are partitioned
according to the third bit, those with 1 as the third bit are shown in italic.  Finally,
in the fifth round the numbers are partitioned
into even and odd ones.
\renewcommand{\five}{5}
\renewcommand{\ten}{10}
\renewcommand{\fifteenb}{15}
\renewcommand{\tenb}{10}
\renewcommand{\fifteen}{15}
\renewcommand{\fivey}{5}
\renewcommand{\fiveb}{5}
\renewcommand{\teny}{10}
\renewcommand{\fifteeny}{10}
\renewcommand{\zeroy}{ 0}
\renewcommand{\zerob}{0}
\renewcommand{\zeror}{0}
\newcommand{\uu}[1]{{\color{red}#1}}
\renewcommand{\hg}[1]{{#1}}
\renewcommand{\hc}[1]{{#1}}
\newcommand{\vv}[1]{{\color{blue}#1}}
\renewcommand{\hr}[1]{{#1}}
\renewcommand{\hm}[1]{{#1}}
\newcommand{\nott}[1]{\it{#1}}

\begin{figure}
$$
\begin{array}{||c||cccc|ccc||ccc|cccc||}\hline
\hline
%\uu{0},\vv5,\uu{\nott{10}} ,\vv{\nott{15}}
0&\uu{0} &\uu{\nott 2}&\uu{0}&\uu{\nott2}&\uu{8}&\uu{8}&\uu{\nott{10}}  &\vv5 &\vv{\nott6}&\vv{\nott6}&\vv{12}&\vv{\nott{15}} &\vv{12}&\vv{\nott{15}}\\ 

%\uu{0},\vv5,\uu{\nott{10}} ,\vv{\nott{15}}
0&\uu{0} &\uu{\nott2}&\uu0&\uu{\nott2}&\uu{8}&\uu{8}&\uu{\nott{10}} &\vv5 &\vv{\nott6 }&\vv{\nott6 }&\vv{13 }&\vv{\nott{15}} &\vv{13}&\vv{\nott{15}}\\ 

%\uu{0},\vv5,\uu{\nott{10}} ,\vv{\nott{15}}
0&\uu{0} &\uu{\nott2}&\uu{0}&\uu{\nott2}&\uu{9}&\uu{9}&\uu{\nott{10}}  &\vv5 &\vv{\nott7}&\vv{\nott7}&\vv{12}&\vv{\nott{15}} &\vv{12}&\vv{\nott{15}}\\

%\uu{0},\vv5,\uu{\nott{10}} ,\vv{\nott{15}}
0&\uu{0} &\uu{\nott3}&\uu{0}&\uu{\nott3}&\uu{8}&\uu{8}&\uu{\nott{10}}  &\vv5 &\vv{\nott7}&\vv{\nott7}&\vv{13}&\vv{\nott{15}} &\vv{13}&\vv{\nott{15}}\\  

%\uu{0},\vv5,\uu{\nott{10}} ,\vv{\nott{15}}
0&\uu{0} &\uu{\nott3}&\uu{0}&\uu{\nott3}&\uu{9}&\uu{9}&\uu{\nott{10}}  &\vv5 &\vv{\nott6}&\vv{\nott6}&\vv{12}&\vv{\nott{15}} &\vv{12}&\vv{\nott{15}}\\ 

%\uu{0},\vv5,\uu{\nott{10}} ,\vv{\nott{15}}
0&\uu{0}&\uu{\nott3}&\uu{0}&\uu{\nott3}&\uu{9}&\uu{9}&\uu{\nott{10}}  &\vv5 &\vv{\nott7}&\vv{\nott7}&\vv{13}&\vv{\nott{15}} &\vv{13}&\vv{\nott{15}}\\ 

\hline\hline
\uu0&\uu0&\uu{\nott2}&\uu0&\uu{\nott2}&\uu8&\uu8&\uu{\nott{11}}&\uu{\nott2}&\uu0&\uu{\nott2}&\uu{8}&\uu{8}&\uu{\nott{11}}&\uu{\nott{11}} \\

\uu0&\uu0&\uu{\nott3}&\uu0&\uu{\nott3}&\vv{12}&\vv{\nott{14}}&\vv{12}&\uu{\nott3}&\uu0&\uu{\nott3}&\vv{12}&\vv{\nott{14}}&\vv{12}&\vv{\nott{14}}\\

\uu1&\uu1&\uu{\nott2}&\uu1&\uu{\nott2}&\uu9&\uu9&\uu{\nott{11}}&\uu{\nott2}&\uu1&\uu{\nott2}&\uu9&\uu9&\uu{\nott{11}}&\uu{\nott{11}}\\

\uu1&\uu1&\uu{\nott2}&\uu1&\uu{\nott2}&\vv{13}&\vv{\nott{14}}&\vv{13} &\uu{\nott2}&\uu1&\uu{\nott2}&\vv{13}&\vv{\nott{14}}&\vv{13}&\vv{\nott{14}}\\

\uu1&\uu1&\uu{\nott3}&\uu1&\uu{\nott3}&\vv{12}&\vv{\nott{14}}&\vv{12}&\uu{\nott3}&\uu1&\uu{\nott3}&\vv{12}&\vv{\nott{14}}&\vv{12}&\vv{\nott{14}}\\

\uu1&\uu1&\uu{\nott3}&\uu1&\uu{\nott3}&\uu8&\uu8&\uu{\nott{11}} &\uu{\nott3}&\uu1&\uu{\nott3}&\uu{8}&\uu{8}&\uu{\nott{11}}&\uu{\nott{11}}\\

\vv{4}  &\vv4&\vv4&\vv{\nott6}&\vv{\nott6}&\uu9&\uu9&\uu{\nott{11}}   &\vv4&\vv{\nott6}&\vv{\nott6}&\uu9&\uu9&\uu{\nott{11}}&\uu{\nott{11}}\\

\vv{4}  &\vv4&\vv4&\vv{\nott7}&\vv{\nott7}&\uu8&\uu8&\uu{\nott{11}}   &\vv4&\vv{\nott7}&\vv{\nott7}&\uu{8}&\uu{8}&\uu{\nott{11}}&\uu{\nott{11}}\\

\vv{4}  &\vv4&\vv4&\vv{\nott6}&\vv{\nott6}&\vv{12}&\vv{\nott{14}}&\vv{12}   &\vv4&\vv{\nott6}&\vv{\nott6}&\vv{12}&\vv{\nott{14}}&\vv{12}&\vv{\nott{14}}   \\

\vv{4}  &\vv4&\vv4&\vv{\nott7}&\vv{\nott7}&\vv{13}&\vv{\nott{14}}&\vv{13}    &\vv4&\vv{\nott7}&\vv{\nott7}&\vv{13}&\vv{\nott{14}}&\vv{13} &\vv{\nott{14}}  \\

\vv5  &\vv5&\vv5&\vv{\nott7}&\vv{\nott7}&\uu8&\uu8&\uu{\nott{11}}   &\vv5&\vv{\nott7}&\vv{\nott7}&\uu{8}&\uu{8}&\uu{\nott{11}}&\uu{\nott{11}}  \\
\hline\hline

\uu{\nott{11}} &\uu0&\uu{\nott2}&\uu0&\uu{\nott2}&\uu9&\uu9&\uu{\nott{11}}&\uu{\nott2}&\uu0&\uu{\nott2}&\uu{9}&\uu{9}&\uu{\nott{11}}&\uu{\nott{11}}\\

\vv{\nott{14}}&\uu0&\uu{\nott3}&\uu0&\uu{\nott3}&\vv{13}&\vv{\nott{14}}&\vv{13}&\uu{\nott3}&\uu0&\uu{\nott3}&\vv{13}&\vv{\nott{14}}&\vv{13}&\vv{\nott{14}}\\

\uu{\nott{11}}&\uu1&\uu{\nott2}&\uu1&\uu{\nott2}&\uu8&\uu8&\uu{\nott{11}}&\uu{\nott2}&\uu1&\uu{\nott2}&\uu{8}&\uu{8}&\uu{\nott{11}}&\uu{\nott{11}}\\

\vv{\nott{14}}&\uu1&\uu{\nott2}&\uu{1}&\uu{\nott2}&\vv{12}&\vv{\nott{14}}&\vv{12}&\uu{\nott2}&\uu1&\uu{\nott2}&\vv{12}&\vv{\nott{14}}&\vv{12}&\vv{\nott{14}}\\

\uu{\nott{11}}&\uu1&\uu{\nott3}&\uu{1}&\uu{\nott3}&\uu9&\uu9&\uu{\nott{11}} &\uu{\nott3}&\uu1&\uu{\nott3}&\uu{9}&\uu{9}&\uu{\nott{11}}&\uu{\nott{11}}\\

\vv{\nott{14}}&\uu1&\uu{\nott3}&\uu{1}&\uu{\nott3}&\vv{13}&\vv{\nott{14}}&\vv{13}&\uu{\nott3}&\uu1&\uu{\nott3}&\vv{13}&\vv{\nott{14}}&\vv{13}&\vv{\nott{14}}\\

\uu{\nott{11}} &\vv4 &\vv4&\vv{\nott6}&\vv{\nott6}&\uu8&\uu8&\uu{\nott{11}}   &\vv4&\vv{\nott6}&\vv{\nott6}&\uu{8}&\uu{8}&\uu{\nott{11}}&\uu{\nott{11}}\\

\vv{\nott{14}} &\vv4 &\vv4&\vv{\nott6}&\vv{\nott6}&\vv{13}&\vv{\nott{14}}&\vv{13}   &\vv4&\vv{\nott6}&\vv{\nott6}&\vv{13}&\vv{\nott{14}}&\vv{13}  &\vv{\nott{14}}  \\

\uu{\nott{11}}&\vv4  &\vv4&\vv{\nott7}&\vv{\nott7}&\uu9&\uu9&\uu{\nott{11}}   &\vv4&\vv{\nott7}&\vv{\nott7}&\uu{9}&\uu{9}&\uu{\nott{11}}&\uu{\nott{11}}\\

\uu{\nott{10}} &\vv4 &\vv4&\vv{\nott7}&\vv{\nott7}&\uu9&\uu9&\uu{\nott{10}}   &\vv4&\vv{\nott7}&\vv{\nott7}&\uu{9}&\uu{9}&\uu{\nott{10}}&\uu{\nott{10}}\\

\vv{\nott{15}}&\vv4 &\vv4&\vv{\nott6}&\vv{\nott6}&\vv{12}&\vv{\nott{15}}&\vv{12}   &\vv4&\vv{\nott6}&\vv{\nott6}&\vv{12}&\vv{\nott{15}}&\vv{12} &\vv{\nott{15}} \\

\vv{\nott{14}}&\vv4  &\vv4&\vv{\nott7}&\vv{\nott7}&\vv{12}&\vv{\nott{14}}&\vv{12}    &\vv4&\vv{\nott7}&\vv{\nott7}&\vv{12}&\vv{\nott{14}}&\vv{12} &\vv{\nott{14}} \\

\hline\hline

\end{array}
%\right)
$$
\caption{Partition of the matrix of
  $M$ into monochromatic rectangles by the protocol $\Pi$.}\label{pic2a}
\end{figure}
It is not hard to verify that this partition corresponds 
to the  protocol described in Section~\ref{ss-general}. It is also easy to check that 
all the rectangles in the partition are monochromatic.
For example, all the 5 rectangles consisting of large red slanted odd numbers 
are monochromatic: all their entries are equal to 11.

\subsection{The lower bound of communication complexity of $U$ and $M$.}

It remains to prove that the 
classical complexity of functions
$U,M$ is at least 6. In the proof for $M$ it will be easier 
to forget that $M$ is a sub-function of $U$ and instead define $M$
as the function whose matrix is shown on Fig.~\ref{pic2}.

\begin{theorem}\label{th3} The
  classical  communication  complexity of  $U,M$ is at least  6.
\end{theorem}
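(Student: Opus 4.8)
The plan is to prove the bound for $M$ directly and to deduce it for $U$: since $M$ is a sub-function of $U$, restricting every node function of a classical protocol for $U$ to the sub-domain of $M$ gives a protocol for $M$ of the same depth, so the classical complexity of $U$ is at least that of $M$. So assume, for contradiction, that $M$ has a classical protocol of depth at most $5$. By the facts recalled in Section~\ref{s2}, this protocol yields a partition of the matrix of $M$ into monochromatic rectangles, obtained from the trivial partition $\{X\times Y\}$ by at most $5$ successive horizontal or vertical divisions; in particular the first division, at the root, is either horizontal or vertical. I would then show that, whichever it is, one of the two resulting sub-matrices already needs more than $2^4=16$ monochromatic rectangles, which is impossible since it is cut by only $4$ further divisions. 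Note that each such sub-matrix is itself a combinatorial rectangle, so a fooling set of size $s$ contained in it forces at least $s$ rectangles in any partition of it into monochromatic rectangles.

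For the horizontal case I would exhibit a family of $25$ pairwise disjoint \emph{fooling rectangles} inside the matrix of $M$ (read off from Fig.~\ref{pic2}) with the property that choosing one arbitrary cell from each rectangle always yields a fooling set for $M$; since any subset of a fooling set is again a fooling set, the same holds for any sub-collection of these rectangles. On these $25$ rectangles form the graph $G$ whose edges join two rectangles sharing a common row. The key combinatorial lemma is that $G$ \emph{expands}: every $9$-element subset of its vertices, together with its $G$-neighbours, has at least $17$ elements. Granting this, fix a horizontal first division into row-sub-matrices $V$ and $W$, and let $\mathcal R$ be the set of fooling rectangles having a row inside $V$. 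If $|\mathcal R|\ge 17$, then picking in each rectangle of $\mathcal R$ a cell lying in $V$ gives a fooling set of size $17$ inside $V$. Otherwise $|\overline{\mathcal R}|\ge 25-16=9$; every rectangle of $\overline{\mathcal R}$ lies entirely within the rows of $W$, so by the expansion lemma $\overline{\mathcal R}$ together with its $G$-neighbours has at least $17$ rectangles, each of which shares a row with $W$, and picking in each of them a cell lying in $W$ gives a fooling set of size $17$ inside $W$. Either way one of $V,W$ contains a fooling set of size $17>16$, contradicting the bound above. The vertical case is symmetric: it uses a second family of $25$ fooling rectangles, the graph whose edges join rectangles sharing a common column, and the analogous expansion statement, with ``row'' replaced throughout by ``column''.

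The two steps I expect to be the real work are both finite but delicate. First, one must verify that each of the two $25$-element families genuinely has the fooling-rectangle property: for any two of the chosen rectangles and any cells selected in them, the four corresponding entries of $M$ are not all equal. Second, and harder, one must prove the expansion property (``$9$ rectangles see at least $17$'') for each of the two graphs. Since $\binom{25}{9}$ is far too large to enumerate by hand, this cannot be a brute-force case check; it should instead exploit the structure of the graphs --- in particular how the fooling rectangles are distributed among the rows (resp.\ columns) of the matrix --- and this is exactly why the families of fooling rectangles must be selected with care rather than taken to be an arbitrary maximal disjoint family of fooling rectangles.
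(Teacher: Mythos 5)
Your overall plan is the paper's: reduce a depth-$5$ protocol to its first division, so that one of the two resulting sub-matrices must be partitionable into at most $2^4=16$ monochromatic rectangles, and refute this by producing a fooling set of size $17$ in one part via disjoint fooling rectangles and an expansion property of the adjacency graph ("every $9$ of the $25$ rectangles have at least $17$ neighbours"). Your handling of the horizontal case, and your remark that the bound for $U$ follows from the bound for its sub-function $M$, are both correct.

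The genuine gap is your claim that the vertical case is symmetric. It is not: for the same $25$ fooling rectangles, the vertical adjacency graph (rectangles sharing a column) fails the required expansion. Its connected components are $\{S_1,S_4,S_{11},S_{14}\}$, an $11$-element component and a $10$-element component; merging the first two components yields a vertical division of the columns in which one part meets only $15$ fooling rectangles and the other only $10$ --- both below $17$ --- so no choice of cells gives the needed fooling set, and the "$9$ see $17$" statement is false for this graph. The paper repairs this by \emph{enlarging} the family for the vertical case: it shrinks $R_0$ and adds four new fooling rectangles $R_5$, $R_{10}$, $R_{15}$, $S_0$ (whose cells all carry the value $0$, so the pairwise fooling property has to be re-verified against the identification of the transcripts $0,5,10,15$), obtaining $29$ rectangles, and then proves the adjusted expansion lemma "every $13=29-16$ vertices of the vertical adjacency graph have at least $17$ neighbours" (Lemma~\ref{l5}). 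So the two cases require different families and different counting; the family of $25$ cannot simply be reused with rows replaced by columns.
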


Unfortunately, we cannot prove this using partitions into
monochromatic rectangles, as both matrices can be partitioned into
$30\le2^5$ monochromatic rectangles. These rectangles correspond to the 
leaves of classical protocols performed by Alice and 
Bob in  2--5 rounds. We have $9\times 16$ such leaves 
(9 events in the first rounds are multiplied by the number of leaves in a 
classical protocol of depth 4). However some of these
rectangles can be joined together
and after that we obtain  30 monochromatic rectangles.
Let us show this partition for the matrix of the function
$S$
(for $U$ and $M$ the partition is similar). 
For any number $i\ne 0$ the part of the matrix consisting of entries $i$ can
by partitioned into 2 rectangles. For $i=1$ and
$i=2$ the partition is shown in Fig.~\ref{pic4}.
\begin{figure}
\renewcommand{\hg}[1]{{\color{green}#1}}
\renewcommand{\hc}[1]{{\color{cyan}#1}}
\renewcommand{\hb}[1]{{\color{blue}#1}}
\renewcommand{\hy}[1]{{\color{red}#1}}
$$
\begin{array}{|c|cccccc|cccccc|}
\hline0&0&0&\hg2&9&0&0  &0 &7&7&12&12&0   \\
0&0&0&3&8&0&0 &0&7&7&13&13&0\\
0&0&0&\hg2&8&0&0 &0&6&6&13&13&0\\
\hline
0&0&7&7&13&13&0 &0&7&7&13&13&0\\
0&0 &6&6&12&12&14  &0 &6&6&12&12&14   \\
4&4 &6&6&12&12&14 &4 &6&6&12&12&14 \\
4&4 &6&6&9 &11&11 &4 &6&6&9 &11&11 \\
\hc1&\hy1&\hy1&3&9&0&0 & \hy1&\hy1&3&9&0&0 \\ 
0&0&0&\hb2&8&0&0 & 0&0&\hb2&8&0&0 \\ 
\hline
0&0&7&7&13&13&0 &0&7&7&13&13&0\\
14&0 &6&6&12&12&14  &0 &6&6&12&12&14   \\
14&4 &6&6&12&12&14 &4 &6&6&12&12&14 \\
11&4 &6&6&9 &11&11 &4 &6&6&9 &11&11 \\
0&\hy1&\hy1&3&9&0&0 & \hy1&\hy1&3&9&0&0 \\ 
0&0&0&\hb2&8&0&0 & 0&0&\hb2&8&0&0 \\ 
\hline
\end{array}
%\right)
$$
\caption{Partition of 1s and 2s in
  the matrix of  $S$ into two rectangles. Each rectangle has its own color.}\label{pic4}
\end{figure}
And 0s can be partitions into  6 rectangles as shown in Fig.~\ref{pic5}.
\begin{figure}
\renewcommand{\hr}[1]{{\color{red}#1}}
\renewcommand{\hg}[1]{{\color{green}#1}}
\renewcommand{\hc}[1]{{\color{cyan}#1}}
\renewcommand{\hb}[1]{{\color{blue}#1}}
\renewcommand{\hy}[1]{{\color{yellow}#1}}
\renewcommand{\hm}[1]{{\color{magenta}#1}}
$$
\begin{array}{|c|cccccc|cccccc|}
\hline\hm0&\hy0&\hy0&2&9&\hy0&\hy0  &\hy0 &7&7&12&12&\hy0   \\
\hm0&\hy0&\hy0&3&8&\hy0&\hy0 &\hy0&7&7&13&13&\hy0\\
\hm0&\hy0&\hy0&2&8&\hy0&\hy0 &\hy0&6&6&13&13&\hy0\\
\hline
\hm0&\hc0&7&7&13&13&\hr0 &\hc0&7&7&13&13&\hr0\\
\hm0&\hc0 &6&6&12&12&14  &\hc0 &6&6&12&12&14   \\
4&4 &6&6&12&12&14 &4 &6&6&12&12&14 \\
4&4 &6&6&9 &11&11 &4 &6&6&9 &11&11 \\
1&1&1&3&9&\hb0&\hb0 & 1&1&3&9&\hb0&\hb0 \\ 
\hm0&\hg0&\hg0&2&8&\hb0&\hb0 & \hg0&\hg0&2&8&\hb0&\hb0 \\ 
\hline
\hm0&\hc0&7&7&13&13&\hr0 &\hc0&7&7&13&13&\hr0\\
14&\hc0 &6&6&12&12&14  &\hc0 &6&6&12&12&14   \\
14&4 &6&6&12&12&14 &4 &6&6&12&12&14 \\
11&4 &6&6&9 &11&11 &4 &6&6&9 &11&11 \\
\hm0&1&1&3&9&\hb0&\hb0 & 1&1&3&9&\hb0&\hb0 \\ 
\hm0&\hg0&\hg0&2&8&\hb0&\hb0 & \hg0&\hg0&2&8&\hb0&\hb0 \\ 
\hline

\end{array}
%\right)
$$
\caption{Partition of zeros in the
  matrix of $S$ into 6 rectangles. Each rectangle has its own color.}\label{pic5}
\end{figure}
This number (30) of monochromatic rectangles cannot be decreased, which can be shown by fooling sets.

The proofs for functions  $U$ and
$M$ are almost 
identical. More specifically, some lemmas that are proved analytically for 
$U$ can by proved for $M$ using pictures.
We will show that 
for every vertical partition of the matrix into two parts
(that is, columns are divided into two parts) at least one part has a fooling set of size 17
(hence its communication complexity is at least 5), and the similar statement
 holds for horizontal partitions.

\subsubsection{The proof for horizontal partitions}

We distinguish in the matrix 25 rectangles denoted by
$$
R_0,R_1,\dots,R_4,R_6,\dots,R_9,R_{11},\dots,R_{14},S_1,\dots,S_4,S_6,\dots,S_9,S_{11},\dots S_{14}
$$
($R_5,R_{10},R_{15},S_0,S_5,S_{10},S_{15}$ are skipped). Those rectangles are called
\emph{fooling rectangles}.
All the entries of rectangles $R_i,S_i$ are equal to $i$.
Besides, for any two cells  $u,v$ from different rectangles
the set $\{u,v\}$ is a fooling set for the matrix, that is,
no monochromatic rectangle includes both $u$ and $v$.
In other words, if we pick any one cell from every
fooling rectangle, the resulting set of cells is a fooling set.
The existence of such a set of rectangles proves
only that the matrix has a fooling set of size 25.
However fooling rectangles have the following feature:
\emph{for every division of rows into two parts,
the rows of one of these parts intersect at least 
17 fooling rectangles}. That part thus has a fooling set of size 17.

We will define now fooling rectangles. For the matrix of 
$M$ the reader can skip the definition and look at  Fig.~\ref{pic7} instead, where fooling rectangles are indicated by colors.
\begin{figure}
\renewcommand{\hr}[1]{{\color{red}#1}}
\renewcommand{\hg}[1]{{\color{blue}#1}}
\renewcommand{\hc}[1]{{#1}}
\renewcommand{\hb}[1]{{#1}}
\renewcommand{\hm}[1]{{\color{red}#1}}
\renewcommand{\hy}[1]{{#1}}
\renewcommand{\fifteenb}{0}
\renewcommand{\tenb}{0}
\renewcommand{\ten}{\hc0}
\renewcommand{\fifteen}{\hb0}
\renewcommand{\fivey}{\hy0}
\renewcommand{\fiveb}{0}
\renewcommand{\five}{0}
\renewcommand{\teny}{\hy 0}
\renewcommand{\fifteeny}{\hy 0}
\renewcommand{\zeroy}{\hy 0}
\renewcommand{\zerob}{\hb0}
\renewcommand{\zeror}{\hm0}
$$
\begin{array}{|c|ccccccc|ccccccc|}
\hline
0&\zeroy &\hg{2}&\zeroy&\hg{2}&\hg{8}&\hg{8}&\teny  &\fivey &\hg{6 }&\hg{6 }&\hg{12 }&\fifteeny &\hg{12}&\fifteeny\\ 
0&\zeroy &\hg{2}&\zeroy&\hg{2}&\hg{8}&\hg{8}&\teny  &\fivey &\hg{6 }&\hg{6 }&\hg{13 }&\fifteeny &\hg{13}&\fifteeny\\ 
0&\zeroy &\hg{2}&\zeroy&\hg{2}&\hg{9}&\hg{9}&\teny  &\fivey &\hg{7 }&\hg{7 }&\hg{12 }&\fifteeny &\hg{12}&\fifteeny\\  
0&\zeroy &\hg{3}&\zeroy&\hg{3}&\hg{8}&\hg{8}&\teny  &\fivey &\hg{7 }&\hg{7 }&\hg{13 }&\fifteeny &\hg{13}&\fifteeny\\  
0&\zeroy &\hg{3}&\zeroy&\hg{3}&\hg{9}&\hg{9}&\teny  &\fivey &\hg{6 }&\hg{6 }&\hg{12 }&\fifteeny &\hg{12}&\fifteeny\\ 
0 &\zeroy&\hg{3}&\zeroy&\hg{3}&\hg{9}&\hg{9}&\teny  &\fivey &\hg{7 }&\hg{7 }&\hg{13 }&\fifteeny &\hg{13}&\fifteeny\\ 
\hline
0&0&2&0&2&8&8&\hr{11}&\hr{2}&\zeror&\hr{2}&\hr{8}&\hr{8}&\hr{11}&\hr{11} \\
0&0&3&0&3&\hr{12}&\hr{14}&\hr{12}&\hr{3}&\zeror&\hr{3}&12&\hr{14}&12&\hr{14}\\
\hg1&1&2&1&2&9&9&\hr{11}&\hr{2}&1&\hr{2}&\hr9&\hr9&\hr{11}&\hr{11}\\
\hg1&1&2&1&2&\hr{13}&\hr{14}&\hr{13} &\hr{2}&1&\hr{2}&13&\hr{14}&13&\hr{14}\\
\hg1&1&3&1&3&\hr{12}&\hr{14}&\hr{12}&\hr{3}&1&\hr{3}&12&\hr{14}&12&\hr{14}\\
\hg1&1&3&1&3&8&8&\hr{11} &\hr{3}&1&\hr{3}&\hr{8}&\hr{8}&\hr{11}&\hr{11}\\
\hg4  &4&4&\hr{6}&\hr{6}&9&9&\hr{11}   &4&6&6&\hr9&\hr9&\hr{11}&\hr{11}\\
\hg4  &4&4&\hr{7}&\hr{7}&8&8&\hr{11}   &4&7&7&\hr{8}&\hr{8}&\hr{11}&\hr{11}\\
\hg4  &4&4&\hr{6}&\hr{6}&\hr{12}&\hr{14}&\hr{12}   &4&6&6&12&\hr{14}&12&\hr{14}   \\
\hg4  &4&4&\hr{7}&\hr{7}&\hr{13}&\hr{14}&\hr{13}    &4&7&7&13&\hr{14}&13 &\hr{14}  \\
\fiveb  &\fiveb&\five&\hr{7}&\hr{7}&8&8&\hr{11}   &\fiveb&7&7&\hr{8}&\hr{8}&\hr{11} &\hr{11}   \\
\hline
\hg{11} &0&2&0&2&9&9&11&\hr{2}&\zeror&\hr{2}&\hr{9}&\hr{9}&11&11\\
\hg{14}&0&3&0&3&\hr{13}&14&\hr{13}&\hr{3}&\zeror&\hr{3}&13&14&13&14\\
\hg{11}&\hr1&2&1&2&8&8&11&\hr{2}&\hr{1}&\hr{2}&\hr{8}&\hr{8}&11&11\\
\hg{14}&\hr1&2&\hr{1}&2&\hr{12}&14&\hr{12}&\hr{2}&\hr{1}&\hr{2}&12&14&12&14\\
\hg{11}&\hr1&3&\hr{1}&3&9&9&11 &\hr{3}&\hr{1}&\hr{3}&\hr{9}&\hr{9}&11&11\\
\hg{14}&\hr1&3&\hr{1}&3&\hr{13}&14&\hr{13}&\hr{3}&\hr{1}&\hr{3}&13&14&13&14\\
\hg{11} &\hr{4} &\hr{4}&\hr{6}&\hr{6}&8&8&11   &\hr{4}&6&6&\hr{8}&\hr{8}&11&11\\
\hg{14} &\hr{4} &\hr{4}&\hr{6}&\hr{6}&\hr{13}&14&\hr{13}   &\hr{4}&6&6&13&14&13  &14  \\
\hg{11}&\hr{4}  &\hr{4}&\hr{7}&\hr{7}&9&9&11   &\hr{4}&7&7&\hr{9}&\hr{9}&11&11\\
\tenb &\hr{4} &\hr{4}&\hr{7}&\hr{7}&9&9&\tenb   &\hr{4}&7&7&\hr{9}&\hr{9}&\ten&\tenb\\
\fifteenb &\hr{4} &\hr{4}&\hr{6}&\hr{6}&\hr{12}&\fifteen&\hr{12}   &\hr{4}&6&6&12&\fifteenb&12 &\fifteenb  \\
\hg{14}&\hr{4}  &\hr{4}&\hr{7}&\hr{7}&\hr{12}&14&\hr{12}    &\hr{4}&7&7&12&14&12 &14 \\
\hline
\end{array}
%\right)
$$
\caption{Fooling rectangles for $M$. The rectangle $R_i$ consists of red numbers
$i$. The rectangle $S_i$ consists of blue numbers $i$.}\label{pic7}
\end{figure}

\begin{definition}[fooling rectangles] 
  Recall that the value of the 
  function is essentially equal to the 2--5-transcript of the computation,
  which is a 4-bit sequence 
$abcd$.
  Therefore it is convenient, in the definition of
  $R_i,S_i$, to consider $i$ as a 4-bit sequence $abcd$. We first define rectangles
 $R_i$. 
If  $a\ne c$, then 
 \begin{align*}
R_{abcd}=R_{ab\bar a d}&= \{*\psi \mid \psi(a)=b,  \psi(a\bar a)=d\}
\times
\{\bar b\phi\mid \phi(\Lambda)=a,\phi(b)=\bar a\}.
 \end{align*}
Here $*$ denotes any bit 0,1 and 
 $\bar b=1-b$.  If $a=c$ and $b\ne d$, then
  \begin{align*}
R_{abcd}&=R_{aba\bar b}=\{\bar a\psi \mid \psi(a)=b,  \psi(aa)=\bar b\}
\times
\{*\phi\mid \phi(\Lambda)=a, \phi(b)=a\}
 \end{align*}
  Finally, if  $a=c,b=d$, which happens only if  $a=c=b=d=0$, then
\begin{align*}
R_{abcd}=R_{0000}=R_0&= \{*\psi \mid \psi(0)=0,  \psi(00)=0\}
\times
\{*\phi\mid \phi(\Lambda)=0,\phi(0)=0\}.
\end{align*}
  We now define rectangles $S_i$. 
If $a\ne c$, then 
 \begin{align*}
S_{abcd}=S_{ab\bar ad}&= \{\rr\eta \mid \eta(bac)=d\}
\times
\{b\phi\mid \phi(\Lambda)=a,\phi(b)=\bar a\}.
 \end{align*}
 Otherwise, if $a=c$, then 
 \begin{align*}
S_{abcd}&= S_{aba\bar b}= 
\{a\psi\mid \psi(a)=b,\psi(aa)=\bar a\}\times\{\rr\}.
 \end{align*}
 The first equality holds, since rectangles 
 $S_{0000},S_{0101},S_{1010},S_{1111}$ are not defined. 
For reader's convenience, explicit definitions of fooling rectangles follow:
\begin{align*}
R_0&=
\{*\psi\mid  \psi(0)=0,  \psi(00)=0\}
\times\{1\phi  \mid \phi(\Lambda)=0,  \phi(0)=0\}\\
R_1&= \{1\psi  \mid \psi(0)=0,  \psi(00)=1\}
\times\{*\phi\mid  \phi(\Lambda)=\phi(0)=0\}\\
R_4&= \{1\psi  \mid \psi(0)=1,  \psi(00)=0\}
 \times \{*\phi\mid  \phi(\Lambda)=\phi(1)=0\}\\
 R_{11}&= \{0\psi   \mid \psi(1)=0,  \psi(11)=1\}
 \times \{*\phi\mid  \phi(\Lambda)=\phi(0)=1\}\\
 R_{14}&=\{0\psi   \mid \psi(1)=1,  \psi(11)=0\}
 \times \{*\phi\mid  \phi(\Lambda)=\phi(1)=1\}\\
 R_2&= \{*\psi \mid  \psi(0)=0,  \psi(01)=0\}
\times\{1\phi \mid \phi(\Lambda)=0, \phi(0)=1\}\\ 
R_3&= \{*\psi \mid  \psi(0)=0,  \psi(01)=1\}
\times\{1\phi \mid \phi(\Lambda)=0, \phi(0)=1\}\\ 
R_6&= \{*\psi \mid  \psi(0)=1,  \psi(01)=0\}
\times\{0\phi \mid \phi(\Lambda)=0, \phi(1)=1\}\\ 
R_7&= \{*\psi \mid  \psi(0)=1,  \psi(01)=1\}
\times\{0\phi \mid \phi(\Lambda)=0, \phi(1)=1\}\\ 
R_8&= \{*\psi \mid  \psi(1)=0,  \psi(10)=0\}
\times\{1\phi \mid \phi(\Lambda)=1, \phi(0)=0\}\\ 
R_9&= \{*\psi \mid  \psi(1)=0,  \psi(10)=1\}
\times\{1\phi \mid \phi(\Lambda)=1, \phi(0)=0\}\\ 
R_{12}&= \{*\psi \mid  \psi(1)=1,  \psi(11)=0\}
\times\{0\phi \mid \phi(\Lambda)=1, \phi(1)=0\}\\ 
R_{13}&=\{*\psi \mid  \psi(1)=1,  \psi(11)=1\}
\times\{0\phi \mid \phi(\Lambda)=1, \phi(1)=0\}\\ 
S_1&=\{0\psi   \mid \psi(0)=0,  \psi(00)=1\}\times \{r\},\\
S_4&=\{0\psi   \mid \psi(0)=1,  \psi(00)=0\}\times\{r\},\\
S_{11}&=\{1\psi  \mid \psi(1)=0,  \psi(11)=1\}\times\{r\},\\
S_{14}&=\{1\psi  \mid \psi(1)=1,  \psi(11)=0\}\times\{r\},\\
S_2&=\{\rr\eta \mid \eta(001)=0\}\times\{0\phi \mid \phi(\Lambda)=0,\phi(0)=1\},\\
S_3&=\{\rr\eta \mid \eta(001)=1\}\times\{0\phi \mid \phi(\Lambda)=0,\phi(0)=1\},\\
S_6&=\{\rr\eta \mid \eta(011)=0\}\times\{1\phi \mid \phi(\Lambda)=0,\phi(1)=1\},\\
S_7&=\{\rr\eta \mid \eta(011)=1\}\times\{1\phi \mid \phi(\Lambda)=0,\phi(1)=1\},\\
S_{8}&=\{\rr\eta \mid \eta(100)=0\}\times\{0\phi \mid \phi(\Lambda)=1,\phi(0)=0\},\\
S_{9}&=\{\rr\eta \mid \eta(100)=1\}\times\{0\phi \mid \phi(\Lambda)=1,\phi(0)=0\},\\
S_{12}&=\{\rr\eta \mid \eta(110)=0\}\times\{1\phi \mid \phi(\Lambda)=1,\phi(1)=0\},\\
S_{13}&=\{\rr\eta \mid \eta(110)=1\}\times\{1\phi \mid \phi(\Lambda)=1,\phi(1)=0\},\\
\end{align*}

\end{definition}

% \medskip

\begin{lemma}\label{l1}
(1) All cells of rectangles  $R_i,S_i$ include the number $i$ only.

  (2) If $u,v$ are any cells from different fooling rectangles, then the set
 $\{u,v\}$ cannot be covered by a monochromatic rectangle
(in particular, $u\ne v$, that is, fooling rectangles are pair wise disjoint).
\end{lemma}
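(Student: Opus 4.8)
The plan is to establish both parts by directly simulating the half-duplex protocol $\Pi$ of Section~\ref{ss-general} on the input pairs of the fooling rectangles. For part~(1) the first observation is that in every fooling rectangle the first round is never silent: in every $R_{abcd}$ both Alice and Bob send in round~1 (their first symbols are bits), so the round is spent, whereas in every $S_{abcd}$ exactly one of the two sends, so the round is normal. Consequently the computation involves no adversary choice and the $2$--$5$-transcript is a function of the input pair alone. I would then compute this transcript in the five structural cases --- $R_i$ with $a\ne c$, $R_i$ with $a=c$, $R_0$, $S_i$ with $a\ne c$, $S_i$ with $a=c$ --- by substituting the defining equations for $\psi,\phi,\eta$ into the description of $\Pi$. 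In each case the bits transmitted in rounds $2,3,4,5$ come out to be exactly $a,b,c,d$, so the transcript is the string $abcd$; this string is non-identified whenever $i\ne0$, while for $i=0$ it is $0000$, which is identified and produces output $0$. Either way $f$ equals $i$ on the whole rectangle. (For the matrix of $M$ one may instead just read this off Fig.~\ref{pic7}.)

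For part~(2), let $u=(x,y)$ and $v=(x',y')$ be cells of distinct fooling rectangles and suppose $\{u,v\}\subseteq A\times B$; I must show $f$ is non-constant on $A\times B$, i.e.\ that not all of $f(x,y),f(x,y'),f(x',y),f(x',y')$ are equal. If the two rectangles carry different indices, then $f(x,y)\ne f(x',y')$ by part~(1) and we are done. So assume the rectangles are $R_i$ and $S_i$ for some $i\in\{1,2,3,4,6,7,8,9,11,12,13,14\}$ --- in particular $i\ne0$ --- say $(x,y)\in R_i$ and $(x',y')\in S_i$. Comparing the definitions shows $R_i$ and $S_i$ are disjoint: for $a\ne c$ Bob's inputs in the two rectangles have different leading symbols ($\bar b$ versus $b$), and for $a=c$ Alice's inputs do ($\bar a$ versus $a$); this in particular gives $u\ne v$.

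It remains to exhibit a ``mixed'' corner of value $\ne i$; write $i=abcd$ in binary. If $a\ne c$ (so $c=\bar a$), consider the pair $(x',y)$, where $x'=\rr\eta$ is the Alice-part of $v\in S_i$ and $y=\bar b\phi$ is the Bob-part of $u\in R_i$, with $\phi(\Lambda)=a$, $\phi(b)=\bar a$, and $\phi(\bar b)$ unconstrained. Simulating $\Pi$ on $(x',y)$: Alice receives $\bar b$ from Bob in round~1, receives $a$ in round~2, re-transmits $\bar b$ in round~3, and receives $\phi(\bar b)$ in round~4; so the bits sent in rounds $2,3,4$ are $a,\bar b,\phi(\bar b)$. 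The round-$3$ bit $\bar b$ differs from the second bit $b$ of $i$, so numerically the transcript is $\ne i$; moreover it is either non-identified (hence of value $\ne i$) or, in the case $\phi(\bar b)=a$, it equals $(a,\bar b,a,\bar b)$, one of the four identified transcripts, of value $0\ne i$. If instead $a=c$ then $i\in\{1,4,11,14\}$; consider $(x,y')$ with $x=\bar a\psi$ the Alice-part of $u\in R_i$ and $y'=\rr$ the Bob-part of $v\in S_i$. Simulating $\Pi$, Alice sends her leading bit $\bar a$ in round~1 and Bob (who holds $\rr$) echoes it in rounds~2 and~4, so the transcript is $(\bar a,\psi(\bar a),\bar a,\psi(\bar a\bar a))$; its first bit $\bar a$ differs from the first bit $a$ of $i$, so again it is either non-identified of value $\ne i$, or identified of value $0\ne i$. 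In both regimes $f$ takes a value $\ne i$ on some corner, so $A\times B$ is not monochromatic, which finishes part~(2).

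The only real difficulty is bookkeeping: part~(1) amounts to five five-round simulations, and the one delicate point in part~(2) is that $f$ coincides with the $2$--$5$-transcript only \emph{after} the four transcripts $0000,0101,1010,1111$ have been collapsed to $0$ --- so it is not enough that a mixed corner's transcript differs from $i$ as a bit string; one must also rule out that it is one of those four, which is precisely why it matters that every index admitting two distinct fooling rectangles is nonzero.
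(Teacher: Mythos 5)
Your proof is correct and follows essentially the same approach as the paper's: part (1) by direct simulation of $\Pi$ on each fooling rectangle, and part (2) by reducing to the same-subscript case $R_i$ vs.\ $S_i$ and simulating $\Pi$ on the mixed corners $(\rr\eta,\bar b\phi)$ (when $a\ne c$) and $(\bar a\psi,\rr)$ (when $a=c$) to show the output there differs from $i$. Your closing remark is the right caution, though phrased slightly loosely: one does not need to \emph{rule out} that the mixed corner's transcript is one of the four identified strings, only to observe that if it is, the output is $0\ne i$ because the indices in question avoid $0,5,10,15$ --- which is exactly what your case analysis does.
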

\begin{proof}
  For the function  $M$ this can be verified directly by examining Fig.~\ref{pic7}.
For the function $U$ the proof is in the Appendix. 
\end{proof}

We call two fooling rectangles \emph{horizontally adjacent}, if their first projections
intersect (there is a row that intersects both rectangles).
Consider the non-oriented graph whose nodes are fooling rectangles and
edges connect horizontally adjacent rectangles.
We say that a vertex $u$ is a \emph{neighbor} of a vertex $v$ if $u,v$ are adjacent.
In particular, each node is its own neighbor.
It turns out that this graph has the following expansion property:
\begin{lemma}\label{l2}
  Every set of 9 vertices of the graph has at
  least 17 neighbors.
\end{lemma}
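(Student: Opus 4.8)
The plan is to decompose the $25$ fooling rectangles into four groups according to the first symbol of the Alice inputs occurring in a rectangle's first projection ($\rr$, or $0$, or $1$, or, for the ``$*\psi$'' rectangles, both $0$ and $1$), and then to reduce the expansion claim to two facts about a $17$-vertex subgraph. Set
\[
G_{\rr}=\{S_2,S_3,S_6,S_7,S_8,S_9,S_{12},S_{13}\},\qquad
G_{0}=\{R_{11},R_{14},S_1,S_4\},\qquad
G_{1}=\{R_1,R_4,S_{11},S_{14}\},
\]
and let $G_{*}=\{R_0,R_2,R_3,R_6,R_7,R_8,R_9,R_{12},R_{13}\}$ be the remaining nine rectangles. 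The key observation is that two fooling rectangles are horizontally adjacent precisely when the conditions defining their first projections (conditions on $\psi$, respectively on $\eta$) are jointly satisfiable; this is immediate from the explicit definitions (or, for $M$, from Fig.~\ref{pic7}). Consequently a $G_{\rr}$-rectangle is never adjacent to anything outside $G_{\rr}$, the groups $G_0$ and $G_1$ are mutually non-adjacent, and a vertex of $G_0$ or $G_1$ can only leave its own group into $G_{*}$. Hence the graph has exactly two connected components: $G_{\rr}$, and the subgraph $H$ spanned by the other $17$ vertices (which is connected, since $G_{*}$ is connected --- e.g.\ $R_0$ is adjacent to all of $G_{*}$ except $R_6,R_7$, which are adjacent to $R_8$ --- and every vertex of $G_0\cup G_1$ is adjacent to some vertex of $G_{*}$).

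Inside $G_{\rr}$ two rectangles fail to be adjacent exactly when they require opposite values of $\eta$ on the \emph{same} argument, and the four pairs $\{S_2,S_3\}$, $\{S_6,S_7\}$, $\{S_8,S_9\}$, $\{S_{12},S_{13}\}$ use four distinct arguments; so $G_{\rr}$ is the complement of a perfect matching on $8$ vertices. Thus every vertex of $G_{\rr}$ has exactly $7$ closed neighbors, and any two distinct vertices of $G_{\rr}$ together have all $8$ vertices as closed neighbors. Now let $A$ be a set of $9$ vertices, write $A=A_{\rr}\cup A'$ with $A_{\rr}=A\cap G_{\rr}$, $A'=A\cap H$, and put $a=|A_{\rr}|$. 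Since there are no edges between $G_{\rr}$ and $H$, we get $N[A]=\bigl(N[A_{\rr}]\cap G_{\rr}\bigr)\sqcup\bigl(N[A']\cap H\bigr)$, where $|N[A_{\rr}]\cap G_{\rr}|$ equals $0$ if $a=0$, equals $7$ if $a=1$, and equals $8$ if $a\ge 2$. Hence Lemma~\ref{l2} follows from two facts about $H$: \textbf{(i)} every vertex of $H$ has at least $9$ closed neighbors in $H$; and \textbf{(ii)} every $8$-element subset of $H$ has at least $10$. Indeed: if $a=0$, then $A'$ has $9$ vertices, and (i) forces $N[A']=H$ (a missed vertex $w$ would satisfy $N[w]\subseteq H\setminus A'$, of size $8<9$), so $|N[A]|=17$; if $a=1$, then $|N[A]|\ge 7+10=17$ by (ii); and if $a\ge 2$, then $A'\ne\emptyset$ (because $|G_{\rr}|=8$) and $|N[A]|\ge 8+9=17$ by (i) applied to a vertex of $A'$.

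It remains to verify (i) and (ii) for $H$. For this I would tabulate, for each of the $17$ rectangles, which of the other $16$ it is adjacent to, reading the adjacencies off Fig.~\ref{pic7} (for $M$) or off the explicit definitions and Lemma~\ref{l1} (for $U$). The outcome is that every vertex of $H$ has at least $9$ closed neighbors, with equality for exactly the six vertices $R_{14},S_1,S_4$ and their images $S_{14},R_1,R_4$ under the automorphism of $H$ that exchanges the first symbols $0$ and $1$, every other vertex having at least $10$; this is (i). For (ii), suppose some $8$-subset $A\subseteq H$ had $|N[A]|\le 9$. As $H$ is connected and $|A|=8<17$, we cannot have $N[A]=A$, so $N[A]=A\cup\{x\}$ for a single $x\notin A$; then each $v\in A$ has $N[v]\subseteq A\cup\{x\}$, so $|N[v]|\le 9$, so by (i) $|N[v]|=9$ and in fact $N[v]=A\cup\{x\}$. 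Thus all eight vertices of $A$ have closed degree exactly $9$ --- impossible, since only six vertices of $H$ do. This proves (ii), and with it the lemma.

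The only laborious step is the adjacency tabulation for $H$: roughly $\binom{17}{2}$ pairs must be decided. It is manageable because each decision is just the joint satisfiability of two small systems of bit-equations, and because $H$ carries several symmetries --- the $0\leftrightarrow 1$ exchange, plus the coordinate-flip symmetries within $G_{*}$, $G_0$, $G_1$ --- so only a few cases are essentially different. The delicate point, and the one on which the whole argument hinges, is that the minimum closed degree of $H$ is exactly $9$, so fact (ii) is tight; it holds only because there happen to be just six vertices of closed degree $9$, two fewer than the eight needed to produce a counterexample.
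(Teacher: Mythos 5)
Your proposal follows essentially the same route as the paper's proof: the graph splits into the complete $4$-partite component on $\{S_2,S_3,S_6,S_7,S_8,S_9,S_{12},S_{13}\}$ and a $17$-vertex component $H$, every vertex of $H$ has at least $9$ closed neighbors inside $H$, and your case analysis on $a=|A\cap G_{\rr}|$ being $0$, $1$, or at least $2$ matches the paper's three cases $l=0$, $l=1$, $l\ge 2$ exactly (the paper's Lemma~\ref{l2bis} supplies the explicit edge description that you defer to a tabulation). One stated outcome of that tabulation is wrong, though harmlessly so: $R_{14}$ and $S_{14}$ have closed degree $10$ in $H$, not $9$ (for instance $N[R_{14}]$ contains $R_{12},R_{13}$ together with $R_0,R_2,R_3,R_6,R_7,S_1,S_4$), so the minimum closed degree $9$ is attained by exactly the four vertices $R_1,S_1,R_4,S_4$, not six. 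Since $4<8$, your contradiction argument for fact (ii) still goes through; the paper exploits the count $4$ more directly, observing that any $8$-subset of $H$ must contain a vertex of closed degree at least $10$, which yields (ii) without invoking connectivity of $H$.
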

The proof of this lemma is deferred to Appendix.
Let us derive from this lemma the proof for horizontal partitions. 
Assume that rows of the matrix are partitioned in sets  $W$ and $V$.
Let us denote by $\mathcal R$ the set of all fooling rectangles which
intersect some row from $W$.
If there are at least  17 such rectangles, then we are done. Otherwise there are at least 
 $25-16=9$ rectangles outside $\mathcal R$. For those rectangles every row intersecting the rectangle belongs to $V$.
Thus every neighbor $y$ of every vertex $x$ outside $\mathcal R$ intersects a row in $V$. (Indeed there is a row $s$
that intersects both  $x$ and $y$. That row is not in $W$, as otherwise
$x$ would be in  $\mathcal R$.  Hence $s\in V$ and $y$ intersects a row in  $V$.)
Since there are at least 9 rectangles outside 
$\mathcal R$, the lemma implies that there are at least 
17 fooling rectangles intersecting a row in $V$.  
 The case of horizontal partitions is completed.

Now we are able to explain how the set Alice's inputs was reduced when
we restricted the function $U$ to get the 
function $M$. Each row in the matrix of $U$ yields a clique in the graph,
the set of edges of the graph is the union of those cliques. We have chosen a minimal set of cliques
whose union covers all the edges of the graph and removed the rows corresponding to the 
remaining cliques. 

\subsubsection{Proof for vertical partitions}
Let us call two fooling  rectangles \emph{vertically adjacent}
if their second projections intersect.
Unfortunately, the analog of Lemma~\ref{l2} is not true anymore.
Indeed, the connected components of the vertical adjacency graph are
the following: 
\begin{align*}
\{S_1,S_{4},S_{11},S_{14}\},\\
\{R_0,S_2,S_{3},S_6,S_{7},R_{1},R_{2},R_{3},R_{4},R_6,R_7\},\\
\{S_8,S_{9},S_{12},S_{13},R_{8},R_{9},R_{11},R_{12},R_{13},R_{14}\}.
\end{align*}
These components correspond to three sub-matrices
into which the matrix $M$ is divided by ordinary vertical lines in Fig.~\ref{pic2a}.
Let us join the first and the second connected
components. In this way we obtain
a vertical partition of the matrix into two sub-matrices, where the first sub-matrix
intersects 15 fooling rectangles,
and the second one intersects 10 fooling rectangles. 

This obstacle forces to increase the number of fooling rectangles.
This can be done by adding rectangles with cells containing 0.
To do this, we will reduce the rectangle $R_0$ and add  rectangles called $R_5,R_{10},R_{15},S_0$.
Namely, let 
\begin{align*}%\label{five}
R_{abab}&= \{*\psi \mid \psi(a)=\psi(aa)=b,\psi(\bar a)\ne\psi(\bar a\bar a)\}\\
\times&
\{\bar b\phi\mid \phi(\Lambda)=\phi(b)=a,\phi(\bar b)=\bar a\},\text{ where }a,b=0,1,\nonumber \\
 S_{0}&=\{\rr\eta \mid \eta \text{ arbitrary }\}\times\{i\phi\mid\phi(\Lambda)=\phi(i)\}%\label{fiveprim} 
 \end{align*}
Here
are the explicit definitions of rectangles  $R_0,R_5,R_{10},R_{15}$:
\begin{align*}
R_0&=
\{*\psi\mid  \psi(0)=0,  \psi(00)=0,  \psi(1)\ne\psi(11)\}
\times\{1\phi  \mid \phi(\Lambda)=0,  \phi(0)=0,\phi(1)=1\}\\
R_5&=
\{*\psi\mid  \psi(0)=1,  \psi(00)=1,  \psi(1)\ne\psi(11)\}
\times\{0\phi \mid \phi(\Lambda)=0,  \phi(1)=0,\phi(0)=1\},\\
R_{10}&=
\{*\psi\mid  \psi(1)=0,  \psi(11)=0,  \psi(0)\ne\psi(00)\}
\times\{1\phi  \mid \phi(\Lambda)=1,  \phi(0)=1,\phi(1)=0\}\\
R_{15}&=
\{*\psi\mid  \psi(1)=1,  \psi(11)=1,  \psi(0)\ne\psi(00)\}
\times\{0\phi  \mid \phi(\Lambda)=1,  \phi(1)=1,\phi(0)=0\}.
\end{align*}

In fact, for the function $M$, the new version of the rectangle $R_0$ coincides with the old one, since the difference between
them is due to those rows that were removed from the matrix of $U$.
The fooling rectangles for the matrix of $M$ are shown in Fig.~\ref{pic8}.
\begin{figure}
\renewcommand{\hr}[1]{{\color{red}#1}}
\renewcommand{\hg}[1]{{\color{blue}#1}}
\renewcommand{\hc}[1]{{\color{cyan}#1}}
\renewcommand{\hb}[1]{{\color{green}#1}}
\renewcommand{\hm}[1]{{\color{yellow}#1}}
\renewcommand{\hy}[1]{{\color{magenta}#1}}
\renewcommand{\fifteenb}{0}
\renewcommand{\tenb}{0}
\renewcommand{\ten}{\hc0}
\renewcommand{\fifteen}{\hb0}
\renewcommand{\fivey}{\hy0}
\renewcommand{\fiveb}{0}
\renewcommand{\five}{\hm0}
\renewcommand{\teny}{\hy 0}
\renewcommand{\fifteeny}{\hy 0}
\renewcommand{\zeroy}{\hy 0}
\renewcommand{\zerob}{\hb0}
\renewcommand{\zeror}{\hr0}
$$
\begin{array}{|c|ccccccc|ccccccc|}
\hline
0&\zeroy &\hg{2}&\zeroy&\hg{2}&\hg{8}&\hg{8}&\teny  &\fivey &\hg{6 }&\hg{6 }&\hg{12 }&\fifteeny &\hg{12}&\fifteeny\\ 
0&\zeroy &\hg{2}&\zeroy&\hg{2}&\hg{8}&\hg{8}&\teny  &\fivey &\hg{6 }&\hg{6 }&\hg{13 }&\fifteeny &\hg{13}&\fifteeny\\ 
0&\zeroy &\hg{2}&\zeroy&\hg{2}&\hg{9}&\hg{9}&\teny  &\fivey &\hg{7 }&\hg{7 }&\hg{12 }&\fifteeny &\hg{12}&\fifteeny\\  
0&\zeroy &\hg{3}&\zeroy&\hg{3}&\hg{8}&\hg{8}&\teny  &\fivey &\hg{7 }&\hg{7 }&\hg{13 }&\fifteeny &\hg{13}&\fifteeny\\  
0&\zeroy &\hg{3}&\zeroy&\hg{3}&\hg{9}&\hg{9}&\teny  &\fivey &\hg{6 }&\hg{6 }&\hg{12 }&\fifteeny &\hg{12}&\fifteeny\\ 
0 &\zeroy&\hg{3}&\zeroy&\hg{3}&\hg{9}&\hg{9}&\teny  &\fivey &\hg{7 }&\hg{7 }&\hg{13 }&\fifteeny &\hg{13}&\fifteeny\\ 
\hline
0&0&2&0&2&8&8&\hr{11}&\hr{2}&\zeror&\hr{2}&\hr{8}&\hr{8}&\hr{11}&\hr{11} \\
0&0&3&0&3&\hr{12}&\hr{14}&\hr{12}&\hr{3}&\zeror&\hr{3}&12&\hr{14}&12&\hr{14}\\
\hg1&1&2&1&2&9&9&\hr{11}&\hr{2}&1&\hr{2}&\hr9&\hr9&\hr{11}&\hr{11}\\
\hg1&1&2&1&2&\hr{13}&\hr{14}&\hr{13} &\hr{2}&1&\hr{2}&13&\hr{14}&13&\hr{14}\\
\hg1&1&3&1&3&\hr{12}&\hr{14}&\hr{12}&\hr{3}&1&\hr{3}&12&\hr{14}&12&\hr{14}\\
\hg1&1&3&1&3&8&8&\hr{11} &\hr{3}&1&\hr{3}&\hr{8}&\hr{8}&\hr{11}&\hr{11}\\
\hg4  &4&4&\hr{6}&\hr{6}&9&9&\hr{11}   &4&6&6&\hr9&\hr9&\hr{11}&\hr{11}\\
\hg4  &4&4&\hr{7}&\hr{7}&8&8&\hr{11}   &4&7&7&\hr{8}&\hr{8}&\hr{11}&\hr{11}\\
\hg4  &4&4&\hr{6}&\hr{6}&\hr{12}&\hr{14}&\hr{12}   &4&6&6&12&\hr{14}&12&\hr{14}   \\
\hg4  &4&4&\hr{7}&\hr{7}&\hr{13}&\hr{14}&\hr{13}    &4&7&7&13&\hr{14}&13 &\hr{11}  \\
\fiveb  &\fiveb&\five&\hr{7}&\hr{7}&8&8&\hr{11}   &\fiveb&7&7&\hr{8}&\hr{8}&\hr{11} &\hr{11}   \\
\hline
\hg{11} &0&2&0&2&9&9&11&\hr{2}&\zeror&\hr{2}&\hr{9}&\hr{9}&11&11\\
\hg{14}&0&3&0&3&\hr{13}&14&\hr{13}&\hr{3}&\zeror&\hr{3}&13&14&13&14\\
\hg{11}&\hr1&2&1&2&8&8&11&\hr{2}&\hr{1}&\hr{2}&\hr{8}&\hr{8}&11&11\\
\hg{14}&\hr1&2&\hr{1}&2&\hr{12}&14&\hr{12}&\hr{2}&\hr{1}&\hr{2}&12&14&12&14\\
\hg{11}&\hr1&3&\hr{1}&3&9&9&11 &\hr{3}&\hr{1}&\hr{3}&\hr{9}&\hr{9}&11&11\\
\hg{14}&\hr1&3&\hr{1}&3&\hr{13}&14&\hr{13}&\hr{3}&\hr{1}&\hr{3}&13&14&13&14\\
\hg{11} &\hr{4} &\hr{4}&\hr{6}&\hr{6}&8&8&11   &\hr{4}&6&6&\hr{8}&\hr{8}&11&11\\
\hg{14} &\hr{4} &\hr{4}&\hr{6}&\hr{6}&\hr{13}&14&\hr{13}   &\hr{4}&6&6&13&14&13  &14  \\
\hg{11}&\hr{4}  &\hr{4}&\hr{7}&\hr{7}&9&9&11   &\hr{4}&7&7&\hr{9}&\hr{9}&11&11\\
\tenb &\hr{4} &\hr{4}&\hr{7}&\hr{7}&9&9&\tenb   &\hr{4}&7&7&\hr{9}&\hr{9}&\ten&\tenb\\
\fifteenb &\hr{4} &\hr{4}&\hr{6}&\hr{6}&\hr{12}&\fifteen&\hr{12}   &\hr{4}&6&6&12&\fifteenb&12 &\fifteenb  \\
\hg{14}&\hr{4}  &\hr{4}&\hr{7}&\hr{7}&\hr{12}&14&\hr{12}    &\hr{4}&7&7&12&14&12 &14 \\
\hline
\end{array}
%\right)
$$
\caption{Fooling rectangles for vertical partitions for 
the matrix of $M$. The rectangle $R_i$ for $i\ne 10,15,20$
consists of red
numbers $i$. The rectangle $S_i$ for $i\ne0$ consists of blue
numbers $i$.
The rectangles $R_5,R_{10},R_{15}$ consist of only one cell each: yellow, \hc{cyan}
and green zeros. The rectangle $S_0$ consists of \hy{magenta} zeros.}\label{pic8}
\end{figure}

We need an analogue of the Lemma ~\ref{l1} for new fooling rectangles:
\begin{lemma}\label{l3}
(1) All cells of the rectangles $R_0,R_5,R_{10},R_{15},S_0$
contain $0$.

(2) Any two cells $u,v$
from different rectangles from the list $R_0,S_0,R_5,R_{10},R_{15}$ cannot be covered
by one monochromatic rectangle (the cells of old fooling 
rectangles are marked with non-zeros, so they can also be added to this list).
\end{lemma}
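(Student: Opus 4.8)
The plan is to prove both parts by direct simulation of the protocol $\Pi$. For part~(1): on a cell of $R_{abab}$ (that is, of $R_0,R_5,R_{10},R_{15}$) both players send in the first round, so it is spent and the remaining four rounds are classical; tracing rounds $2$--$5$ under the constraints in the definition of $R_{abab}$ shows that the 2--5-transcript is exactly $abab$. On a cell of $S_0$ the first round is a normal round in which Bob sends his first symbol $i$ and Alice receives it, and tracing rounds $2$--$5$ gives the 2--5-transcript $a\,i\,a\,i$ with $a=\phi(\Lambda)$. In every case the transcript has the form $xyxy$, i.e.\ is one of $0,5,10,15$, so by the definition of $U$ (and hence of $M$) the value equals $0$. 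For $M$ this is already visible in Fig.~\ref{pic8}; for $U$ the computation is routine and I would place it, as with Lemma~\ref{l1}, in the Appendix.

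For part~(2), a pair consisting of a cell of an old fooling rectangle and a cell of one of $R_0,S_0,R_5,R_{10},R_{15}$ is immediate: by Lemma~\ref{l1}(1) the first cell carries a value $i\ne 0$, by part~(1) the second carries $0$, and a monochromatic rectangle for a total function is constant, so no monochromatic rectangle contains both. It remains to treat the $\binom{5}{2}=10$ pairs of new rectangles. Fix cells $u=(x,y)$ and $v=(x',y')$ in two different new rectangles; since $f(u)=f(v)=0$, it suffices to exhibit one of the two corner cells $(x,y')$, $(x',y)$ on which $f$ is nonzero, for then not all of $f(x,y),f(x',y'),f(x,y'),f(x',y)$ are equal and $\{u,v\}$ is not covered by any monochromatic rectangle. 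For each of the ten pairs the plan is to run $\Pi$ on one well-chosen corner cell and read off just enough bits of its 2--5-transcript to see that it is \emph{not} of the form $xyxy$.

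This is precisely where the side-conditions $\psi(\bar a)\ne\psi(\bar a\bar a)$ in the definitions of $R_5,R_{10},R_{15}$ (and the absence of constraints on $\eta$ in $S_0$) are used: in each corner computation the first three transcript bits already leave at most one candidate value of the form $xyxy$, and the side-condition forces the fourth bit to disagree with it. For instance, for $u\in R_5$ and $v\in R_{10}$ the corner cell $(x,y')$ has a spent first round and then produces transcript bits $1,\ \psi(1),\ \overline{\psi(1)},\ \psi\bigl(1\overline{\psi(1)}\bigr)$; this matches $xyxy$ only if $\psi(1)=0$, in which case the last bit equals $\psi(11)$, which by the $R_5$-constraint $\psi(1)\ne\psi(11)$ equals $1$, so the transcript is $1011$, not of the form $xyxy$. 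The remaining nine pairs are handled the same way; for $M$ all ten checks can be read off Fig.~\ref{pic8}, and for $U$ they are routine case analyses that would go in the Appendix. I expect the main obstacle to be purely the bookkeeping --- confirming that for each of the ten pairs the chosen corner cell, together with the tailored disequality conditions, really does break the $xyxy$ pattern --- rather than anything conceptual.
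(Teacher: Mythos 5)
Your proposal is correct and follows essentially the same route as the paper: direct simulation for part (1), the value mismatch for old-vs-new pairs, and for new-vs-new pairs the corner-cell argument showing that the $2$--$5$-transcript of a well-chosen corner is not of the form $xyxy$, with the corner taken so that Bob's input comes from the $R_{abab}$-type rectangle (whose $\phi$-constraints are strong enough to force the contradiction). The paper organizes the ten new-vs-new checks into two parameterized cases ($R_{abab}$ vs.\ $R_{a'b'a'b'}$, and $S_0$ vs.\ $R_{abab}$) by assuming the transcript equals $\alpha\beta\alpha\beta$ and deriving $a=a'=\alpha$, $b=b'=\beta$ from the side conditions, rather than checking the ten pairs individually, and in some corners the third transcript bit already breaks the $xyxy$ pattern (so your heuristic about the fourth bit is not uniformly the mechanism) --- but these are presentational rather than substantive differences.
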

For $M$, this lemma is verified directly by looking at Fig.~\ref{pic8}. For $U$, the lemma is proved in Appendix.

To complete the proof of the theorem, we need an analog of 
Lemma~\ref{l2} for the vertical adjacency graph. The number of fooling rectangles has increased to 29,
so instead of 9 we now  have $29-16=13$.
\begin{lemma}\label{l5}
Any set of 13 vertices of the vertical adjacency graph has at least 17 neighbors.
\end{lemma}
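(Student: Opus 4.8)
The plan is to reduce Lemma~\ref{l5} to a finite check on a single $29$-vertex graph $H$ and then to organize that check around the one structural feature that the enlarged list of fooling rectangles was introduced to create: the rectangle $S_0$ glues together the two large pieces into which the matrix otherwise falls apart. First I would pin down $H$ explicitly. The second projection of each fooling rectangle (the $25$ rectangles used for horizontal partitions, with the modified $R_0$, together with $R_5,R_{10},R_{15},S_0$) is cut out by elementary constraints on Bob's input: it is either the single column $\rr$ (which happens exactly for $S_1,S_4,S_{11},S_{14}$), or a set $\{i\phi\}$ with $i\in\{0,1\}$ fixed and some of the values $\phi(\Lambda),\phi(0),\phi(1)$ prescribed. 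Two rectangles are vertically adjacent iff the conjunction of their constraints is satisfiable, which one reads off by inspection. For the function $M$ this adjacency is precisely the colouring of Fig.~\ref{pic8}; for $U$ it is the same computation carried out from the definitions and is placed in the Appendix, exactly as for Lemma~\ref{l2}; and, as for the horizontal adjacency graph, the columns retained in $M$ still cover every ``column clique'' of $U$, so the two vertical adjacency graphs coincide and it suffices to analyse one of them.

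The structural picture that emerges is as follows. The set $C=\{S_1,S_4,S_{11},S_{14}\}$ is an isolated $4$-clique, since nothing else uses the column $\rr$. On the remaining $25$ vertices, recall that before the enlargement the graph had the two connected components $L=\{R_0,R_1,R_2,R_3,R_4,R_6,R_7,S_2,S_3,S_6,S_7\}$ and $P=\{R_8,R_9,R_{11},R_{12},R_{13},R_{14},S_8,S_9,S_{12},S_{13}\}$, and that merging $C$ with $L$ produced the offending $15$-versus-$10$ split. The key new fact is that $S_0$ is vertically adjacent to all twelve ``middle'' rectangles $R_1,R_2,R_3,R_4,R_6,R_7$ (inside $L$) and $R_8,R_9,R_{11},R_{12},R_{13},R_{14}$ (inside $P$); hence $S_0$ has neighbours in both $L$ and $P$. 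Moreover each of the singletons $R_5,R_{10},R_{15}$ attaches to $L$ or to $P$, and $L,P$ remain internally connected. Consequently $D:=L\cup P\cup\{R_5,R_{10},R_{15},S_0\}$ is a connected set on $25$ vertices that is joined to nothing outside itself, and its only genuinely ``thin'' place is the vertex $S_0$: deleting $S_0$ leaves the two halves $L\cup\{R_5\}$ and $P\cup\{R_{10},R_{15}\}$, each of size $12$.

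Now I would derive the expansion. Suppose for contradiction that a $13$-set $T$ has $|N[T]|\le 16$; put $B=V(H)\setminus N[T]$, so $|B|\ge 13$, the sets $T,B$ are disjoint, there is no edge of $H$ between them, and at most $3$ vertices lie in neither (the ``free'' vertices $N[T]\setminus T$); shrinking $B$ if necessary, we may assume this is a $13$-versus-$13$ split with exactly $3$ free vertices. Since $C$ is a clique it lies entirely in $T$, entirely in $B$, or entirely in the free set; the last is impossible as $|C|=4>3$, so one of $T,B$ receives none of $C$ and must therefore fit inside $D$ together with at most two free vertices. But $D$ cannot be cut into two parts of size $\ge 13$ by deleting at most two vertices: deleting $S_0$ already splits it into halves of size $12$ only, and — using that $L$ and $P$ are connected and cannot be fragmented into medium-sized pieces by removing one or two further vertices — no such deletion produces a part of size $13$; since $12<13$, the side avoiding $C$ falls short. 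Running through the handful of placements of $C$, $S_0$ and $R_5,R_{10},R_{15}$ forces $|T|<13$ or $|N[T]|>16$ in every case, a contradiction. Feeding Lemma~\ref{l5} into the vertical-partition argument exactly as Lemma~\ref{l2} was used for the horizontal one then completes the proof of Theorem~\ref{th3}.

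The step I expect to be the real obstacle is the connectivity input in the previous paragraph: one must certify that, apart from the cut vertex $S_0$, neither $L$ nor $P$ admits a small vertex cut that would shave off a chunk of size roughly $9$ to $13$ — equivalently, that $D$ has no ``$13$ versus $13$ with three free vertices'' decomposition. This is a finite property of two graphs on $11$ and $10$ vertices, checkable directly from their adjacency lists; and since the whole expansion statement concerns one fixed $29$-vertex graph, a completely rigorous alternative is to exhaust all $\binom{29}{13}$ subsets by computer, the structural argument above then explaining \emph{why} the bound holds. The remaining, routine, task is to redo the adjacency computation for the function $U$ in the Appendix, just as for Lemmas~\ref{l1}--\ref{l3}.
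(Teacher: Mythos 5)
Your approach is genuinely different from the paper's. The paper introduces auxiliary quantities $\Gamma(n)$ and $\Gamma'(n)$ — the minimum number of (closed) neighbors of an $n$-element subset lying inside a single $12$-vertex component, counted with and without $S_0$ — tabulates them by hand (Table~1, Lemma~\ref{l7}), and then proves Lemma~\ref{l5} by a direct case analysis on whether $A$ contains $S_0$, whether it meets $\{S_1,S_4,S_{11},S_{14}\}$, and how its remaining vertices distribute over the two components, adding the appropriate $\Gamma$-values in each case. You instead argue by contraposition via a separator statement: if a $13$-set $T$ had $|N[T]|\le 16$, then deleting the $\le 3$ free vertices $N[T]\setminus T$ from the $25$-vertex set $D$ would produce two edge-free pieces of sizes $9$ and $13$ (with the $4$-clique $C$ sitting wholly on one side), and you want to show $D$ admits no such cut. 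This is a legitimate reformulation: both approaches ultimately rest on a finite expansion/connectivity property of the two $12$-vertex halves of $D$.

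There are, however, two concrete problems. First, your quantitative bookkeeping is off: after forcing $C$ onto one of $T,B$, all three free vertices lie in $D$, so the cut set inside $D$ has size $3$, not ``at most two,'' and the pieces to be separated have sizes $9$ and $13$, not ``two parts of size $\ge 13$.'' Second, and more importantly, the sentence ``deleting $S_0$ already splits it into halves of size $12$ only, $\ldots$ no such deletion produces a part of size $13$; since $12<13$, the side avoiding $C$ falls short'' does not follow: the $13$-piece $Y$ need not be \emph{connected}, so after removing $S_0$ it can perfectly well take, say, $7$ vertices from the $L$-side and $6$ from the $P$-side, as long as each side individually admits a small edge-free split. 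Ruling this out is precisely the content of the numerical expansion facts the paper proves (e.g.\ $\Gamma'(9)\ge 12$, $\Gamma'(7)\ge 10$, $\Gamma'(4)\ge 7$) — that is, the per-component check you yourself flag as ``the real obstacle'' and then leave to ``inspection of adjacency lists'' or a computer. So your sketch identifies a workable alternative reduction but, as written, contains an invalid inference and punts on the very lemma that carries the load; the paper's Lemma~\ref{l7} and Table~\ref{t1} is what actually discharges it.
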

This lemma is proven in Appendix. 
As before, it implies that for any verical division of the matrices of $M$ and $U$ in two parts
at least one of the parts has a fooling set of 17 cells.
 
The case of vertical  partitions is completed (modulo Lemmas~\ref{l3} and~\ref{l5}). 
Theorems~\ref{th-main} and~\ref{th3} are proved.

\section{Open questions}
1. Is there a partial function $f$ on words
of length $n$ over a fixed alphabet with a super-logarithmic
gap between the local half-duplex and the local classical complexities?

2. Is there a total function with values 0,1 for which the
half-duplex complexity is strictly less than the classical complexity?

3. Is there a total function $f$ on words of length $n$ over a fixed alphabet with a super-constant
gap between the half-duplex and the classical complexity?

We believe that all our questions  answer in positive. Regarding  the third question, we think that the appropriate example is
the $n$th power of the function of Section 5 but we do not see any means to prove that.

\section{Acknowledgments}
The authors are sincerely grateful to Timur Kuptsov and anonymous referees for helpful suggestions.

%\newpage
\appendix
\section{Appendix: deferred proofs}% of lemmas for the function $U$}
\begin{proof}[Proof of Lemma~\ref{l1} for $U$]
(1) This is verified directly.
Let us perform this verification, say, for the rectangle $R_6=R_{0110}$:
in the first round, both players send bits that are lost
and have no effect on the 2--5-transcript. In the second round
Bob sends 0 (because $\phi(\Lambda)=0$). Alice then
sends 1 (since $\psi(0)=1$). Then Bob sends 1 (since $\phi(1)=1$).
Finally, in the last round Alice sends 0, since $\psi(01)=0$.
We get the 2--5-transcript $0110=6$.

(2) Let $u,v$ belong to different fooling rectangles.

\emph{Case 1.} These rectangles have different subscripts.
Then $U(u)\ne U(v)$ due to item (1). Hence $u$ and $v$ cannot be covered by a monochromatic rectangle.

\emph{Case 2.} Rectangles containing $u,v$ have the same subscript.

Assume first that $u\in R_{abcd}$ and $v\in S_{abcd}$ where $a\ne c$.
Then $u=(*\psi,\bar b\phi)$, $v=(\rr\eta,b\phi')$.
Consider the input pair $(\rr\eta,\bar b\phi)$.
First,  the 2--5-transcript $t'$ for this pair is different from $t=abcd$.
Indeed, since for this input pair in the first round Alice receives
$\bar b$ from Bob, she sends in the third
round $\bar b$, and not $b$. %Therefore $t'\ne t$. 
Second, the values of the output function $p$ on $t$ and $t'$ are different,
since otherwise in both $t,t'$ the first and third bits coincide,
%and the second and fourth bits coincide, 
which is not the case for $t$.

Assume now that $u\in R_{abcd}$ and $v\in S_{abcd}$ where  $a=c$ and $b\ne d$.
Then $u=(\bar a\psi,*\phi)$, $v=(a\psi',\rr)$.
Consider the input pair  $(\bar a\psi,\rr)$.
First, the transcript $t'$ on this pair is different from $abcd$.
Indeed, since in the first round Bob receives $\bar a$ from Alice, he sends in the second round
$\bar a$. Therefore, $t'$ begins with $\bar a$ and hence 
differs from $t=abcd$. And again the values of the output function on $t$ and $t'$ differ 
because $b\ne d$.
\end{proof}

\begin{proof}[Proof of Lemma~\ref{l2}]
We need an explicit description of edges of the graph.
\begin{lemma}\label{l2bis}
  The graph of horizontally adjacent fooling rectangles
  is a union of the following graphs
(see Fig.~\ref{pi-1}):
\begin{figure}[t]
\begin{center}
\includegraphics[scale=1]{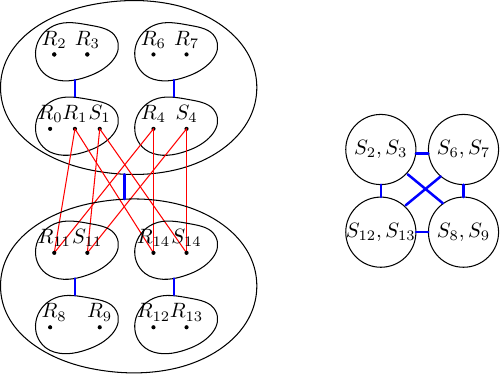}
\end{center}
\caption{The graph of horizontally adjacent fooling rectangles.
Each oval (circle) encircles vertices from one part of a bipartite or many-partite
graph. Blue lines between ovals represent edges connecting nodes from different parts of those bipartite graphs.
For instance, each blue line from the right represents 4
  edges. Red lines represent edges that are excluded from the complete 
 bipartite graph.}\label{pi-1}
\end{figure}
\begin{enumerate}
\item the complete 4-partite graph whose first part is  $\{S_2,S_3\}$, the second part is $\{S_{6},S_{7}\}$, the third part is $\{S_{8},S_{9}\}$
  and the fourth part is 
  $\{S_{12},S_{13}\}$.
\item the complete bipartite graph whose first part is $\{R_0,R_1,S_1\}$  and second part is
$\{R_2,R_3\}$
\item the complete bipartite graph whose first part is $\{R_4,S_4\}$ and second part is $\{R_6,R_7\}$
\item  the complete bipartite graph whose first part is $\{R_{11},S_{11}\}$ and second part is  $\{R_8,R_9\}$
\item  the complete bipartite graph whose first part is $\{R_{14},S_{14}\}$
and second part is $\{R_{12},R_{13}\}$
\item the complete bipartite graph whose first part is $\{R_0,R_1,R_2,R_3,R_4,R_{6},R_{7},S_1,S_4\}$
and second part is $\{R_{8},R_{9},R_{11},R_{12},R_{13},R_{14},S_{11},S_{14}\}$, except all edges of the complete bipartite graph whose first part is
$\{R_1,R_4,S_{11},S_{14}\}$ and second part is $\{R_{11},R_{14},S_1,S_4\}$. 
  \end{enumerate}
  \end{lemma}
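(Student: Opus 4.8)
The plan is to read off, straight from the definitions, the first projection --- the set of Alice-inputs --- of each of the $25$ fooling rectangles, and then decide for each unordered pair whether these two projections share a row. It is convenient to sort the rectangles into three families according to the syntactic shape of the Alice-inputs they contain. The rectangles $S_2,S_3,S_6,S_7,S_8,S_9,S_{12},S_{13}$ consist of input pairs whose first coordinate has the form $\rr\eta$; every other fooling rectangle (all the remaining $R$'s together with $S_1,S_4,S_{11},S_{14}$) consists of input pairs whose first coordinate has the form $b\psi$ with $b\in\{0,1\}$. Since an input of the form $\rr\eta$ is never of the form $b\psi$, the first family has no edges to the other two, so the three families can be analysed separately.

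For the ``receiving'' family the definition of $S_{abcd}$ with $a\ne c$ shows that $S_i$ constrains exactly one value of $\eta$, and the four pairs $\{S_2,S_3\}$, $\{S_6,S_7\}$, $\{S_8,S_9\}$, $\{S_{12},S_{13}\}$ constrain the four distinct arguments that make up the whole domain of $\eta$. Two of these rectangles have intersecting row-sets precisely when their constraints are jointly satisfiable, that is, when they bind different arguments of $\eta$: inside a pair the two rectangles demand opposite bits for one and the same argument (so their rows are disjoint), while across pairs the constraints live on independent coordinates. This is exactly the complete $4$-partite graph of item~1.

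For the ``sending'' rectangles the key remark is that an input $b\psi$ lies in the first projection of such a rectangle iff $b$ meets its first-round bit constraint (the rectangle either pins $b$ to a fixed value or leaves it free) and $\psi$ meets the two constraints on Alice's later moves. Hence the row-sets of two sending rectangles meet iff (i) their first-round bit constraints are compatible (at least one is free, or both demand the same bit) and (ii) their $\psi$-constraints do not clash on any common argument of $\psi$. Now every sending rectangle constrains either $\psi(0)$ and one of $\psi(00),\psi(01)$ --- call these the \emph{left} rectangles $R_0,R_1,R_2,R_3,R_4,R_6,R_7,S_1,S_4$ --- or $\psi(1)$ and one of $\psi(10),\psi(11)$ --- the \emph{right} rectangles $R_8,R_9,R_{11},R_{12},R_{13},R_{14},S_{11},S_{14}$. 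Inside the left family I would split further by the bit forced on $\psi(0)$: the value $0$ is forced by $R_0,R_1,R_2,R_3,S_1$ and the value $1$ by $R_4,R_6,R_7,S_4$, so no edge crosses this split, and inside each half condition~(ii) together with incompatible first-round bits kills all edges except those of the bipartite patterns of items~2 and~3. The right family splits the same way by the bit forced on $\psi(1)$ and yields items~4 and~5.

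It remains to treat edges joining a left rectangle to a right one. There the two $\psi$-constraints live on disjoint arguments, so (ii) holds automatically and adjacency is governed solely by (i). Among the left rectangles only $R_1,R_4$ pin the first-round bit to $1$ and only $S_1,S_4$ pin it to $0$; among the right rectangles only $S_{11},S_{14}$ pin it to $1$ and only $R_{11},R_{14}$ pin it to $0$. Consequently every left--right pair is adjacent except those lying in $\{R_1,R_4\}\times\{R_{11},R_{14}\}$ or in $\{S_1,S_4\}\times\{S_{11},S_{14}\}$, which is exactly the complete bipartite subgraph deleted in item~6. Putting the three families back together shows that the horizontal adjacency graph is the union of the six graphs listed. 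The argument is a finite verification; the only real effort is the bookkeeping, and the single point that needs care is checking that the exceptional non-edges of item~6 are precisely these eight clashing-bit pairs and no others.
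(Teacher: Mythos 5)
Your proof is correct and follows essentially the same route as the paper's: read off the first projections of the fooling rectangles from their definitions and decide adjacency by checking whether the constraints on Alice's first symbol and on the relevant arguments of $\psi$ (or $\eta$) are jointly satisfiable. You additionally verify both inclusions (that the listed edges and non-edges are exactly right), whereas the paper notes that only the presence of the listed edges is needed for the application to Lemma~\ref{l2}; this is a harmless strengthening.
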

\begin{proof}%[Proof of the Lemma~\ref{l2bis} for the matrix $U$]
Actually, 
%To prove the lemma, 
we only need to show that all listed above edges are indeed present in the graph.
For the function $M$ this can be verified by contemplating Fig.~\ref{pic7}.
For the function $U$ this is proved as follows. 

%e lemma is proved later. %in Appendix.
Recall that two rectangles are connected by an edge if their first projections intersect.
We will call the second character of a word from the first projection of a rectangle its 
\emph{$\psi$- or $\eta$-projection}.

In the first item, we listed all the fooling rectangles whose first projections
contains words of the form $\rr*$. Their first projections
have the form $\{\rr\eta\mid\eta(bac)=d\}$. Such a set does not intersect
another  set $\{\rr\eta\mid\eta(\tb\ta\tc)=\td\}$  of this form only if
$a=\ta,b=\tb,c=\tc$, but $d\ne\td$. That is,
among the rectangles $S_2,S_3,S_6,S_7,S_8,S_9,S_{12},S_{13}$
there are no edges only between pairs
$(S_{0010},S_{0011})$, $(S_{0110},S_{0111})$,
$(S_{1000},S_{1001})$, $(S_{1100},S_{1101})$.

The first projections of all the other fooling rectangles
contain words of the form $*\psi$. Their first character,
except for rectangles $R_1,R_4,R_{11},R_{14},S_1,S_4,S_{11},S_{14}$,
can be arbitrary. For these exceptional rectangles, the first character is 1 for rectangles
$R_1,R_4,S_{11},S_{14}$ and is 0 for rectangles
  $R_{11},R_{14},S_1,S_4$. This explains the exclusion of edges in the sixth item.
It is not hard to verify that,
in items  2--5, no edges are excluded  for this reason: in these
items, in one of the parts, there are no rectangles from the list
$R_1,R_4,R_{11},R_{14},S_1,S_4,S_{11},S_{14}$. It remains to verify that in items 2--6
$\psi$- or $\eta$-projections intersect.

The edges in items 2--5 connect
the rectangles
$R_{abad}$ and $S_{abad}$
with rectangles of the form $R_{ab\bar a*}$.
We need to prove that $\psi$-projections of rectangles
$R_{abad}$ and $S_{abad}$ intersect with $\psi$-projections of rectangles of the form $R_{ab\bar a*}$.
Note that $\psi$-projections of $R_{abad}$ and $S_{abad}$ coincide and are equal to
$$
A=\{\psi \mid \psi(a)=b, \psi(aa)=d\}.
$$
On the other hand, the $\psi$-projection of  $R_{ab\bar ae}$ is equal to 
$$
C=C_e=\{\psi \mid \psi(a)=b, \psi(a\bar a)=e\}.
$$
Therefore the intersection
$A\cap C$ includes all $\psi$, with $\psi(a)=b,\psi(aa)=d, \psi(a\bar a)=e$. It is easy to see that
these requirements are compatible.

In the sixth item, the rectangles in the first part have the form
$R_{0bcd}$, $S_{0b0\bar b}$,
and in the second part rectangles have the same form, only 0 and 1 are swapped:
$R_{1b'c' d'}$, $S_{1b'1\bar{b'}}$.
Their $\psi$-projections are equal, respectively, to
\begin{align*}
\{\psi\mid \psi(0)=b,\psi(0c)=d\},\\
\{\psi\mid \psi(0)=b,\psi(00)=\bar b\},\\
\{\psi\mid \psi(1)=b',\psi(1c')=d'\},\\
\{\psi\mid \psi(1)=b',\psi(11)=\bar{b'}\}.
\end{align*}
Obviously, the first and second sets intersect with the third and fourth sets.
\end{proof}

Now we are able to prove Lemma~\ref{l2} by considering several 
cases.
The following sets of 9 vertices have
minimal number of neighbors
(17):
\begin{itemize}
\item The set of nodes in the top left oval. All 
  the vertices from both left ovals
  are their neighbors,  17 vertices in total.
\item The set of nodes from four circles on the right (8 nodes) plus
  any node from the top left oval which is incident to a red edge (e.g. $R_1$).
The extra node has 9 neighbors thus we get 17 nodes in total.
\end{itemize}
Let us show that this are
indeed the worst case. Assume that a set $A$ of vertices
contains $k$ vertices from the left component of connectivity and $l$ vertices from the right component, $k+l=9$.
Since there are only 8 vertices in the right component, we know that $k\ge1$.
We distinguish  three cases: $l=0$,  $l\ge 2$ and $l=1$.

\emph{Case $k=9,l=0$.} It is not difficult to verify that any vertex $u$ from the left component 
has at least 9 neighbors. So the number of non-neighbors of $u$ in the left component does not exceed
$17-9=8$. Therefore, $u$ is a neighbor of any set of $9$ vertices from
the left component. Since this is true for any vertex $u$, every node from the 17 nodes of the left component
is a neighbor of each  set of 9 vertices from the left component,
and we are done.

\emph{Case $l\ge 2$.}
Then one left vertex of $A$ has 9 or more neighbors in  the left component,
and the two right vertices of $A$ have 8 neighbors from the right component
(if these vertices are from the same circle, then they are neighbors of themselves,
otherwise they are neighbors of each other, and the remaining 6 vertices
will be their neighbors, since they belong to another part 
comparing to one of these two vertices).
In total, we get 17 neighbors.

\emph{Case $k=8,l=1$}. One $A$'s vertex on the right has 7 neighbors and it is enough to verify that
any set $B$ of 8 vertices of the left component has at least 10 neighbors.
Note that there are only four vertices in the left component that have 9 neighbors,
all the other have at least ten neighbors. Therefore, $B$ contains a vertex
with 10 neighbors.
\end{proof}

\begin{proof}[Proof of the Lemma~\ref{l3} for the function $U$]
(1) For rectangles $R_i$, this is verified directly.
For the rectangle $S_0$:
let the pair $(\rr\eta, i\phi)$ belong to $S_0$.
Let  $j$ denote $\phi(\Lambda)=\phi(i)$.
In the first round, Alice receives $i$ from Bob.
Therefore, Alice sends $i$ in the third 
round.
Thus, the transcript in 2--4 rounds looks like this: $jij$.
Finally, in the fifth round, Alice sends $i$ again, since
the bits she received in the second and fourth rounds coincide.
We get the transcript $jiji$ hence  $U(\rr\eta ,i\phi)=0$.

(2) Let $u,v$ belong to different fooling rectangles.
First, let us assume that both rectangles have the form $R_{abab}$. Say the first one
has the subscript $abab$,
and the second the subscript $a'b'a'b'$,
Let $u=(*\psi,\bar b\phi)$, $v=(*\psi',\bar{b'}\phi')$.
Then we know that
\begin{align}
\psi(a)=\psi(aa)=b \label{eq3}\\
\psi(\bar a)\ne \psi(\bar a\bar a) \label{eq4}\\
\phi'(\Lambda)=\phi'(b')=a' \label{eq5}\\
\phi'(\bar{b'})\ne a'. \label{eq6}
\end{align}

Consider the input pair $(*\psi,\bar{b'}\phi')$. We claim that $U(*\psi,\bar{b'}\phi')\ne 0$.
For the sake of contradiction, assume that the 2--5-transcript on this input pair
is a multiple of 5, that is, it has the form $\alpha\beta\alpha\beta$.
Then we additionally know that
\begin{align}
\psi(\alpha)=\psi(\alpha\alpha)=\beta \label{eq7}\\
\phi'(\Lambda)=\phi'(\beta)=\alpha. \label{eq8}
\end{align}
We claim that from~\eqref{eq3}--\eqref{eq8} it follows that $a=a'=\alpha$
and $b=b'=\beta$, and therefore the rectangles $R_{abab},R_{a'b'a'b'}$ coincide.
From~\eqref{eq7} and~\eqref{eq4} it follows that $\alpha\ne\bar a$, that is,
$\alpha=a$.
On the other hand, it follows from~\eqref{eq5} and~\eqref{eq8} that
$\alpha=\phi'(\Lambda)=a'$.
From~\eqref{eq8}
it follows that $\phi'(\beta)=\alpha=a$. Now from~\eqref{eq6} and~\eqref{eq8}
it follows that $\phi'(\bar{b'})\ne a'=\alpha=\phi'(\beta)$,
which means that $\bar{b'}\ne\beta$, that is, $\beta=b'$. Finally, from~\eqref{eq7} and~\eqref{eq3}
we conclude that $\beta=\psi(\alpha)=\psi(a)=b$.

It remains to consider the case when one rectangle is $S_0$,
and the other one  has the form $R_{abab}$. In this case we have
$u=(\rr\eta,i\phi)$, $v=(a'\psi',\bar{b'}\phi')$
and we know the equalities~\eqref{eq5} and~\eqref{eq6}.
We want to prove that $U(\rr\eta,\bar{b'}\phi')\ne0$.
For the sake of contradiction, assume that
the 2--5-transcript for  this input pair
has the form $\alpha\beta\alpha\beta$.
Then additionally we know ~\eqref{eq8},
and besides $\beta=\bar{b'}$, since Alice sends the bit
$\bar{b'}$ received from Bob in the third round.
From~\eqref{eq8} we derive that $\alpha=\phi'(\beta)=\phi'( \bar{b'})$.
And on the other hand, from~\eqref{eq5} and~\eqref{eq8}
it follows that $\alpha=\phi'(\Lambda)=a'$. Therefore,
$a'=\phi'( \bar{b'})$, which contradicts~\eqref{eq6}.
\end{proof}

\begin{proof}[Proof of Lemma~\ref{l5}]
Again, we need a more explicit description of edges of the graph.
\begin{lemma}\label{l4}
The vertical adjacency graph is equal to the graph
shown in Fig.~\ref{pi2} (for both functions $U,M$).
\end{lemma}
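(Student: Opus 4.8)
The plan is to reduce Lemma~\ref{l4} to a finite, purely syntactic check. Two fooling rectangles are vertically adjacent exactly when their second projections (the sets of Bob's inputs they contain) share an element, so I would start by writing down, for each of the $29$ fooling rectangles, its second projection, and then decide pairwise intersection. First I would dispose of $S_1,S_4,S_{11},S_{14}$: by the definition of $S_{aba\bar b}$ each of these has second projection exactly $\{\rr\}$, whereas the second projection of every other fooling rectangle consists solely of two-symbol inputs $i\phi$ with $i\in\{0,1\}$. Hence those four are pairwise adjacent and isolated from the rest, which already accounts for one connected component of the graph in Fig.~\ref{pi2}, and $25$ vertices remain.

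For the remaining rectangles the key observation is structural: each second projection is cut out by a constraint on the leading symbol $i$ (either $i$ unconstrained, or $i=0$, or $i=1$; and for $S_0$ a constraint coupling $i$ with $\phi$) together with, in every case except $S_0$, a conjunction of equations $\phi(z)=b$ over $z\in\{\Lambda,0,1\}$. Since $\phi$ ranges over all functions on the three-point set $\{\Lambda,0,1\}$, a conjunction of such equations is satisfiable iff it is consistent, and the $\phi$-constraints of two rectangles are jointly satisfiable iff they agree on every coordinate that both of them fix. Consequently two such rectangles are vertically adjacent iff their leading-symbol constraints are compatible and their $\phi$-constraints coincide on the shared coordinates --- a one-line check. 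The single exception $S_0$, whose second projection is $\{0\phi\mid\phi(\Lambda)=\phi(0)\}\cup\{1\phi\mid\phi(\Lambda)=\phi(1)\}$, I would handle by splitting on its leading symbol and applying the analogous consistency test to each half. I would also exploit that many rectangles share the identical second projection (for instance $R_2,R_3$; $R_6,R_7$; $R_8,R_9$; $R_{12},R_{13}$; $S_2,S_3$; $S_6,S_7$; $S_8,S_9$; $S_{12},S_{13}$), so that only about a dozen distinct patterns have to be compared, conveniently grouped by leading symbol.

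Running through these comparisons --- grouping the patterns by leading symbol and comparing $\phi$-constraints inside and across the groups --- produces the edge set, which I would then verify equals the graph drawn in Fig.~\ref{pi2}; this step also makes concrete how the newly added cells $R_5,R_{10},R_{15}$ and $S_0$ fuse what were, before their addition, the three separate components. Finally, for the function $M$ the second projections are just the restrictions of the above sets to the $15$ columns of $M$, so $M$'s vertical adjacency graph is a subgraph of $U$'s; since every edge produced for $U$ is witnessed by a column that survives in $M$ (this is exactly how the column set of $M$ was selected), the two graphs coincide, and for $M$ one may alternatively read all adjacencies directly off the colours in Fig.~\ref{pic8}. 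The main obstacle is nothing conceptual but the sheer bulk of the pairwise verification; the reduction to consistency of affine constraints on a three-element domain, together with the coincidences among second projections, is what keeps the case analysis tractable.
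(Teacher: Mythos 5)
Your proposal is correct and follows essentially the same route as the paper: list the second projections of the fooling rectangles, peel off $S_1,S_4,S_{11},S_{14}$ (projection $\{\rr\}$) and treat $S_0$ separately, and reduce adjacency of the rest to compatibility of the leading-symbol constraints together with consistency of the partial constraints on $\phi$ over $\{\Lambda,0,1\}$, exploiting that several rectangles have identical second projections; for $M$ the paper likewise falls back on direct inspection of the matrix.
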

\begin{figure}[ht]
\begin{center}
\includegraphics[scale=1]{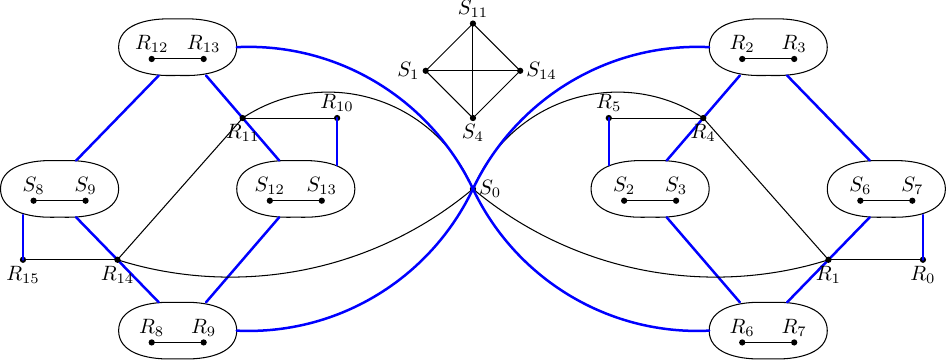}
\end{center}
\caption{The graph of vertically adjacent fooling rectangles.
Blue lines represent multiple edges.}\label{pi2}
\end{figure}

\begin{proof}%[Proof of the Lemma~\ref{l4} for the function $U$]
This lemma for $M$ is verified directly by looking at the matrix of the function. For $U$
the proof is as follows. %has been transferred to the Appendix.

Let us write out the second projections of fooling rectangles:
\newcommand{\pr}{\mathrm{pr}}
\begin{align*}
\pr_2 R_{ab\bar a d}&= \{\bar b\phi\mid \phi(\Lambda)=a,\phi(b)=\bar a\}\\
\pr_2 R_{aba\bar b}&=\{*\phi\mid \phi(\Lambda)=a, \phi(b)=a\}\\
\pr_2R_{abab}&= \{\bar b\phi\mid \phi(\Lambda)=a,\phi(b)=a,\phi(\bar b)=\bar a\}\\
\pr_2S_{ab\bar ad}&=\{b\phi\mid \phi(\Lambda)=a,\phi(b)=\bar a\}\\
\pr_2S_{aba\bar b}&= \{\rr\}\\
\pr_2S_{0}&=\{i\phi\mid\phi(\Lambda)=\phi(i)\}.
\end{align*}
To prove the lemma, we will make
several simple observations, immediately following from the definitions:
\begin{enumerate}
\item The graph splits into two connected components:
$\{S_1, S_4,S_{11},S_{14}\}$ (their second projection contains only $\rr$)
and all other vertices. %There are no edges between these components.
The first component is a complete graph on vertices $\{S_1, S_4,S_{11},S_{14}\}$.
\item Rectangle $S_0$ is adjacent to all vertices $R_i$, except $R_0,R_5,R_{10},R_{15}$.

\item If we remove the rectangle $S_{0}$, then the second component again splits
into two connected components. One of them
consists of all rectangles of the form $R_{0bcd},S_{0bcd}$,
and the other one of all rectangles of the form $R_{1bcd},S_{1bcd}$. We will call these components
\emph{the second} and \emph{the third} ones. The second and third components are isomorphic, they are obtained from each
other by swapping zero and one. Therefore, it is sufficient to describe only the second component.

\item Second projections of rectangles
$R_{ab\bar a0},R_{ab\bar a1}$ coincide. Therefore
these rectangles are adjacent. Besides,  from them the edges lead to the same vertices.
The same applies to pairs of the form $S_{ab\bar a0},S_{ab\bar a1}$
In particular, the second component contains the edges $(R_0,R_1)$, $(R_2,R_3)$,
$(S_2,S_3)$, $(R_4,R_5)$, $(R_6,R_7)$,
$(S_6,S_7)$.

\item Finally, the second component contains
all the edges of the complete bipartite graph with the first part $R_0,R_1,R_2,R_3,S_2,S_3$
and the second part  
$R_4,R_5,R_6,R_7,S_6,S_7$ with the exception of the edges $(R_0,R_4)$, $(R_0,R_5)$, $(R_1,R_5)$
and all edges of a complete bipartite graph with the first part
$R_0,R_2,R_3,S_6,S_7$ and the second part $R_5,R_6,R_7,S_2,S_3$.

In the first part, we listed  all rectangles  $R_{abcd}$ from the second component
with $a=b=0$. In the second part, those with $a=0 $ and $b=1$.
Let us verify first
that $\phi$-projections of all the listed pairs 
intersect, that is, the corresponding conditions on $\phi$ are compatible.
Let $R(S)_{00cd}$ and $R(S)_{01c'd'}$ be rectangles 
from the second component. Then
requirements for $\phi(\Lambda)$ in $\phi$-the projections for both rectangles
coincide. And because  the  second bits in $00cd$ and $01c'd'$ are
different, the conditions on $\phi(b)$ are compatible.
Finally, if the first of these rectangles is $R_{0000}$ or the second rectangle is $R_{0101}$ (or both),
then there is an additional requirement $\phi(\bar b)=\bar a$.
These conditions may be incompatible 
for rectangles  $R(S)_{00cd}$ and $R(S)_{01c'd'}$. It is not difficult to verify that this happens only for
the pairs $(R_0,R_4)$, $(R_0,R_5)$, $(R_1,R_5)$. For this reason, these edges were excluded.

It remains to verify that the first symbols can also be the same.
For rectangles of the form $R_{aba\bar b}$, the first characters can be arbitrary.
For rectangles $R_0,R_2,R_3,S_6,S_7$ the first symbol is equal to 1,
and for rectangles $R_5,R_6,R_7,S_2,S_3$ the first symbol is equal to 0.
Because of this, edges between rectangles of the first and the second lists were excluded.
\end{enumerate}
\end{proof}

The graph in Fig.~\ref{pi2} consists of a complete graph on vertices $S_1,S_{11},S_4,S_{14}$
and two isomorphic graphs on 12 vertices together with the vertex $S_0$
 connecting them.
In the sequel, we will call these two parts of the 12 vertices \emph{the components} of the graph.
The smallest set of neighbors of a set of 13 vertices is attained, for example, if
we take all 12 vertices from the left component and the vertex $R_0$ from the right component,
or any 9 vertices from the left component and vertices $S_1,S_4,S_{11},S_{14}$.

Let us prove that this is indeed the minimal neighborhood of any set of 13 nodes.
To do this, we need the following auxiliary notion.
Let $A$ be an arbitrary set of vertices from one component of the graph.
Denote by $\Gamma(A)$ the set of all neighbors of $A$, and by $\Gamma'(A)$ the set of all neighbors, not counting the vertex $S_0$,
that is, the set of all neighbors inside that component.
By $\Gamma(n)$ we denote the minimum $|\Gamma(A)|$ for all $n$-element sets inside one component.
$\Gamma'(n)$ is defined in a similar way. The values of these functions are given in Table~\ref{t1},
they were found by hand. In fact, we will need the values of these functions only
for $n=1,4,6,7,9$.
\begin{table}
\begin{center}
\begin{tabular}{|c|c|c|}
\hline
$n$& $\Gamma'(n)$&$\Gamma(n)$\\
\hline
%0& 0&0\\
%\hline
1& 4&4\\
\hline
2& 5&6\\
\hline
3& 6&6\\
\hline
4& 7&8\\
\hline
5& 7&8\\
\hline
6& 9&10\\
\hline
7& 10&11\\
\hline
8& 11&12\\
\hline
9& 12&13\\
\hline
10& 12&13\\
\hline
11& 12&13\\
\hline
12& 12&13\\
\hline
\end{tabular}
\end{center}
\caption{The table of values of functions $\Gamma$ and $\Gamma'$
expressing the expansion properties of components of the vertical adjacency graph.}\label{t1}
\end{table}

\begin{lemma}\label{l7}
(a) $\Gamma'(1)\ge 4$\\
(b) $\Gamma'(4)\ge 7$\\
%, $\Gamma(4)\ge 8$\\
(c) $\Gamma'(6)\ge 9$,\\
(d) $\Gamma'(7)\ge 10$,\\
(e) $\Gamma(9)\ge 13$.
\end{lemma}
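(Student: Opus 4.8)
The plan is to reduce all five inequalities to a computation in one $12$-vertex component of the vertical adjacency graph. By item~(3) in the proof of Lemma~\ref{l4}, the two nontrivial components are isomorphic under the index exchange $0\leftrightarrow1$, so it suffices to treat the component on $\{R_0,R_1,R_2,R_3,R_4,R_5,R_6,R_7,S_2,S_3,S_6,S_7\}$, whose edges are those of items~(4)--(5) of that proof, equivalently those shown in Fig.~\ref{pi2}. Two remarks organize the work. First, for any set $A$ of vertices of the component, $\Gamma'(A)=\bigcup_{v\in A}\Gamma'(\{v\})$, so $|\Gamma'(A)|=12-k(A)$ where $k(A)$ counts the vertices $u$ whose (closed) neighbourhood $\Gamma'(\{u\})$ lies inside $\overline A$; hence the lemma consists of lower bounds on $|\Gamma'(S)|$ for small $S$, which can be bootstrapped off one another via this identity. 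Second, $\{R_2,R_3\}$, $\{S_2,S_3\}$, $\{R_6,R_7\}$, $\{S_6,S_7\}$ are twins (equal neighbourhoods), so the twelve vertices reduce to eight types; one records $\Gamma'(\{R_0\})=\{R_0,R_1,S_6,S_7\}$, $\Gamma'(\{R_5\})=\{R_4,R_5,S_2,S_3\}$, $\Gamma'(\{R_2\})=\Gamma'(\{R_3\})=\{R_2,R_3,R_4,S_6,S_7\}$, $\Gamma'(\{R_6\})=\Gamma'(\{R_7\})=\{R_1,R_6,R_7,S_2,S_3\}$, and so on.

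Part~(a) is read off Fig.~\ref{pi2}: every vertex has at least three neighbours, so $|\Gamma'(\{v\})|\ge4$. For part~(d), let $|A|=7$, so $|\overline A|=5$; a vertex $u$ counted by $k(A)$ satisfies $\Gamma'(\{u\})\subseteq\overline A$, hence $\deg(u)\le4$, and only six vertices have degree $\le4$: $R_0,R_5$ (degree $3$) and $R_2,R_3,R_6,R_7$ (degree $4$). From the recorded neighbourhoods, any two of these six span at least $7$ vertices unless they form a twin pair, and a twin pair together with a third of the six also spans at least $7$; since every triple of the six contains such a pair, no three of them have $|\bigcup_i\Gamma'(\{u_i\})|\le5$. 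Thus $k(A)\le2$ and $|\Gamma'(A)|\ge10$. Part~(e): by~(a) every vertex of the component has at least three neighbours inside the component, so if $|A|=9$ the three vertices outside $A$ each retain a neighbour in $A$, whence $\Gamma'(A)$ is all $12$ vertices; moreover $A$ omits only three vertices, so it contains at least three of $R_1,R_2,R_3,R_4,R_6,R_7$, each adjacent to $S_0$ (item~(2) of the proof of Lemma~\ref{l4}), hence $S_0\in\Gamma(A)$ and $|\Gamma(A)|\ge13$.

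For (b) $\Gamma'(4)\ge7$ and (c) $\Gamma'(6)\ge9$: the identity $|\Gamma'(A)|=12-k(A)$ makes them equivalent. If some $6$-set $A$ violated (c), then $k(A)\ge4$, so four of the vertices counted by $k(A)$ form a $4$-set $S$ with $\Gamma'(S)\subseteq\overline A$, i.e.\ $|\Gamma'(S)|\le8$, contradicting (b); symmetrically a $4$-set violating (b) yields a $6$-set $S$ with $|\Gamma'(S)|\le8$, contradicting (c). (The same device applied to $7$-sets shows (d) is equivalent to $\Gamma'(3)\ge6$, already secured above.) So it suffices to verify one of them directly, say $\Gamma'(4)\ge7$: over $4$-element unions of the eight types the minimum of $|\Gamma'(S)|$ is attained at a one-sided, tightly clustered set such as $\{R_2,R_3,S_6,S_7\}$, with neighbourhood $\{R_0,R_1,R_2,R_3,R_4,S_6,S_7\}$ of size exactly $7$, and every other $4$-set is at least as large.

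The main obstacle is this last direct verification: it is the place where one must keep exact track of the three exceptional non-edges $(R_0,R_4)$, $(R_0,R_5)$, $(R_1,R_5)$ and of the forbidden bipartite block isolated in item~(5) of the proof of Lemma~\ref{l4}, and a slip in that bookkeeping would contaminate the whole table. Collapsing the graph to its eight-type quotient keeps the check feasible by hand --- this is how Table~\ref{t1} was obtained --- but it is the step demanding care; the rest is the counting identity $|\Gamma'(A)|=12-k(A)$ plus inspection of Fig.~\ref{pi2}.
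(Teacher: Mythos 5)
Your architecture is sound and, for most parts, close to the paper's: the identity $|\Gamma'(A)|=12-k(A)$ is exactly our complementation trick (stated in the paper as ``$\Gamma'(m)\ge n$ implies $\Gamma'(13-n)\ge 13-m$''), and your treatments of (a), (c) and (e) match ours up to phrasing. The genuinely different piece is (d): you argue directly on $7$-sets by noting that a vertex missed by $\Gamma'(A)$ must have closed neighbourhood inside the $5$-element complement, hence degree at most $4$; only $R_0,R_5$ (degree $3$) and $R_2,R_3,R_6,R_7$ (degree $4$) qualify, and no three of their closed neighbourhoods fit into five vertices because every triple of them contains a non-twin pair, which already spans at least $7$. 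That is correct and arguably cleaner than the paper's route, which complements down to $\Gamma'(3)\ge6$ and case-analyses $3$-sets. Your twin reduction to eight vertex types is also a useful organizing device the paper leaves implicit.

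The one place the proposal falls short of a proof is (b), on which your (c) rests. You assert that the minimum over $4$-sets is attained at $\{R_2,R_3,S_6,S_7\}$ (correct: its neighbourhood is $\{R_0,R_1,R_2,R_3,R_4,S_6,S_7\}$) and that ``every other $4$-set is at least as large,'' but that is precisely the finite verification to be performed, and you flag it without doing it. The paper settles it by a short case split: if $A$ meets no oval then $A=\{R_0,R_1,R_4,R_5\}$ with $12$ neighbours; otherwise one may assume $A$ contains a whole twin pair, reducing by symmetry to $R_2,R_3\in A$ or $S_2,S_3\in A$, each finished by checking how the two remaining vertices enlarge the neighbourhood. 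Your eight-type quotient would make an equivalent check short, but as written the step is claimed, not done. There is also a numerical slip in your (b)$\Leftrightarrow$(c) reduction: a $6$-set $A$ violating (c) yields a $4$-set $S$ with $\Gamma'(S)\subseteq\overline A$, hence $|\Gamma'(S)|\le|\overline A|=6$, not $\le 8$; with $6$ the contradiction with $\Gamma'(4)\ge7$ goes through, with $8$ it would not.
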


Let us finish the proof of Lemma~\ref{l5} assuming Lemma~\ref{l7}.
Let a set $A$ consist of 13 vertices of the adjacency graph. We need to prove
that $A$ has at least 17 neighbors. Let us consider two cases.

\emph{Case 1:} the set $A$ contains the vertex $S_0$.
First, assume that the set $A$ contains at least one of the vertices $S_1,S_4,S_{11},S_{14}$.
Then there must be at least $13-1-4=8$ vertices in the left and right components together.
Therefore, one of the components must contain at least $8/2=4$ vertices.
Thus the number of neighbors of $A$ is not less than $5+\Gamma'(4)+6\ge 5+7+6 =18$
neighbors (vertices $S_0,S_1,S_4,S_{11},S_{14}$
plus all neighbors inside the component containing 4 vertices, plus the 
neighbors of vertex $S_0$ inside the other component).

Otherwise the set $A$ does not contain any of the vertices $S_1,S_4,S_{11},S_{14}$.
Then,  there should be at least $13-1=12$ vertices in the left and right components in total.
If there are 6 vertices in both components, then $A$ has at
least $1+2\Gamma'(6)\ge 1+2\cdot9=19$ neighbors.
Otherwise, one of the components has at least $12-5=7$ vertices, and the number of neighbors is not less
$1+\Gamma'(7)+6\ge 1+10+6 =17$ (the vertex $S_0$
plus all the neighbors inside the component containing 7 vertices, plus
the neighbors of $S_0$ inside the other component).

\emph{Case 2: } $S_0\notin A$.
Assume first that $A$ has at least one of the vertices $S_1,S_4,S_{11},S_{14}$.
In total, both components have at least $13-4=9$ vertices from the set $A$.
If all these vertices are in one component,
then the total number of $A$'s neighbors is at least
$\Gamma(9)+4\ge 13+4=17$ (neighbors of these 9 vertices plus vertices $S_1,S_4,S_{11},S_{14}$).
If each component has at least one vertex from $A$
and one of the components has at least 6 vertices from $A$, then the number of $A$'s
neighbors is at least $\Gamma(6)+\Gamma'(1)+4\ge\Gamma'(6)+\Gamma'(1)+4\ge 9+4+4 =17$.
Otherwise, there are 5 vertices in one component, and 4 in the other,
and the number of $A$'s neighbors is at least 
$\Gamma(5)+\Gamma'(4)+4\ge \Gamma'(4)+\Gamma'(4)+4\ge7+7+4=18$.

It remains to consider the case when $S_0,S_1,S_4,S_{11},S_{14}\notin A$.
Since there are only 12 vertices
in each component, each component must have at least one vertex from $A$.
If one of them has at least $9$ vertices from $A$, then we get at least
$\Gamma(9)+\Gamma'(1)\ge13+4=17$ neighbors. Otherwise, the division of vertices between components is 
$5+8$ or $6+7$. In both cases
the number of neighbors is not less than $\Gamma(5)+\Gamma'(7)\ge\Gamma'(4)+\Gamma'(7)\ge7+10=17$.
Lemma~\ref{l5} is proved (modulo Lemma~\ref{l7}).
\end{proof}

\begin{proof}[Proof of Lemma~\ref{l7}]
(a) This can be directly verified by hand.

(b) Assume that there is a set $A$ of 4 vertices of the right component that has less than 7 neighbors inside
  that component. We want to derive a contradiction.

Let us first assume that $A$ does not contain any vertices from the ovals.
Then $A=\{R_0,R_1,R_4,R_5\}$ and has $12\ge7$ neighbors inside the component.
So $A$ must contain a vertex from one of the ovals
and without loss of generality another vertex from the same oval
(indeed, adding this other vertex to $A$ does not increase the number of neighbors,
so it can replace any other vertex in $A$).
Identifying symmetric variants, we can consider only two cases:
$R_2,R_3\in A$ and $S_2,S_3\in A$.

\emph{First case:} $R_2,R_3\in A$. These two vertices together
already have 5 neighbors $R_2,R_3,R_4, S_6,S_7$.
It is easy to verify that adding
to $R_2,R_3$ any other vertex increases the number of neighbors by at least 2.

\emph{The second case:} $S_2,S_3\in A $. These two vertices together
already have neighbors $S_2,S_3,R_4, R_5,R_6,R_7$. It is easy to see that adding to $S_2,S_3$ any vertex, except $R_5$, increases the number of neighbors. Since we need
to add two vertices, we get a contradiction.

(c) To reduce the number of cases, we will consider  the complements. Let us state this technique in general.
We claim that the inequality $\Gamma'(m)\ge n$ implies the inequality
$\Gamma'(13-n)\ge 13-m$. Indeed, assume that $\Gamma'(13-n)< 13-m$.
That is, there is a set $A$ of cardinality $13-n$ having at most $12-m$
neighbors. Then consider the set $B$ consisting of non-neighbors $A$, $|B|\ge 12-(12-m)=m$.
Then all the neighbors of $B$ lie in the complement of $A$, so $|\Gamma'(B)|\le 12-(13-n)=n-1$.
By removing $|B|-m$ vertices from $B$, we get a set of $B'$ of cardinality exactly $m$
with $|\Gamma'(B')|\le n-1$.

It is worth to apply this technique when $13-n$ is closer to 6 than $m$,
because in this case the number of $(13-n)$-element subsets
of a component is less than the number of its $m$-element subsets. For example,
the statement (c) follows from statement (b), and we do not have to look over all $6$-element
subsets of a component.

(d) Using the same technique, it suffices to prove the inequality  $\Gamma'(3)\ge 6$.
Assume that there is a set $A$ of 3 vertices of a component that has less than 6 neighbors inside
that component.
We want to derive a contradiction.

First, assume that $A$ does not contain any vertices from the ovals.
Identifying symmetric variants, there are only two such sets:
$\{R_0,R_1,R_4\}$ and $\{R_0,R_1,R_5\}$. In the first case, all vertices of the components
are neighbors, and in the second case all vertices except $R_2,R_3$.

Now assume that $A$ contains a vertex from an oval. Without loss of generality, we can assume
that the other vertex of the oval also lies in $A$.

Identifying symmetrical cases, only two cases can be considered:
$A\ni R_2,R_3$ and $A\ni S_2,S_3$.
In the first case , the vertices
$R_2,R_3$ already have in total 5 neighbors $R_2,R_3,R_4, S_6,S_7$, so the third vertex
of $A$ should be in this list and should not have
neighbors out off the list. Be a  simple search we can see that there are no such vertices.
In the second case, the vertices $S_2,S_3$ already have 6 neighbors in total.

(e) Using our technique, we can deduce from item  (a)
that $\Gamma'(9)\ge 12$. In addition, the vertex $S_0$ is
a neighbor of any set of $12-6+1=7$ vertices inside one component,
since it has 6 neighbors inside one component. Therefore, for all $n$ starting from $n=7$, we have
$\Gamma'(n)=\Gamma(n)+1$.
%Lemmas~\ref{l7}, \ref{l5} and the theorem are proved.
\end{proof}

\end{document}